\pdfoutput=1

\documentclass[11pt]{article}

\usepackage[final]{acl}

\usepackage{amsmath}
\usepackage{times}
\usepackage{amssymb}
\usepackage{latexsym}
\usepackage{multirow}
\usepackage{enumitem}
\usepackage{makecell}
\usepackage{mdframed} 
\usepackage{ragged2e} 
\usepackage{algorithm}
\usepackage{algpseudocode}


\usepackage[T1]{fontenc}

\usepackage[utf8]{inputenc}

\usepackage{microtype}

\usepackage{inconsolata}

\usepackage{graphicx}
\usepackage{booktabs} 

\usepackage{amsthm}
\usepackage[english]{babel}
\newtheorem{theorem}{Theorem}
\newtheorem{assumption}{Assumption}
\newtheorem{lemma}{Lemma}



\newcommand{\shrink}{\vspace*{-.9\baselineskip}}


\newcommand{\axiomOne}{Positively Consistent}
\newcommand{\axiomTwo}{Negatively Consistent}
\newcommand{\axiomThree}{Positively Changed}
\newcommand{\axiomFour}{Negatively Changed}
\newcommand{\axiomFive}{Neutrally Consistent}

\newcommand{\llminput}{x} 
\newcommand{\doc}{c}
\newcommand{\query}{q}
\newcommand{\response}{r}

\title{Why Uncertainty Estimation Methods Fall Short in RAG:\\ An Axiomatic Analysis}

\author{Heydar Soudani \\
  Radboud University \\
  The Netherlands \\
  \texttt{heydar.soudani@ru.nl} \\\And
  Evangelos Kanoulas \\
  University of Amsterdam \\
  The Netherlands \\
  \texttt{e.kanoulas@uva.nl} \\\And
  Faegheh Hasibi \\
  Radboud University \\
  The Netherlands \\
  \texttt{faegheh.hasibi@ru.nl} \\}

\begin{document}
\maketitle
\begin{abstract}
Large Language Models (LLMs) are valued for their strong performance across various tasks, but they also produce inaccurate or misleading outputs. Uncertainty Estimation (UE) quantifies the model's confidence and helps users assess response reliability. However, existing UE methods have not been thoroughly examined in scenarios like Retrieval-Augmented Generation (RAG), where the input prompt includes non-parametric knowledge. This paper shows that current UE methods cannot reliably estimate the correctness of LLM responses in the RAG setting. We propose an axiomatic framework to identify deficiencies in existing UE methods. Our framework introduces five constraints that an effective UE method should meet after incorporating retrieved documents into the LLM's prompt. Experimental results reveal that no existing UE method fully satisfies all the axioms, explaining their suboptimal performance in RAG. We further introduce a simple yet effective calibration function based on our framework, which not only satisfies more axioms than baseline methods but also improves the correlation between uncertainty estimates and correctness.
\end{abstract}

\section{Introduction}~\label{sec:intro}
\shrink

Large Language Models (LLMs) have recently demonstrated promising capabilities in various tasks, including question-answering, and various classification and clustering tasks ~\cite{jin:2025:searchr1, lin-2024-generate, Survey24Soudani, trivedi-2023-ircot}.
However, LLMs are prone to generating incorrect information for multiple reasons, such as lack of parametric knowledge~\cite{Mallen23popqa}, temporal knowledge shifts~\cite{Zhao24Set, kordjamshidi24spatial}, or noisy information introduced through retrieved documents in Retrieval-Augmented Generation (RAG)~\cite{Soudani24FTvsRAG, Min23FActScore}. 
As a result, the trustworthiness of LLM-generated responses has become a critical concern, directly impacting user satisfaction~\cite{Hou024Decomposing, Mahaut24Factual}.

\begin{figure}[t]
  \centering 
  \includegraphics[width=0.46\textwidth]{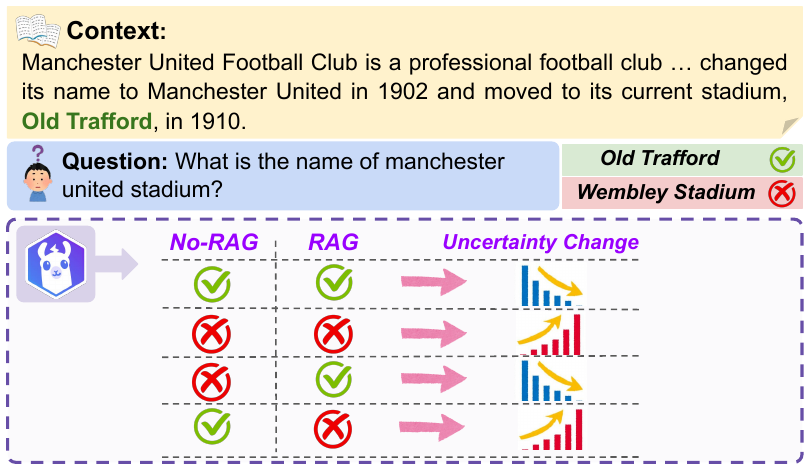}
  \caption{
 Desired behavior of uncertainty estimation methods with and without RAG.
For instance, the first row indicates that when an LLM generates a correct response both without and with RAG (i.e., the retrieved document supports the internal belief of the LLM), the uncertainty in the RAG setup should decrease compared to the no-RAG setup.
These principles form an axiomatic framework for evaluating and understanding uncertainty behavior in RAG.}
  
  \label{fig:main_fig}
  \shrink
\end{figure}

Uncertainty Estimation (UE) is a widely studied approach for assessing the reliability of LLM outputs. A UE method assigns an uncertainty score to each (input, output) pair, reflecting its truthfulness. Ideally, a perfect UE method would assign lower uncertainty to correct samples and higher uncertainty to incorrect ones~\cite{Duan24SAR}.
%
While existing UE methods mainly focus on scenarios where the input is just a query,
real-world applications like RAG involve non-parametric knowledge in more complex prompts~\cite{Huang24Calibration}. 
Research shows that non-parametric knowledge significantly influences LLM responses, often aligning them with the provided context~\cite{Cuconasu24Power, Mallen23popqa}. Despite this, it is unclear how current UE methods account for non-parametric knowledge.

In this paper, we investigate a critical question:
\textit{(\textbf{RQ1}) How do UE methods perform when the input prompt includes non-parametric knowledge, such as in RAG?} 
We study UE in the context of RAG with retrievers of varying effectiveness: (i) a deliberately weak synthetic retriever that returns irrelevant documents, (ii) an idealized retriever that consistently ranks the gold document at the top, and (iii) several widely used retrievers with varying performance levels. 
Our findings unveil that the performance of existing UE methods is inconsistent and mainly deteriorates when non-parametric knowledge is included in the input prompt.  
Most notably, improvements on the proposed UE methods in the literature do not add up when considering RAG setup.

Against this background, it is clear that UE  requires a methodological departure; existing methods are developed without paying attention to the specific properties that UE methods must satisfy in the RAG setup.
The question that arises here is: \textit{(\textbf{RQ2}) What properties can guarantee optimal performance of UE considering  LLMs' both parametric and non-parametric knowledge?}
We approach this question theoretically using axiomatic thinking, which is proven effective in various fields and tasks, including information retrieval~\cite{exploration05Fang, bondarenko2022axiomatic}, 
interpretability~\cite{Chen24Axiomatic, Parry25MechIR}, and preference modeling~\cite{Rosset23AxiomaticPreference}.
In axiomatic thinking, a set of formal constraints is defined based on desired properties, which are then used as a guide to search for an optimal solution. In this work, we define an axiomatic framework for UE and establish five axioms considering the desired behavior of a UE method with and without external knowledge.
Our axiomatic analysis reveals that current UE methods can satisfy only two axioms, violating the remaining three axioms in the majority of cases.

The axiomatic framework helps explaining deficiencies of existing UE methods for the RAG setup.
The next question is: \textit{\textbf{(RQ3)} Can the axiomatic framework guide us in deriving an optimal UE method?}  
We use the constraints of the axiomatic framework to define a calibration function based on three components. We implement three instantiations of this function and apply it to  different UE methods on a number of representative datasets. 
The results show that the derived functions are not only more stable than the existing UE methods but also improve overall performance with respect to AUROC. 
This highlights two key insights: first, satisfying the axioms leads to performance improvements, and second, existing UE methods can still be used for RAG by incorporating an axiomatically informed coefficient. 

%

\noindent
The main \textbf{contributions} of this paper include: \\
\noindent \textbf{(1)} Analyzing existing UE methods and showing their deficiencies in RAG setup. \\
\noindent \textbf{(2)}  Proposing an axiomatic framework for UE with five formalized constraints and demonstrating deficiencies of existing methods in satisfying them. \\
\noindent \textbf{(3)} Introducing a calibration function guided by axioms and showing consistent improvements of the UE methods as a result of alignment with axioms.

\section{Background}~\label{sec:background}
\shrink
 
\noindent
UE methods are typically divided into white-box approaches, which utilize token probabilities and entropy~\cite{Kadavath22PE, Kuhn23SE}, and black-box approaches, which rely solely on final outputs~\cite{Lin24ECC, Band24Linguistic}. This section reviews methods of both categories that are explored in this paper. For further details on related work, see Appendix~\ref{sec:related_work}.

\setlength{\abovedisplayskip}{0pt} 
\setlength{\belowdisplayskip}{0pt} 

\subsection{White-box Methods}

\noindent
\textbf{\textit{Predictive Entropy (PE)}} for generative models
quantifies uncertainty as the entropy of responses for an LLM input. The entropy is maximized when all outcomes are equally likely, indicating low informativeness~\cite{Kadavath22PE, Kuhn23SE}. Given an LLM parametrized by $\theta$ and an input $x$, the LLM uncertainty is estimated by computing entropy using Monte-Carlo approximation:
\begin{equation}
PE(\llminput, \theta) = -\frac{1}{B} \sum_{b=1}^B \ln P\left(\response_b \mid \llminput, \theta\right),
\label{eq:pe}
\end{equation}
where \( \response_b \) is a beam-sampled response and \( B \) is the number of samples.
The probability of generating a response \( \response = \{\response^1, \response^2, ..., \response^N\} \), comprising $N$ tokens, given the input $x$ is computed as the product of the conditional probabilities of each token, given its preceding tokens and the input \( \llminput \). For a model with parameters \( \theta \), the sequence probability is defined as:
\begin{equation}
P(\response \mid \llminput, \theta)=\prod_{n=1}^N P\left(\response^n \mid \response^{<n}, \llminput; \theta\right),
\label{eq:sequence_probability}
\end{equation}
where \( \response^{<n} \) denotes the tokens generated before \( \response^n \).

\medskip
\noindent\textbf{\textit{Semantic Entropy (SE)}}~\cite{Kuhn23SE} extends PE by incorporating the semantic meaning of sampled responses.
In this approach, generated samples are clustered into semantic clusters $c_i\in C$, and SE is defined as:
\begin{equation}
SE(\llminput, \theta)=-\frac{1}{|C|} \sum_{i=1}^{|C|} \log \tilde{P}\left(c_i \mid \llminput, \theta\right),
\label{eq:se}
\end{equation}
where \( c_i \) represents a semantic cluster, containing semantically similar responses. The cluster score $\tilde{P}(c_i|.)$ is computed as:
\[
\tilde{P}(c_i \mid \llminput, \theta)=\sum_{\response \in c_i} P(\response \mid \llminput, \theta).
\]

\noindent
\textbf{\textit{Length Normalization and Semantic Awareness}}
are two important components in UE.
It has been observed that the sequence probability in Equation~\eqref{eq:sequence_probability} is biased against longer generations~\cite{Malinin21LNPE}. To address this, a length-normalized probability is introduced to generate equal weighting of tokens and reduce bias toward shorter sequences:
\begin{equation*}
\label{eq:ln}
P_{\text{ln}}(\response \mid \llminput, \theta) = \prod_{n=1}^N P\left(\response^n \mid \response^{<n}, \llminput ; \theta\right)^{\frac{1}{N}}.
\end{equation*}

MARS~\cite{Bakman24mars} and TokenSAR~\cite{Duan24SAR} further refined this approach by incorporating semantic importance. These approaches assign weights based on each token's contribution, resulting in the meaning-aware probability:
\begin{equation*}
\label{eq:me}
P_{\text{me}}(\response \mid \llminput, \theta) = \prod_{n=1}^N P\left(\response^n \mid \response^{<n}, \llminput ; \theta\right)^{w(\response, \llminput, N, n)}
\end{equation*}
where \( w(\response, \llminput, N, n) \) is the importance weight for the \( n \)-th token. Both the length-normalized and meaning-aware probabilities can be used in the PE~\eqref{eq:pe} and SE~\eqref{eq:se} equations.

\subsection{Black-box Methods}
We examine state-of-the-art semantic similarity-based methods~\cite{Lin24ECC}, following these steps:
(i) generate \( B \) sampled responses \(\{\response_1, \dots, \response_B\}\) for a given input \( \llminput \);
(ii) compute pairwise similarity scores \( a_{i,j} = a(\response_i, \response_j) \) between the responses; and
(iii) derive uncertainty from these scores. 
Three approaches are proposed for computing uncertainty scores, described below.

\medskip
\noindent\textbf{\textit{Sum of Eigenvalues (EigV)}}~\cite{Lin24ECC}. SE groups responses into semantic equivalence subsets and uses their count (\textit{NumSet}) as an uncertainty metric; greater diversity implies higher uncertainty. To compute a more nuanced and continuous value for uncertainty than \textit{NumSet}, \citet{Lin24ECC} define uncertainty as:
\begin{equation}
    U_{\text{EigV}}(\llminput) = \sum_{k=1}^B \max \left(0,1-\lambda_k\right),
\end{equation}
where \( \lambda_1, \dots, \lambda_B \) are the eigenvalues of symmetric normalized
Graph Laplacian~\cite{Luxburg07spectral}, defined as:
\begin{equation*}
    L := I - D^{-\frac{1}{2}} W D^{-\frac{1}{2}}.
\end{equation*}
Here, $W$ represents a symmetric weighted adjacency matrix for a graph, where each node represents a response $r_i$ for input \( \llminput \) and weights are \( w_{i,j} = (a_{i,j} + a_{j,i})/2 \). The degree matrix $D$ is defined as: 
\begin{equation}
\label{eq:deg}
    D_{i,j} = \begin{cases}\sum_{j' \in [B]} w_{i,j'} & \text{if } i = j, \\ 0 & \text{if } i \neq j.\end{cases}
\end{equation}



\medskip
\noindent\textbf{\textit{Degree Matrix (Deg)}} relies on the degree matrix in Eq.~\eqref{eq:deg} to computer uncertainty. Here, the intuition is that \( D \) reflects node connectivity, and nodes with higher degrees indicate confident regions in the LLM~\cite{Lin24ECC}. Building on this, the uncertainty score is computed by:
\begin{equation*}
   U_{\text{Deg}}(\llminput)=\operatorname{trace}(BI-D) / B^2.
\end{equation*}

\medskip
\noindent\textbf{\textit{Eccentricity (ECC)}} is
defined as the average distance of response embeddings from their centroid, which can serve as an uncertainty measure. Since access to the embeddings is not possible in black-box LLMs, the embeddings are driven from graph Laplacian. Let $\mathbf{u}_1, \dots, \mathbf{u}_k \in \mathbb{R}^B$ be the $k$ smallest eigenvectors of $L$. For each response $ \response_j$, define the embedding as
$\mathbf{v}_j = [u_{1,j}, \dots, u_{k,j}]$~\cite{Ng01Advances},
and its centroid as
$\mathbf{v}_j^{\prime} = \mathbf{v}_j - \frac{1}{B} \sum_{j'=1}^B \mathbf{v}_{j'}.$
Uncertainty  is computed as:
\begin{equation*}
U_{\text{ECC}}(x) = \left\|\left[\mathbf{v}_1^{\prime \top}, \ldots, \mathbf{v}_B^{\prime \top}\right]\right\|_2.
\end{equation*}

\section{Axiomatic Framework}~\label{sec:axiom}
\shrink

\setlength{\abovedisplayskip}{0pt} 
\setlength{\belowdisplayskip}{0pt} 

The assumption of an axiomatic framework for UE is that by satisfying a set of formal constraints, a UE method would likely have an optimal correlation with correctness for both RAG and no-RAG setups. 
To define the framework, we introduce five \emph{axioms} based on a set of \emph{functions} that form our search space for an optimal UE.
These axioms, while necessary, do not represent an exhaustive list, as increasing the number of axioms can, in reality, introduce stringent, contradictory, or biased constraints. In the following, we introduce the functions and constraints of our axiomatic framework.

\subsection{Functions}
\label{sec:axiom:func}
We define UE as the task of learning a function \(\mathcal{U}\) that predicts a score \(s\), quantifying the LLM's uncertainty for its output~\cite{Liu24Uncertainty}. Formally, let $\llminput$ be the input given to a generative LLM $\mathcal{M}_{\theta}$, parameterized by $ \theta $. The uncertainty estimator function is formulated as follows:
\begin{equation*}
    \mathcal{U}: \mathcal{M}_{\theta}(x), \response \mapsto s
\end{equation*}
where the input consists of an LLM with the given input $\llminput$ and a generated response $\response$. 
In a no-RAG setting, the input $\llminput$ is only the query $\query$, while for the RAG setup, the input $\llminput$ consists of a query $\query$ and a context $\doc$, denoted as $ \mathcal{M}_{\theta}(\query, \doc) = \response$. We define context $\doc$ broadly, including an individual document or a set of documents.

Before defining the axioms, we introduce functions that formalize the relation between a context, a query, and an LLM-generated response. These functions, defined based on Natural Language Inference (NLI)~\cite{Pavlick16Most, Williams18mnli}, are as follows: 


\noindent
\textbf{Entailment ($ \doc \vDash (\query, \response) $)}: 
Given the context $\doc$, a human can infer that \(\response\) is the correct response to the query $\query$; i.e., the premise $\doc$ entails the hypothesis $(\query, \response)$ (asymmetric relation).

\noindent
\textbf{Contradiction ($ \doc \bot (\query, \response)$)}: 
Given the context $\doc$, a human can infer that $\response$ is an incorrect response to $\query$; i.e., the premise $\doc$ contradicts the hypothesis $(\query, \response)$ and vice versa (symmetric relation).

\noindent
\textbf{Independence ($ \doc \# (\query, \response)$)}: 
Given the context $\doc$, a human cannot infer any information about the correctness of response $\response$ to query $\query$; i.e., the premise $\doc$ does not guarantee the truth or falsity of hypothesis $(\query, \response)$ and vice versa (symmetric relation).

\noindent
\textbf{Equivalence (\( \response_1 \equiv \response_2 \))}:
Two LLM responses, $\response_1$ and $\response_2$, convey the same meaning; i.e., the premise $\response_1$ entails the hypothesis $\response_2$ and vice versa (symmetric relation).

\setlength{\abovedisplayskip}{2pt} 
\setlength{\belowdisplayskip}{2pt} 

\subsection{Axioms}
The axioms are defined based on two key assumptions to ensure the validity of axioms and the four aforementioned functions:




\begin{assumption}~\label{as:1}
The context $\doc$ is trustworthy and contains factually correct information.
\end{assumption}

\begin{assumption}~\label{as:2}
The context $\doc$, given to the LLM for the query $\query$, does not contain contradictory information about the query $\query$.
\end{assumption}

We now define five constraints that any reasonable UE method should satisfy, considering LLM's both parametric and non-parametric knowledge. Our working hypothesis is that UE is a proxy for the correctness of the model~\cite{Bakman24mars}. Two of these constraints are proven based on this hypothesis, and three of them are intuitively driven.

\begin{theorem}[\axiomOne]
$\forall {\query, \doc}$ if $ \mathcal{M}_{\theta}(\query) =$ $\response_1$, $ \mathcal{M}_{\theta}(\query, \doc) = \response_2$, $\response_1 \equiv \response_2$,  $\doc \vDash (q, \response_2$), then~~
$\mathcal{U}(\mathcal{M}_{\theta}(\query), \response_1) > \mathcal{U}(\mathcal{M}_{\theta}(\query, \doc), \response_2).$
\end{theorem}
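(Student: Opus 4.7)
The plan is to prove the statement by combining two ingredients: (i) the working hypothesis that $\mathcal{U}$ is a proxy for correctness and thus monotonically decreasing in the model's confidence in the correct answer, and (ii) an argument that conditioning on an entailing context strictly raises that confidence. Concretely, I would first use Assumption~\ref{as:1} together with $d \vDash (q,r_2)$ to conclude that $r_2$ is factually correct for $q$, and then use $r_1 \equiv r_2$ to conclude that $r_1$ is correct as well. So the comparison reduces to: the UE of a correct parametric answer versus the UE of the same-meaning correct answer supported by non-parametric evidence.

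Next I would show that for any faithful language model $\mathcal{M}_\theta$, adding a premise $d$ that entails $(q,r_2)$ strictly increases the sequence probability assigned to the correct answer, i.e.\ $P(r_2 \mid q,d,\theta) > P(r_1 \mid q,\theta)$. This is the crux: it is not a syntactic identity, but a behavioral property. I would lift it from the empirical evidence the paper already invokes (e.g.\ \citet{Cuconasu24Power, Mallen23popqa}) that an LLM aligns its generation with supporting context, and phrase it as a lemma saying that an entailing $d$ supplies non-redundant evidence beyond $\theta$'s parametric knowledge, so the conditional probability of the correct continuation strictly increases.

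I would then push this probability ordering through each UE functional from Section~\ref{sec:background}. For \emph{PE} and its length-normalised / meaning-aware variants, a larger $P(r \mid \cdot)$ yields a smaller $-\ln P(r \mid \cdot)$ and hence a smaller Monte-Carlo entropy. For \emph{SE}, the additional probability mass concentrates inside the semantic cluster containing $r_2$, lowering $-\log \tilde P(c \mid q,d,\theta)$. For the black-box measures, the $B$ beam-sampled responses drawn under the entailing prompt concentrate in a single semantic equivalence class, so the similarity matrix $W$ becomes closer to the all-ones matrix; the Laplacian $L = I - D^{-1/2} W D^{-1/2}$ has eigenvalues pushed toward $0$ and $B$, the degree-matrix trace grows, and the spectral embeddings collapse, simultaneously decreasing $U_{\text{EigV}}$, $U_{\text{Deg}}$ and $U_{\text{ECC}}$.

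The main obstacle is the middle step: the strict inequality $P(r_2 \mid q,d,\theta) > P(r_1 \mid q,\theta)$ is not a theorem of probability theory but a statement about how $\mathcal{M}_\theta$ uses its context. I would therefore make explicit an auxiliary faithfulness assumption — that a prompt whose premise entails the hypothesis raises the model's likelihood of that hypothesis over the context-free baseline — and note that Assumption~\ref{as:2} rules out the degenerate case in which $d$ simultaneously supports and contradicts the answer. Under this assumption the chain above closes, giving the desired strict inequality $\mathcal{U}(\mathcal{M}_\theta(q), r_1) > \mathcal{U}(\mathcal{M}_\theta(q,d), r_2)$.
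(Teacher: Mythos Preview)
The paper does not prove Theorem~1. Just before listing the five constraints it says explicitly that ``two of these constraints are proven based on this hypothesis, and three of them are intuitively driven''; the two that carry proofs are Theorems~3 and~4 (each has a proof block invoking Lemma~1), while Theorem~1 is one of the three that are only motivated informally. What follows the statement of Theorem~1 in the paper is an intuitive paragraph (``the retrieved context reinforces the LLM's prior knowledge'') together with a worked example, not a derivation.

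Your attempt to derive it cannot be closed from the paper's stated assumptions either, and you correctly flag the gap yourself. Lemma~1 orders $\mathcal U$ only when one response is \emph{False} and the other \emph{True}; under the hypotheses of Theorem~1, $c \vDash (q,r_2)$ together with Assumption~1 and $r_1 \equiv r_2$ make both $r_1$ and $r_2$ correct, so Lemma~1 yields no ordering. You compensate by introducing an auxiliary faithfulness assumption on $\mathcal M_\theta$ that is not among Assumptions~1--2, so you are strictly strengthening the hypotheses beyond what the paper allows. There is also a scope mismatch: pushing the probability gain through PE, SE, EigV, Deg and ECC would at best show that \emph{those particular estimators} satisfy the constraint under your added assumption, whereas Theorem~1 is posed as a normative requirement on an arbitrary reasonable $\mathcal U$ --- which is exactly why the paper treats it as an axiom to be checked empirically (cf.\ Table~\ref{tab:axioms_unc_changes_correctness_llama2_pqa}) rather than a statement to be proved.
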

%

This constraint states that if applying RAG does not alter the LLM’s response and the RAG context supports LLM's generated response $\response_2$, then LLM’s internal belief aligns with the context. In such a scenario, the uncertainty after applying RAG should be lower than before, as the retrieved context reinforces the LLM’s prior knowledge.  
For instance, consider the example in Figure~\ref{fig:main_fig}. Given the query, "\textit{What is the name of Manchester United’s stadium?}" if the LLM initially generates the correct response, "\textit{Old Trafford}," and the input context mentions "\textit{Old Trafford}" as the name of the stadium, then the uncertainty value after applying RAG should be lower than before.

\begin{theorem}[\axiomTwo]
$\forall {q, c}$ if $ \mathcal{M}_{\theta}(\query) = \response_1$,  $ \mathcal{M}_{\theta}(\query, \doc) = \response_2$,  $ \response_1 \equiv \response_2 $, $ \doc \bot (q, \response_2) $, then~~
$\mathcal{U}(\mathcal{M}_{\theta}(\query), \response_1) < \mathcal{U}(\mathcal{M}_{\theta}(\query, \doc), \response_2).$
\end{theorem}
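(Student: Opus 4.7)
The plan is to invoke the working hypothesis that UE is a proxy for the model's correctness, and to show that supplying a trustworthy context that contradicts the response strictly reduces the conditional sequence probability the LLM assigns to that response, which in turn forces the UE value to increase. The argument parallels that of Theorem~1 but with entailment replaced by contradiction, so the inequality flips direction.

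First, I would reduce the claim to a comparison of conditional sequence probabilities. Using Equation~\eqref{eq:pe}, write the uncertainty as a monotone decreasing function of $P(\response \mid \llminput, \theta)$; because $\response_1 \equiv \response_2$ they lie in the same semantic cluster, so the semantic-entropy variant in Equation~\eqref{eq:se} and the length-normalized and meaning-aware likelihoods reduce to a comparison of $P(\response \mid \query, \theta)$ against $P(\response \mid \query, \doc, \theta)$ for a canonical representative $\response$ of the equivalence class. At this point, proving the axiom amounts to showing that the right-hand probability is strictly smaller than the left-hand one.

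Second, I would exploit Assumption~\ref{as:1} together with the premise $\doc \bot (\query, \response_2)$. Since the context is trustworthy and contradicts $(\query, \response)$, the response is factually incorrect, and a well-behaved generative LLM that conditions on its prompt must shift its posterior away from the contradicted response, yielding $P(\response \mid \query, \doc, \theta) < P(\response \mid \query, \theta)$. Taking $-\log$ reverses the inequality, which combined with the correctness-proxy hypothesis delivers $\mathcal{U}(\mathcal{M}_{\theta}(\query), \response_1) < \mathcal{U}(\mathcal{M}_{\theta}(\query, \doc), \response_2)$ as required. Extending this to the black-box similarity-based estimators (\emph{EigV}, \emph{Deg}, \emph{ECC}) is straightforward, since a reduced posterior mass on $\response$ under the RAG prompt means sampled responses will disperse away from the $\response$-cluster, driving the spectral and degree-based scores upward.

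The main obstacle will be formalizing the monotonicity step $P(\response \mid \query, \doc, \theta) < P(\response \mid \query, \theta)$ in a model-agnostic way, because real LLMs can exhibit knowledge conflicts in which parametric belief overrides the context. I would address this by framing the step as an idealization built into the correctness-proxy hypothesis: an LLM that is meaningfully calibrated must respect trustworthy contradictory evidence at least weakly, and the empirical cases in which this fails are precisely the violations of the axiom that the remainder of the paper measures. Under this mild regularity condition, the axiom follows as a clean corollary of the hypothesis together with Assumption~\ref{as:1}.
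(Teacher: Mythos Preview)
The paper does not prove this statement. Immediately before the axioms it says ``Two of these constraints are proven based on this hypothesis, and three of them are intuitively driven''; the two proven ones are Theorems~3 and~4, each via Lemma~1. Theorem~2 is one of the three intuitively driven constraints: after stating it, the paper gives only a heuristic paragraph explaining why one would \emph{want} uncertainty to increase in this situation, and then moves on. So there is no paper proof to compare against, and your proposal attempts strictly more than the paper does.

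Your argument, however, has a genuine gap at the step you yourself flag. The inequality $P(\response \mid \query,\doc,\theta) < P(\response \mid \query,\theta)$ is not derivable from Assumptions~\ref{as:1}--\ref{as:2} or from the correctness-proxy hypothesis; it is a behavioral assumption about the LLM, and the paper's own experiments (Tables~\ref{tab:axioms_unc_changes_correctness_llama2_pqa}, \ref{tab:axioms_unc_changes_correctness_llama2}, \ref{tab:axioms_rel_correctness_mistral}) show it fails systematically in practice. Note also that Lemma~1, the only formal tool the paper supplies, cannot rescue the argument: it compares a \emph{False} response against a \emph{True} one, whereas here $\response_1 \equiv \response_2$ and the trustworthy context contradicts both, so both are False and the lemma's hypothesis is not met. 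Your ``mild regularity condition'' is effectively the conclusion restated as a premise. The paper is therefore right to present Theorem~2 as a normative axiom rather than a provable consequence, and you should not claim a proof where none exists.
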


This constraint states that if the LLM’s response remains unchanged after applying RAG, but the retrieved context $\doc$ contradicts the generated response $\response_2$,  then the LLM's internal belief does not align with the context. In such a case, the uncertainty after applying RAG should be higher than before, as the retrieved information challenges the LLM’s internal belief.  
For example, in Figure~\ref{fig:main_fig}, if LLM's response before and after RAG is "\textit{Wembley Stadium}," and RAG context contradicts the LLM’s response, then the uncertainty of the RAG response should increase. This means that although the LLM persists with its incorrect response, it does so with a lower confidence. 


\begin{theorem}[\axiomThree]
$\forall {q, c}$ if $\mathcal{M}_{\theta}(\query) = \response_1$, $ \mathcal{M}_{\theta}(\query, \doc) = \response_2$,  $ \neg (\response_1 \equiv \response_2) $, $ \doc \bot (q, \response_1) $,  $ \doc \vDash (q, \response_2) $, then 
%
\begin{equation*}
\label{eq:axiom3_2}
\mathcal{U}(\mathcal{M}_{\theta}(\query), \response_1) > \mathcal{U}(\mathcal{M}_{\theta}(\query, \doc), \response_2).
\end{equation*}
\end{theorem}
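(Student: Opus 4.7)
The plan is to derive the inequality directly from Assumption~\ref{as:1} and the working hypothesis that the uncertainty estimator $\mathcal{U}$ is a proxy for correctness. Concretely, I would argue that the five stated preconditions collectively determine the factual status of both $r_1$ and $r_2$, so that the correctness-proxy hypothesis forces the desired ordering of uncertainties.

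First, I would fix an arbitrary query-context pair $(q,c)$ satisfying the hypotheses. From $c \vDash (q, r_2)$ together with Assumption~\ref{as:1} (the context is trustworthy and factually correct), I would conclude that $r_2$ is a correct response to $q$. Next, from $c \bot (q, r_1)$ together with the same assumption, I would conclude that $r_1$ is an incorrect response to $q$. Note that the premise $\neg(r_1 \equiv r_2)$ is essential for internal consistency here: it guarantees that $r_1$ and $r_2$ are not semantically equivalent, so that the two NLI relations $c \vDash (q, r_2)$ and $c \bot (q, r_1)$ can both hold without contradicting Assumption~\ref{as:2}.

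Having established that $(\mathcal{M}_\theta(q), r_1)$ is an incorrect generation and $(\mathcal{M}_\theta(q, c), r_2)$ is a correct one, I would invoke the working hypothesis that $\mathcal{U}$ is a proxy for correctness, which formally states that $\mathcal{U}$ assigns strictly higher scores to incorrect (input, output) pairs than to correct ones. Applying this monotonicity to the two pairs yields
\begin{equation*}
\mathcal{U}(\mathcal{M}_{\theta}(q), r_1) \;>\; \mathcal{U}(\mathcal{M}_{\theta}(q, c), r_2),
\end{equation*}
which is exactly the claim. Because $(q,c)$ was arbitrary, the statement holds universally.

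The main obstacle is not algebraic but conceptual: making the correctness-proxy hypothesis precise enough for the step above to count as a proof rather than an appeal to intuition. In the body of the paper I would therefore state this monotonicity explicitly as the formal content of ``UE is a proxy for correctness,'' and remark that the same mechanism underlies the other axiom proven from this hypothesis, while the remaining three axioms, in which both $r_1$ and $r_2$ share the same correctness status, cannot be reduced to it and must be posited as intuitive design desiderata.
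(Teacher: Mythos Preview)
Your proposal is correct and matches the paper's approach essentially step for step: the paper also derives that $r_1$ is false from $c \bot (q, r_1)$ plus Assumptions~\ref{as:1}--\ref{as:2}, that $r_2$ is true from $c \vDash (q, r_2)$, and then applies the correctness-proxy hypothesis, which it states explicitly as Lemma~1. Your suggestion to make the monotonicity precise as a named statement is exactly what the paper does.
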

Theorem 3 directly follows from the statement in the following lemma:
\begin{lemma}
If $ \mathcal{M}_{\theta}(x_1) = \response_1$ , $\mathcal{M}_{\theta}(x_2) = \response_2$, $\response_1$ is $False$, $\response_2$ is $True$, then 
\begin{equation*}
    \mathcal{U}(\mathcal{M}_{\theta}(x_1), \response_1) > \mathcal{U}(\mathcal{M}_{\theta}(x_2), \response_2).
\end{equation*}
\end{lemma}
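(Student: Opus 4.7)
The plan is to prove the lemma by appealing directly to the working hypothesis stated earlier in the paper that UE is a proxy for correctness. In the introduction the authors write that ``a perfect UE method would assign lower uncertainty to correct samples and higher uncertainty to incorrect ones,'' and the same characterization is invoked when motivating the axiomatic framework. The lemma is essentially a two-sample restatement of this property: if $\response_1$ is False and $\response_2$ is True, then under an optimally behaving $\mathcal{U}$ the uncertainty attached to the false response must strictly exceed that of the true one. I would therefore open the proof by fixing the hypothesis that $\mathcal{U}$ behaves monotonically with respect to correctness, and then read off $\mathcal{U}(\mathcal{M}_{\theta}(x_1),\response_1) > \mathcal{U}(\mathcal{M}_{\theta}(x_2),\response_2)$ as an immediate consequence.

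To then deduce Theorem~3, I would instantiate the lemma with $x_1 := \query$ and $x_2 := (\query, \doc)$ and verify that the truth labels of $\response_1$ and $\response_2$ are determined by the NLI premises of the theorem. By Assumption~\ref{as:1}, the context $\doc$ is trustworthy and factually correct, so the relation $\doc \vDash (\query, \response_2)$ licenses the conclusion that $\response_2$ is a correct (True) response to $\query$, while $\doc \bot (\query, \response_1)$ licenses the conclusion that $\response_1$ is an incorrect (False) response. The clause $\neg(\response_1 \equiv \response_2)$ is needed only to keep the two scenarios substantive (otherwise the two constraints on $\doc$ would be jointly unsatisfiable under Assumption~\ref{as:2}), and does not enter the inequality itself. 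Applying the lemma to $(\response_1, \response_2)$ then yields the target strict inequality.

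The main obstacle, as I see it, is that the lemma is not a theorem in the classical sense; it is really a definitional/axiomatic property of an idealized UE method, and its ``proof'' collapses to restating what it means for uncertainty to correlate perfectly with correctness. I would flag this explicitly in the write-up, so that the reader understands that the Positively/Negatively Consistent axioms (which are proven from the hypothesis, per the earlier remark ``two of these constraints are proven based on this hypothesis'') and this lemma all share the same logical status: they pin down the intended meaning of an optimal $\mathcal{U}$, and any concrete UE method is then judged by how often it respects them. A small subtlety worth noting is that the NLI functions in Section~\ref{sec:axiom:func} are defined relative to human inference from $\doc$; the step that converts ``a human can infer $\response$ is (in)correct'' into the hard label True/False needed by the lemma leans on Assumption~\ref{as:1}, and I would call that transition out explicitly rather than leaving it implicit.
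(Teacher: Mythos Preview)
Your proposal is correct and matches the paper's treatment: the paper does not supply a separate proof of Lemma~1 but simply states it as the formalization of the working hypothesis that ``UE is a proxy for the correctness of the model,'' and then invokes it together with Assumptions~\ref{as:1} and~\ref{as:2} to derive Theorem~3 exactly as you outline. One small correction: the two constraints the paper says are ``proven based on this hypothesis'' are Theorems~3 and~4 (Positively/Negatively \emph{Changed}), not the Positively/Negatively \emph{Consistent} axioms you name; the latter are among the three that are only intuitively motivated.
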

\begin{proof}
Given Assumptions \ref{as:1} and \ref{as:2} and $ \doc \bot (q, \response_1) $, then response $\response_1$ is $False$.
Similarly, given that $ \doc \vDash (q, \response_2) $,  then response $\response_2$ is $True$.
Given these events and Lemma 1, then  $\mathcal{U}(\mathcal{M}_{\theta}(q), \response_1) > \mathcal{U}(\mathcal{M}_{\theta}(q, c), \response_2)$.
\end{proof}

This constraint states that if the LLM’s response changes from $ \response_1 $ to $ \response_2 $ after applying RAG, and the RAG context $\doc$ supports $ \response_2 $ while contradicting $ \response_1 $, then the estimated uncertainty for $\response_2$ should be lower than one for $\response_1$. 
For example, consider the case illustrated in Figure~\ref{fig:main_fig}. If the LLM initially generates "\textit{Wembley Stadium}" but then, after seeing a context containing the correct response, changes its output to "\textit{Old Trafford}," the uncertainty of "\textit{Old Trafford}" with RAG should be lower than the uncertainty of "\textit{Wembley Stadium}" without RAG.  

\begin{theorem}[\axiomFour]
$\forall {q, c}$ if $ \mathcal{M}_{\theta}(\query) = \response_1$, $ \mathcal{M}_{\theta}(\query, \doc) = \response_2$, $ \neg(\response_1 \equiv \response_2) $, $ \doc \vDash (q, \response_1) $, $ \doc \bot (q, \response_2) $, then 
\begin{equation*}
\label{eq:axiom4_2}
    \mathcal{U}(\mathcal{M}_{\theta}(\query), \response_1) < \mathcal{U}(\mathcal{M}_{\theta}(\query, \doc), \response_2).
\end{equation*}
\end{theorem}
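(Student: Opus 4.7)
The plan is to prove Theorem 4 by the same template used for Theorem 3, namely by reducing it to Lemma 1, which states that a false response carries strictly higher uncertainty than a true response. The statement of Theorem 4 is structurally the mirror image of Theorem 3: here the pre-RAG response $\response_1$ is the one supported by the context, while the post-RAG response $\response_2$ is the one contradicted by it. So I expect the proof to be essentially a one-line invocation of Lemma 1 after we certify the truth values of $\response_1$ and $\response_2$ from the entailment/contradiction premises.

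Concretely, I would proceed in three steps. First, I would invoke Assumption~\ref{as:1} together with the premise $\doc \vDash (q, \response_1)$ to conclude that $\response_1$ is $True$: since $\doc$ is factually correct and entails the hypothesis $(q, \response_1)$, the response must be correct. Second, I would use Assumption~\ref{as:1} with Assumption~\ref{as:2} and the premise $\doc \bot (q, \response_2)$ to conclude that $\response_2$ is $False$: a trustworthy, non-self-contradictory context that contradicts $(q, \response_2)$ forces the hypothesis to be incorrect. Third, applying Lemma~1 with the roles $x_1 \mapsto (q, c)$ for the false response and $x_2 \mapsto q$ for the true response yields
\begin{equation*}
\mathcal{U}(\mathcal{M}_{\theta}(\query, \doc), \response_2) > \mathcal{U}(\mathcal{M}_{\theta}(\query), \response_1),
\end{equation*}
which is exactly the claim of the theorem.

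I do not anticipate a genuine technical obstacle; the only subtlety is making sure the two assumptions are both cited when moving from the symbolic relations $\vDash$ and $\bot$ to the semantic judgments $True$ and $False$. In particular, the contradiction direction relies on Assumption~\ref{as:2} to rule out the pathological case in which $\doc$ happens to contradict a correct response because $\doc$ itself contains inconsistent material about $q$. Once these two licensing steps are made explicit, the conclusion is immediate from Lemma~1, and the proof can be presented as a short paragraph mirroring the proof of Theorem~3.
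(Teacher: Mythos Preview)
Your proposal is correct and follows essentially the same approach as the paper: invoke Assumptions~\ref{as:1} and~\ref{as:2} together with the premises $\doc \vDash (q,\response_1)$ and $\doc \bot (q,\response_2)$ to certify that $\response_1$ is true and $\response_2$ is false, then apply Lemma~1 with the roles swapped relative to Theorem~3. Your write-up is in fact slightly more careful than the paper's, which contains a typographical slip (it writes $\response_1$ in the contradiction step where $\response_2$ is meant).
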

%
This theorem follows from the statement
in the Lemma 1 with the following proof.
\begin{proof}
The proof is similar to that of
Theorem~3. Given Assumptions \ref{as:1} and \ref{as:2} and $ \doc \vDash (q, \response_1) $,  then response $\response_1$ is correct. 
Similarly, response $\response_1$ is incorrect because $ \doc \bot (q, \response_1) $.
Based on Lemma 1 and these events, then  $\mathcal{U}(\mathcal{M}_{\theta}(q), \response_1) < \mathcal{U}(\mathcal{M}_{\theta}(q, c), \response_2)$.
\end{proof}

This constraint states that if the LLM’s response changes from $ \response_1 $ to $ \response_2 $ after applying RAG, where $\response_1$ is correct, and $\response_2$ is incorrect, then the estimated uncertainty of $ \response_2 $ should be higher than the one for $ \response_1$.
In the example of Figure~\ref{fig:main_fig}, the LLM generates the correct response "\textit{Old Trafford}" and changes its response to "\textit{Wembley Stadium}" in the RAG setup, which is incorrect. In this scenario, the uncertainty of the RAG response should be higher than that of the original response without RAG.

\begin{theorem}[\axiomFive]
$\forall \query,\doc$ if $ \mathcal{M}_{\theta}(\query) = \response_1$, $ \mathcal{M}_{\theta}(\query, \doc) = \response_2$, $\response_1 \equiv \response_2$, $\doc \# (\query, \response_1)$, then~~
    $\mathcal{U}(\mathcal{M}_{\theta}(\query), \response_1) \approx \mathcal{U}(\mathcal{M}_{\theta}(\query,\doc), \response_2). $
\end{theorem}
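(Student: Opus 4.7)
The plan is to argue that Theorem 5 follows from two facts: (i) the semantic equivalence $\response_1 \equiv \response_2$ forces the two responses to share the same ground-truth correctness status, and (ii) the independence relation $\doc \, \# \, (\query, \response_1)$ guarantees that the extra conditioning on $\doc$ carries no decision-relevant information about that status. Unlike Theorems 3 and 4, Lemma 1 cannot be invoked directly here, because independence alone does not pin down whether $\response_1$ is true or false; the claim is an approximate equality rather than a strict inequality, which requires a different style of argument.

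First, I would observe that by the definition of equivalence in Section~\ref{sec:axiom:func}, $\response_1$ and $\response_2$ entail each other, so they are either both correct or both incorrect with respect to $\query$. Since the working hypothesis treats $\mathcal{U}$ as a proxy for correctness, this already forces the two uncertainty values to lie on the same side of any correctness-calibrated threshold.

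Second, I would exploit $\doc \, \# \, (\query, \response_1)$ together with Assumptions~\ref{as:1} and \ref{as:2}: the context neither entails nor contradicts the hypothesis $(\query, \response_1)$, so conditioning the LLM on $\doc$ supplies no evidence that should shift its belief in the correctness of its own output. Combined with the fact that the model's generated response is unchanged up to meaning ($\mathcal{M}_\theta(\query, \doc) = \response_2 \equiv \response_1$), this is evidence that the additional context has not altered the model's effective distribution over semantically distinct continuations. I would then argue that for each of the UE instantiations reviewed in Section~\ref{sec:background} — token-probability-based ($PE$, length-normalized and meaning-aware variants), cluster-based ($SE$), and similarity-graph-based ($U_{\text{EigV}}, U_{\text{Deg}}, U_{\text{ECC}}$) — the quantities computed on $B$ samples from $\mathcal{M}_\theta(\query)$ and from $\mathcal{M}_\theta(\query,\doc)$ should coincide in expectation, since the semantic clusters and pairwise equivalence structure are preserved.

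The main obstacle, and the reason the axiom is stated with $\approx$ rather than $=$, is that additional prompt tokens can cause small, non-semantic perturbations in raw token probabilities and hence in entropy- or Laplacian-based scores, even when nothing epistemically relevant has changed. I would handle this by framing the axiom as the requirement that any residual discrepancy be bounded and non-directional — in contrast to the systematic, signed shifts enforced by Theorems~1–4 — so that the independence condition cannot push uncertainty consistently up or down.
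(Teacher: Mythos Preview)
The paper does not supply a proof of Theorem~5. Immediately before stating the five theorems, the authors write that ``two of these constraints are proven based on this hypothesis, and three of them are intuitively driven'' --- the two proven ones are Theorems~3 and~4 (each via Lemma~1 together with Assumptions~\ref{as:1} and~\ref{as:2}), while Theorems~1, 2, and~5 are posited as normative constraints a reasonable UE method \emph{should} satisfy. The only justification the paper offers for Theorem~5 is the single explanatory paragraph following the statement: if the context is unrelated to the query and the response is unchanged, the context neither supports nor contradicts the model's belief, so uncertainty should stay similar. That is the full extent of the paper's treatment.

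Your proposal therefore goes well beyond what the paper attempts. Steps (i) and (ii) are faithful restatements of the paper's intuitive motivation. The additional claim in your second main paragraph --- that for each concrete UE instantiation the sampled quantities ``should coincide in expectation'' because semantic clusters and pairwise equivalence structure are preserved --- is where a real gap sits. The independence relation $c\,\#\,(q,r_1)$ is defined at the level of what a \emph{human} can infer about correctness; it says nothing about the model's token-level distributions $P(r\mid q;\theta)$ versus $P(r\mid q,c;\theta)$. Nothing in the hypotheses forces the model to leave its sampling distribution, entropy, or Laplacian spectrum unchanged when an irrelevant document is prepended, and the paper's own Tables~\ref{tab:axiom5_unc_changes_correctness_llama2} and~\ref{tab:Axiom_irrelevant_mistral} document that existing UE methods shift significantly under $\text{Doc}^{-}$. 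Your final paragraph partly anticipates this, but framing the residual discrepancy as ``small, non-semantic perturbations'' understates the issue: the axiom is normative precisely because real models and real UE functions violate it, so it cannot be derived as a property they already possess. The honest position is the paper's own: Theorem~5 is a desideratum, not a consequence.
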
 \
This constraint states that if the LLM’s response remains unchanged after applying RAG, and the retrieved context $\doc$ is unrelated to the query and responses $ \response_1 $ and $ \response_2$, then the context neither supports nor contradicts the LLM’s belief. In this case, the estimated salary should remain similar.
For example, consider the query "\textit{Who wrote the book The Origin of Species?}". If, in the RAG setup, the LLM is provided with the context shown in Figure~\ref{fig:main_fig}, which is unrelated to the query, then as long as the response remains unchanged, the uncertainty value should remain unaffected.  

\subsection{Instantiation}~\label{sec:axioms:inst}
To empirically examine UE methods against these axioms, we need to define a specific instantiation of functions in our framework (cf. Sec.~\ref{sec:axiom:func}). We introduce two instantiations of these functions: \emph{reference-based} and \emph{reference-free}. 
The reference-based instantiation assumes the existence of a benchmark containing ground truth responses to queries. Such a benchmark is not available for reference-free instantiation.


%

\medskip
\noindent \textbf{Reference-based.}
In this setup, we rely on ground truth labels to check the condition of each axiom. We assume that for every $\query$, the correct response $\hat{\response}$ is available in our ground truth. The implementation of \emph{Entailment} and \emph{Contraction} functions then boils down to comparing the generated response $r$ against the ground truth response $\hat{\response}$. The comparison is performed using a matching function $\mathcal{E}(r_1, r_2)$, which assesses whether the two responses are equivalent. This function is also used to implement the \emph{Equivalence} function (cf. Sec~\ref{sec:axiom:func}). For datasets containing factual queries with short responses, $\mathcal{E}(.)$ is an Exact Match (EM) function, which returns \textit{True} if and only if the two responses are identical on a token-by-token basis~\cite{Mallen23popqa}. 
Using this setup, the following conditions can be inferred for our axioms: 

\noindent
\textit{Axiom 1. } $\mathcal{E}(\response_1, \response_2) = True$,  $\mathcal{E}(\response_2, \hat{\response}) = True$.

\noindent
\textit{Axiom 2.} $ \mathcal{E}(\response_1, \response_2) = True$, $\mathcal{E}(\response_2, \hat{\response}) = False$.

\noindent
\textit{Axiom 3.} $ \mathcal{E}(\response_1, \response_2) = False$, $ \mathcal{E}(\response_1, \hat{\response}) = False $,  $ \mathcal{E}(\response_2, \hat{\response}) = True $.

\noindent
\textit{Axiom 4.} $ \mathcal{E}(\response_1, \response_2) = False$,  $ \mathcal{E}(\response_1, \hat{\response}) = True $, $ \mathcal{E}(\response_2, \hat{\response}) = False $.

\noindent
\textit{Axiom 5.} $\mathcal{E}(\response_1, \response_2) = True, \doc $ is not relevant to $q$.


\medskip
\noindent \textbf{Reference-free.}
Since access to the correctness labels of LLM's responses limits the applicability of axioms to unseen queries, we propose a reference-free implementation of axioms.
Specifically, we leverage an NLI classifier to assess the relationship between the generated response and the context, denoted as $\mathcal{R}(.)$.
Following~\cite{Kuhn23SE, Lin24ECC}, we implement \emph{Entailment} by merging entailment and neutral classes into a single class. The contradiction class of the NLI classifier is considered for the \emph{Contradiction} function. Similar to the reference-based instantiation, function $\mathcal{E}(.)$ is used for \emph{Equivalence}. 
Using these definitions, the axioms are defined as follows:

\noindent
\textit{Axiom 1.} $ \mathcal{E}(\response_1, \response_2) = True$, $ \mathcal{R}(\doc, \query, \response_2) = Entailment $.

\noindent
\textit{Axiom 2.} $ \mathcal{E}(\response_1, \response_2) = True$, $ \mathcal{R}(\doc, \query, \response_2) = Contradiction $.

\noindent
\textit{Axiom 3.} $ \mathcal{E}(\response_1, \response_2) = False$, $ \mathcal{R}(\doc, \query, \response_1) = Contradiction $, $ \mathcal{R}(\doc, \query, \response_2) = Entailment $.

\noindent
\textit{Axiom 4.} $ \mathcal{E}(\response_1, \response_2) = False$, $ \mathcal{R}(\doc, \query, \response_2) = Entailment $, $ \mathcal{R}(\doc, \query, \response_2) = Contradiction $.

\noindent
Axiom 5  mirrors the reference-based setup, due to the limitations of existing NLI methods in predicting the neutral relation.



\section{Derivation of a Calibration Function}
\label{sec:calibration_function}


In this section, we derive a calibration function that improves existing UE methods using our axiomatic framework. 
To recap, our formal constraints are built around four functions that are examined for LLM responses without RAG ($\response_1$) and with RAG ($\response_2$).
In the reference-free instantiation of our framework (cf. Sec.~\ref{sec:axioms:inst}), we showed that these functions are of two types: (i) Equivalence that examines the relation between two LLM-generated responses, represented as  $\mathcal{E}(\response_1, \response_2)$, and (ii) other functions that examine entailment, contradiction, and independence relations between context, query, and an LLM generated response, represented as $\mathcal{R}(c, q, r)$.
We define a calibration coefficient by searching the space of our axiomatic constraints using these two types of functions:
\begin{equation*}
    \begin{aligned}
        \alpha_{\text{ax}} = 
        k_1\cdot\mathcal{E}(\response_1, \response_2) +
        k_2 \cdot \mathcal{R}(\doc, \query, \response_1) + \\
        k_3 \cdot \mathcal{R}(\doc, \query, \response_2), 
    \end{aligned}
\label{eq:axiomatic_coefficient}
\end{equation*}
where \(k_1\), \(k_2\), \(k_3\) are hyper parameters, and $\response_1, \response_2$ represent LLM generated responses without and with RAG, respectively.
The calibrated UE function for RAG is then defined as:
\[
\mathcal{U}(\mathcal{M}_{\theta}(\doc, \query), \response_2)^{\text{cal}} = (k_4 - \alpha_{\text{ax}}) \cdot \mathcal{U}(\mathcal{M}_{\theta}(\doc, \query), \response_2).
\]

The hyper parameters $k_1$--$k_4$ are set to satisfy the axioms using a validation set. 
This calibration enables increasing the uncertainty score of RAG for samples associated with axioms 2 and 4 while decreasing it for samples related to axioms 1 and 3.


\subsection{Instantiation}~\label{sec:calibration_function:inst}
We propose three instantiations of the calibration function, where three different models are used to implement $\mathcal{R}$. 


\noindent\textbf{CTI}. The first model is based on the Context-sensitive Token Identification (CTI) task, which has been applied in self-citation and groundedness evaluation~\cite{Sarti24Quantifying, Qi24Model}.
In this approach, each token in \( \response = \{\response^1, \response^2, \dots, \response^N\} \) is evaluated using a contrastive metric \(m\) (e.g., KL divergence, comparing the LLM's response distributions with and without the context. The resulting scores are $\{m_1, m_2, \dots, m_N\}$, where $m_n = \mathrm{KL}(
    P\left(\response^n \mid \response^{<n}, (\query, \doc); \theta\right)
    \parallel 
    P\left(\response^n \mid \response^{<n}, \query; \theta\right)
    ).$
These scores are converted into binary values via the selector function \(S_{\text{CTI}}\). The overall relation score is then computed as:
\begin{equation*}
\mathcal{R}(\doc, \query, \response) = \frac{1}{N}\sum_{n=1}^{N} S_{\text{CTI}}(m_n).
\end{equation*}

\noindent\textbf{NLI.}
The second model employs an NLI-based approach that quantifies the relationship using entailment probability:
\begin{equation*}
\mathcal{R}(\doc, \query, \response) = \mathcal{N}_{\vDash}(\doc, (\query, \response)).
\end{equation*}

\noindent\textbf{MiniCheck.}
Finally, the third model employs MiniCheck~\cite{Tang24MiniCheck}, which performs sentence-level fact-checking using a fine-tuned model. 
It produces a score between $0$ and $1$ indicating how well the \(\response\) is grounded in the \(\doc\):
\begin{equation*}
\mathcal{R}(\doc, \query, \response) = \text{MiniCheck}(\doc, (\query,\response)).
\end{equation*}


In all three instantiations, the equivalence function \(\mathcal{E}(\response_1, \response_2)\) is an NLI classifier, wherein the entailment probability serves as a continuous measure of similarity between \(\response_1\) and \(\response_2\)~\cite{Kuhn23SE}; formally $\mathcal{E}(\response_1, \response_2) = \mathcal{N}_{\vDash}(\response_1, \response_2)$.

\section{Experimental Setup}~\label{sec:experimental_setup}
\shrink

\noindent
\textbf{Datasets.}  
We evaluate our approach on three open-book QA datasets, Natural Questions (NQ-open)~\cite{Lee19nqopen}, TriviaQA~\cite{Joshi17TriviaQA}, and \textsc{PopQA}~\cite{Mallen23popqa}. 
For each dataset, we randomly sample 3,000 examples as the test set.
We create a validation set for each dataset, comprising 300 samples, which is used to compute calibration coefficients as described in Section~\ref{sec:calibration_function}. For NQ-open and TriviaQA, the validation set is sampled from the training set, whereas for \textsc{PopQA}, it is derived from the test set.

\noindent\textbf{Methods.}  
We evaluate three white-box UE methods: PE, SE, and MARS applied to PE and SE (denoted as PE+M and SE+M), as well as three black-box methods: Deg, ECC, and EigV (cf. Sec.~\ref{sec:background}).
\noindent\textbf{Experimental setup.}  
Our experiments involve the reproduction of existing UE methods for the RAG setup. To ensure a fair comparison, we employ LLMs that are used in the original papers: Llama2-chat 7B and Mistral-7B. For uncertainty computation, 10 responses per query are generated with a temperature setting of \( T = 1 \); for correctness evaluation, the most likely response is considered.
%
%

Following \citet{Kuhn23SE}, we use Deberta-large model fine-tuned on MNLI as NLI classifier.
%
BM25, Contriever~\citep{Unsupervised22Izacard}, and BM25+Reranker are used as retrievers. Manually chosen relevant and irrelevant documents are denoted with $\text{Doc}^{+}$ and $\text{Doc}^{-}$, respectively.

\noindent\textbf{Metrics.}  
We report the Exact Match for correctness and AUROC~\cite{Bakman24mars}. 
We report on statistical significance using Wilcoxon test with $p$-value < $0.01$; see Appendix~\ref{sec:appendix_es} for further details.

\noindent
\textbf{Calibration Function.}
We perform a grid search on the validation set of each dataset to determine the axiomatic coefficients (\(k_1, k_2, k_3, k_4\)) as described in Section~\ref{sec:calibration_function}. This grid search simultaneously pursues two objectives: satisfying the axioms and maximizing the overall AUROC. For the CTI method, the optimal coefficients are (0.05, 0.20, 0.75, 1.30); for the NLI and MiniCheck methods, the optimal coefficients are (0.05, 0.05, 0.90, 1.20) consistently across all datasets.
We observed that the calibration coefficient values are consistent across different datasets and LLMs, which is expected given that the range of uncertainty scores does not vary significantly across datasets and LLMs. Specifically, $k_3$ consistently takes higher values than other values. The lower value of $k_3$ for CTI compared to other NLI and MiniCheck is due to its higher error rate in capturing the relationship between the retrieved document and the generated output. We found that decreasing $k_3$ (or increasing $k_1$ and $k_2$) consistently leads to lower AUROC scores, while reducing $k_1$ or $k_2$ results in fewer satisfied axioms. $k_4$ remains relatively stable across configurations.

\section{Results}~\label{sec:results}
\shrink
\shrink

\begin{table}[t]
\scriptsize
\centering \shrink
\setlength{\tabcolsep}{1.4pt}
\begin{tabular}{@{~}cc|cccccc}
\hline
\textbf{LLM} & \textbf{Unc.} &
\multicolumn{6}{c}{\textbf{PopQA}} \\ \hline

\hline
& &
\textbf{$\text{No Doc}$} & \textbf{$\text{Doc}^{-}$} & \textbf{BM25} & \textbf{Cont.} & \textbf{ReRa.} & \textbf{$\text{Doc}^{+}$} \\
\hline\hline

\multirow{10}{*}{\rotatebox[origin=c]{90}{Llama2-chat}} &
PE   &
\colorbox{orange!50}{1.29~\textsuperscript{ }} &
\colorbox{orange!40}{1.11~\textsuperscript{$\ast$}} &
\colorbox{orange!30}{0.54~\textsuperscript{$\ast$}} &
\colorbox{orange!20}{0.46~\textsuperscript{$\ast$}} &
\colorbox{orange!10}{0.35~\textsuperscript{$\ast$}} &
\colorbox{orange!5}{0.34~\textsuperscript{$\ast$}} \\ 

& SE   & 
\colorbox{orange!50}{4.86~\textsuperscript{ }} &
\colorbox{orange!40}{4.37~\textsuperscript{$\ast$}} &
\colorbox{orange!30}{3.45~\textsuperscript{$\ast$}} &
\colorbox{orange!20}{3.30~\textsuperscript{$\ast$}} &
\colorbox{orange!5}{3.13~\textsuperscript{$\ast$}} &
\colorbox{orange!10}{3.19~\textsuperscript{$\ast$}} \\ 

& PE+M &
\colorbox{orange!50}{1.59~\textsuperscript{ }} &
\colorbox{orange!40}{1.34~\textsuperscript{$\ast$}} &
\colorbox{orange!30}{0.65~\textsuperscript{$\ast$}} &
\colorbox{orange!20}{0.55~\textsuperscript{$\ast$}} &
\colorbox{orange!5}{0.44~\textsuperscript{$\ast$}} &
\colorbox{orange!10}{0.45~\textsuperscript{$\ast$}} \\ 

& SE+M &
\colorbox{orange!50}{5.38~\textsuperscript{ }} &
\colorbox{orange!40}{4.71~\textsuperscript{$\ast$}} &
\colorbox{orange!30}{3.62~\textsuperscript{$\ast$}} &
\colorbox{orange!20}{3.43~\textsuperscript{$\ast$}} &
\colorbox{orange!5}{3.23~\textsuperscript{$\ast$}} &
\colorbox{orange!10}{3.27~\textsuperscript{$\ast$}} \\

& Deg & 
\colorbox{orange!50}{0.52~\textsuperscript{ }} &
\colorbox{orange!40}{0.32~\textsuperscript{$\ast$}} &
\colorbox{orange!30}{0.12~\textsuperscript{$\ast$}} &
\colorbox{orange!20}{0.09~\textsuperscript{$\ast$}} &
\colorbox{orange!10}{0.06~\textsuperscript{$\ast$}} &
\colorbox{orange!5}{0.05~\textsuperscript{$\ast$}} \\

& ECC &
\colorbox{orange!50}{0.71~\textsuperscript{ }} &
\colorbox{orange!40}{0.54~\textsuperscript{$\ast$}} &
\colorbox{orange!30}{0.22~\textsuperscript{$\ast$}} &
\colorbox{orange!20}{0.17~\textsuperscript{$\ast$}} &
\colorbox{orange!10}{0.12~\textsuperscript{$\ast$}} &
\colorbox{orange!5}{0.10~\textsuperscript{$\ast$}} \\

& EigV &
\colorbox{orange!50}{4.25~\textsuperscript{ }} &
\colorbox{orange!40}{2.28~\textsuperscript{$\ast$}} &
\colorbox{orange!30}{1.42~\textsuperscript{$\ast$}} &
\colorbox{orange!20}{1.31~\textsuperscript{$\ast$}} &
\colorbox{orange!10}{1.18~\textsuperscript{$\ast$}} &
\colorbox{orange!5}{1.17~\textsuperscript{$\ast$}} \\

\hline
\multirow{10}{*}{\rotatebox[origin=c]{90}{Mistral-v0.3}} &
PE   & 
\colorbox{orange!50}{1.51~\textsuperscript{ }} &
\colorbox{orange!40}{0.94~\textsuperscript{$\ast$}} &
\colorbox{orange!30}{0.84~\textsuperscript{$\ast$}} &
\colorbox{orange!20}{0.69~\textsuperscript{$\ast$}} &
\colorbox{orange!10}{0.62~\textsuperscript{$\ast$}} &
\colorbox{orange!5}{0.51~\textsuperscript{$\ast$}} \\

& SE   &
\colorbox{orange!50}{5.66~\textsuperscript{ }} &
\colorbox{orange!40}{3.73~\textsuperscript{$\ast$}} &
\colorbox{orange!30}{3.68~\textsuperscript{$\ast$}} &
\colorbox{orange!20}{3.53~\textsuperscript{$\ast$}} &
\colorbox{orange!10}{3.41~\textsuperscript{$\ast$}} &
\colorbox{orange!5}{3.26~\textsuperscript{$\ast$}} \\

& PE+M &
\colorbox{orange!50}{2.35~\textsuperscript{ }} &
\colorbox{orange!40}{1.42~\textsuperscript{$\ast$}}& 
\colorbox{orange!30}{1.26~\textsuperscript{$\ast$}} & 
\colorbox{orange!20}{1.05~\textsuperscript{$\ast$}} & 
\colorbox{orange!10}{0.92~\textsuperscript{$\ast$}} & 
\colorbox{orange!5}{0.80~\textsuperscript{$\ast$}} \\

& SE+M &
\colorbox{orange!50}{6.47~\textsuperscript{ }} &
\colorbox{orange!40}{4.05~\textsuperscript{$\ast$}} &
\colorbox{orange!30}{3.98~\textsuperscript{$\ast$}} &
\colorbox{orange!20}{3.77~\textsuperscript{$\ast$}} &
\colorbox{orange!10}{3.60~\textsuperscript{$\ast$}} &
\colorbox{orange!5}{3.45~\textsuperscript{$\ast$}} \\

& Deg  &
\colorbox{orange!50}{0.48~\textsuperscript{ }} &
\colorbox{orange!20}{0.05~\textsuperscript{$\ast$}} &
\colorbox{orange!40}{0.07~\textsuperscript{$\ast$}} &
\colorbox{orange!30}{0.06~\textsuperscript{$\ast$}} &
\colorbox{orange!20}{0.05~\textsuperscript{$\ast$}} &
\colorbox{orange!5}{0.03~\textsuperscript{$\ast$}} \\

& ECC  &
\colorbox{orange!50}{0.68~\textsuperscript{ }} &
\colorbox{orange!10}{0.03~\textsuperscript{$\ast$}}&
\colorbox{orange!40}{0.08~\textsuperscript{$\ast$}} &
\colorbox{orange!40}{0.08~\textsuperscript{$\ast$}} &
\colorbox{orange!20}{0.05~\textsuperscript{$\ast$}} &
\colorbox{orange!5}{0.04~\textsuperscript{$\ast$}} \\

& EigV &
\colorbox{orange!50}{4.18~\textsuperscript{ }} &
\colorbox{orange!10}{1.08~\textsuperscript{$\ast$}}&
\colorbox{orange!30}{1.16~\textsuperscript{$\ast$}} &
\colorbox{orange!40}{1.17~\textsuperscript{$\ast$}} &
\colorbox{orange!20}{1.11~\textsuperscript{$\ast$}} &
\colorbox{orange!10}{1.08~\textsuperscript{$\ast$}} \\


\hline
\end{tabular}
\caption{Average uncertainty values for various settings. Lighter colors indicate lower uncertainty. Statistically significant differences are compared to \textit{No Doc} are marked with $\ast$.}
\label{tab:uncertainty_value_rag_methods}
\shrink
\shrink
\end{table}

\begin{figure}[t]
  \centering \shrink
  \includegraphics[width=0.46\textwidth]{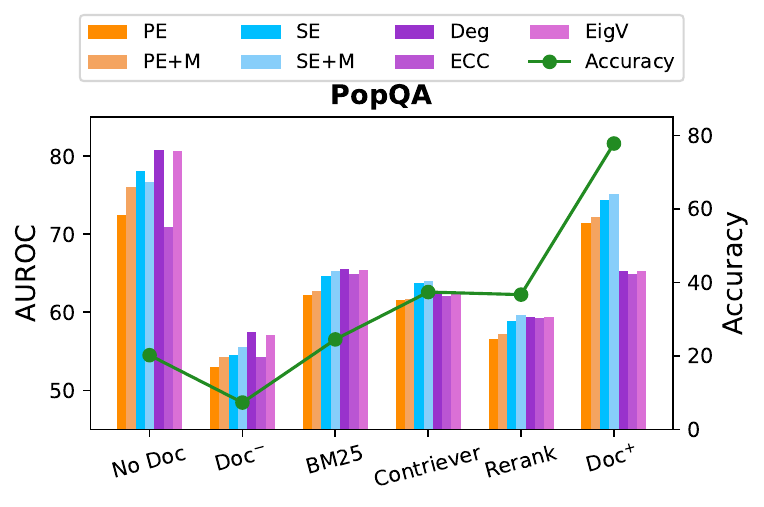}
  \shrink
  \caption{Comparison of AUROC between no-RAG and RAG settings for Llama2-chat.}
  \label{fig:auroc_all_llama2}
  \shrink
\end{figure}

\subsection{Uncertainty Changes with RAG}
\textit{\textbf{(RQ1)}} examines how the performance of UE methods and their associated uncertainty values vary with and without context in the input prompt. Figures~\ref{fig:auroc_all_llama2} and \ref{fig:auroc_all_1} present accuracy and AUROC for different RAG settings.
We observe inconsistent behavior of UE methods with and without RAG across different datasets, often displaying drop AUROC for RAG cases, except for $\text{\textit{Doc}}^{+}$. While AUROC should be independent accuracy, the results suggest a correlation between the performance of the RAG method and AUROC; especially when considering irrelevant and relevant documents. This indicates a bias of current UE methods towards RAG generations.


To assess this bias further, we report on average uncertainty values of these methods in  Tables~\ref{tab:uncertainty_value_rag_methods} and \ref{tab:uncertainty_value_rag_methods_1}. The results reveal that incorporating any context results in lower uncertainty values. Even the inclusion of irrelevant contexts, which do not enhance accuracy, leads to a significant reduction in uncertainty scores. This suggests that current UE methods produce lower uncertainty values in the RAG setup without adequately accounting for the relevance of the context. 


\begin{table}[t]
\centering \shrink
\setlength{\tabcolsep}{3pt}
\scriptsize
\begin{tabular}{c|ccc}
\hline
\textbf{UE} & \multicolumn{3}{c}{\textbf{PopQA}} \\ \hline
& \textbf{BM25} & \textbf{Contriever} & \textbf{$\text{Doc}^{+}$} \\
\hline\hline

\multicolumn{3}{l}{\textbf{Axiom 1:} Positively Consistent $\downarrow$} \\ \hline PE &\colorbox{green!10}{0.735 $\rightarrow$ 0.419~\textsuperscript{$\ast$}} &\colorbox{green!10}{0.735 $\rightarrow$ 0.408~\textsuperscript{$\ast$}} &\colorbox{green!10}{1.242 $\rightarrow$ 0.340~\textsuperscript{$\ast$}} \\SE &\colorbox{green!10}{3.781 $\rightarrow$ 3.205~\textsuperscript{$\ast$}} &\colorbox{green!10}{3.791 $\rightarrow$ 3.158~\textsuperscript{$\ast$}} &\colorbox{green!10}{4.682 $\rightarrow$ 3.113~\textsuperscript{$\ast$}} \\PE+M &\colorbox{green!10}{0.896 $\rightarrow$ 0.483~\textsuperscript{$\ast$}} &\colorbox{green!10}{0.881 $\rightarrow$ 0.458~\textsuperscript{$\ast$}} &\colorbox{green!10}{1.530 $\rightarrow$ 0.406~\textsuperscript{$\ast$}} \\SE+M &\colorbox{green!10}{4.102 $\rightarrow$ 3.286~\textsuperscript{$\ast$}} &\colorbox{green!10}{4.091 $\rightarrow$ 3.248~\textsuperscript{$\ast$}} &\colorbox{green!10}{5.146 $\rightarrow$ 3.173~\textsuperscript{$\ast$}} \\EigV &\colorbox{green!10}{1.951 $\rightarrow$ 1.166~\textsuperscript{$\ast$}} &\colorbox{green!10}{2.025 $\rightarrow$ 1.143~\textsuperscript{$\ast$}} &\colorbox{green!10}{4.074 $\rightarrow$ 1.078~\textsuperscript{$\ast$}} \\ECC &\colorbox{green!10}{0.417 $\rightarrow$ 0.110~\textsuperscript{$\ast$}} &\colorbox{green!10}{0.426 $\rightarrow$ 0.094~\textsuperscript{$\ast$}} &\colorbox{green!10}{0.710 $\rightarrow$ 0.055~\textsuperscript{$\ast$}} \\Deg &\colorbox{green!10}{0.220 $\rightarrow$ 0.048~\textsuperscript{$\ast$}} &\colorbox{green!10}{0.230 $\rightarrow$ 0.043~\textsuperscript{$\ast$}} &\colorbox{green!10}{0.496 $\rightarrow$ 0.022~\textsuperscript{$\ast$}} \\\hline\multicolumn{3}{l}{\textbf{Axiom 2:} Negatively Consistent $\uparrow$} \\ \hline PE &\colorbox{magenta!10}{1.068 $\rightarrow$ 0.746~\textsuperscript{ }} &\colorbox{magenta!10}{0.820 $\rightarrow$ 0.593~\textsuperscript{ }} &\colorbox{magenta!10}{1.083 $\rightarrow$ 0.597~\textsuperscript{ }} \\SE &\colorbox{magenta!20}{4.163 $\rightarrow$ 3.548~\textsuperscript{$\ast$}} &\colorbox{magenta!20}{4.104 $\rightarrow$ 3.381~\textsuperscript{$\ast$}} &\colorbox{magenta!10}{4.388 $\rightarrow$ 4.107~\textsuperscript{ }} \\PE+M &\colorbox{magenta!10}{1.309 $\rightarrow$ 0.844~\textsuperscript{ }} &\colorbox{magenta!10}{1.016 $\rightarrow$ 0.782~\textsuperscript{ }} &\colorbox{magenta!10}{1.328 $\rightarrow$ 0.684~\textsuperscript{ }} \\SE+M &\colorbox{magenta!20}{4.599 $\rightarrow$ 3.700~\textsuperscript{$\ast$}} &\colorbox{magenta!20}{4.481 $\rightarrow$ 3.610~\textsuperscript{$\ast$}} &\colorbox{magenta!10}{4.764 $\rightarrow$ 4.221~\textsuperscript{ }} \\EigV &\colorbox{magenta!20}{2.453 $\rightarrow$ 1.338~\textsuperscript{$\ast$}} &\colorbox{magenta!20}{2.088 $\rightarrow$ 1.274~\textsuperscript{$\ast$}} &\colorbox{magenta!10}{2.758 $\rightarrow$ 1.910~\textsuperscript{ }} \\ECC &\colorbox{magenta!20}{0.541 $\rightarrow$ 0.197~\textsuperscript{$\ast$}} &\colorbox{magenta!20}{0.477 $\rightarrow$ 0.152~\textsuperscript{$\ast$}} &\colorbox{magenta!10}{0.503 $\rightarrow$ 0.443~\textsuperscript{ }} \\Deg &\colorbox{magenta!20}{0.286 $\rightarrow$ 0.101~\textsuperscript{$\ast$}} &\colorbox{magenta!20}{0.228 $\rightarrow$ 0.073~\textsuperscript{$\ast$}} &\colorbox{magenta!10}{0.343 $\rightarrow$ 0.254~\textsuperscript{ }} \\\hline\multicolumn{3}{l}{\textbf{Axiom 3:} Positively Changed $\downarrow$} \\ \hline PE &\colorbox{green!10}{1.375 $\rightarrow$ 0.347~\textsuperscript{$\ast$}} &\colorbox{green!10}{1.416 $\rightarrow$ 0.298~\textsuperscript{$\ast$}} &\colorbox{green!10}{1.342 $\rightarrow$ 0.268~\textsuperscript{$\ast$}} \\SE &\colorbox{green!10}{4.889 $\rightarrow$ 3.015~\textsuperscript{$\ast$}} &\colorbox{green!10}{5.091 $\rightarrow$ 3.013~\textsuperscript{$\ast$}} &\colorbox{green!10}{4.884 $\rightarrow$ 3.051~\textsuperscript{$\ast$}} \\PE+M &\colorbox{green!10}{1.708 $\rightarrow$ 0.398~\textsuperscript{$\ast$}} &\colorbox{green!10}{1.735 $\rightarrow$ 0.374~\textsuperscript{$\ast$}} &\colorbox{green!10}{1.604 $\rightarrow$ 0.340~\textsuperscript{$\ast$}} \\SE+M &\colorbox{green!10}{5.514 $\rightarrow$ 3.072~\textsuperscript{$\ast$}} &\colorbox{green!10}{5.681 $\rightarrow$ 3.082~\textsuperscript{$\ast$}} &\colorbox{green!10}{5.379 $\rightarrow$ 3.099~\textsuperscript{$\ast$}} \\EigV &\colorbox{green!10}{4.131 $\rightarrow$ 1.139~\textsuperscript{$\ast$}} &\colorbox{green!10}{4.733 $\rightarrow$ 1.114~\textsuperscript{$\ast$}} &\colorbox{green!10}{4.449 $\rightarrow$ 1.102~\textsuperscript{$\ast$}} \\ECC &\colorbox{green!10}{0.790 $\rightarrow$ 0.085~\textsuperscript{$\ast$}} &\colorbox{green!10}{0.823 $\rightarrow$ 0.081~\textsuperscript{$\ast$}} &\colorbox{green!10}{0.780 $\rightarrow$ 0.072~\textsuperscript{$\ast$}} \\Deg &\colorbox{green!10}{0.547 $\rightarrow$ 0.044~\textsuperscript{$\ast$}} &\colorbox{green!10}{0.588 $\rightarrow$ 0.035~\textsuperscript{$\ast$}} &\colorbox{green!10}{0.544 $\rightarrow$ 0.032~\textsuperscript{$\ast$}} \\\hline\multicolumn{3}{l}{\textbf{Axiom 4:} Negatively Changed $\uparrow$} \\ \hline PE &\colorbox{magenta!10}{0.933 $\rightarrow$ 0.636~\textsuperscript{ }} &\colorbox{magenta!10}{1.006 $\rightarrow$ 0.558~\textsuperscript{ }} &\colorbox{magenta!10}{1.252 $\rightarrow$ 0.463~\textsuperscript{ }} \\SE &\colorbox{magenta!20}{4.152 $\rightarrow$ 3.552~\textsuperscript{$\ast$}} &\colorbox{magenta!20}{4.192 $\rightarrow$ 3.409~\textsuperscript{$\ast$}} &\colorbox{magenta!20}{4.830 $\rightarrow$ 3.690~\textsuperscript{$\ast$}} \\PE+M &\colorbox{magenta!20}{1.164 $\rightarrow$ 0.714~\textsuperscript{$\ast$}} &\colorbox{magenta!20}{1.298 $\rightarrow$ 0.748~\textsuperscript{$\ast$}} &\colorbox{magenta!10}{1.689 $\rightarrow$ 0.747~\textsuperscript{ }} \\SE+M &\colorbox{magenta!20}{4.553 $\rightarrow$ 3.690~\textsuperscript{$\ast$}} &\colorbox{magenta!20}{4.653 $\rightarrow$ 3.608~\textsuperscript{$\ast$}} &\colorbox{magenta!20}{5.381 $\rightarrow$ 4.007~\textsuperscript{$\ast$}} \\EigV &\colorbox{magenta!20}{2.593 $\rightarrow$ 1.449~\textsuperscript{$\ast$}} &\colorbox{magenta!20}{2.557 $\rightarrow$ 1.412~\textsuperscript{$\ast$}} &\colorbox{magenta!20}{3.567 $\rightarrow$ 1.449~\textsuperscript{$\ast$}} \\ECC &\colorbox{magenta!20}{0.540 $\rightarrow$ 0.262~\textsuperscript{$\ast$}} &\colorbox{magenta!20}{0.548 $\rightarrow$ 0.220~\textsuperscript{$\ast$}} &\colorbox{magenta!20}{0.707 $\rightarrow$ 0.237~\textsuperscript{$\ast$}} \\Deg &\colorbox{magenta!20}{0.320 $\rightarrow$ 0.128~\textsuperscript{$\ast$}} &\colorbox{magenta!20}{0.320 $\rightarrow$ 0.115~\textsuperscript{$\ast$}} &\colorbox{magenta!20}{0.463 $\rightarrow$ 0.140~\textsuperscript{$\ast$}} \\\hline

\end{tabular}
\caption{
Comparison of changes in average uncertainty values for Axioms 1--4 before (left) and after (right) applying RAG with Llama2-chat.
Colors \colorbox{green!10}{green} and \colorbox{magenta!20}{deep red} indicate significant changes aligning or conflicting with axioms, respectively. 
Color \colorbox{magenta!10}{shallow red} represents non-significant changes conflicting with axioms. Significance is marked by $\ast$.}
\label{tab:axioms_unc_changes_correctness_llama2_pqa}
\shrink
\shrink
\end{table}

\subsection{Axiomatic Evaluation}
The second research question (\textbf{RQ2}) investigates properties (i.e., axioms) of UE methods that guarantee optimal performance, and assesses how these axioms are satisfied by current UE methods. 
%
Tables~\ref{tab:axioms_unc_changes_correctness_llama2_pqa} and \ref{tab:axioms_unc_changes_correctness_llama2} present the change in the average uncertainty value of Llama2-chat, without and with RAG, for Axioms 1--4 using the \emph{Reference-based} implementation. 
The results indicate that Axioms 2 and 4 are largely unmet. 
Furthermore, MARS, although being a state-of-the-art white-box method, does not demonstrate improved axiom compliance.
Similar trends are observed for Mistral and other datasets (see Table~\ref{tab:axioms_rel_correctness_mistral}), underscoring the generalizability of these findings. Additionally, the \emph{Reference-free} implementation of axioms (Table~\ref{tab:axioms_rel_nli_llama}) strongly correlate with the \emph{Reference-based} findings, confirming that UE methods completely fail to satisfy Axioms 2 and 4. This further shows the reliability of reference-free implementation for axiomatic evaluation of UE methods.

To evaluate Axiom 5, we add irrelevant context (\(\text{\textit{Doc}}^{-}\)) for each query. Table~\ref{tab:axiom5_unc_changes_correctness_llama2} shows that only PE+M and SE+M partially satisfy Axiom 5 for Llama2.
For Mistral (Table~\ref{tab:Axiom_irrelevant_mistral}), all methods pass Axiom 5 for \textsc{PopQA} but not for the other datasets. These findings suggest that none of the existing UE methods fully satisfy Axiom 2, 4, and 5.
\begin{table}[t]
\centering \shrink
\setlength{\tabcolsep}{3pt}
\scriptsize
\begin{tabular}{c|ccc}
\hline
\textbf{Unc.} & \textbf{NQ-open} & \textbf{TriviaQA} & \textbf{PopQA} \\
\hline\hline

PE &
\colorbox{magenta!20}{2.072 $\rightarrow$ 2.248~\textsuperscript{$\ast$}} &
\colorbox{magenta!20}{0.872 $\rightarrow$ 1.155~\textsuperscript{$\ast$}} &
\colorbox{magenta!20}{0.897 $\rightarrow$ 0.909~\textsuperscript{$\ast$}} \\
SE &
\colorbox{magenta!20}{5.253 $\rightarrow$ 5.471~\textsuperscript{$\ast$}} &
\colorbox{magenta!20}{3.863 $\rightarrow$ 4.158~\textsuperscript{$\ast$}} &
\colorbox{magenta!20}{3.897 $\rightarrow$ 4.319~\textsuperscript{$\ast$}} \\
PE+M &
\colorbox{green!10}{4.791 $\rightarrow$ 4.805~\textsuperscript{ }} &
\colorbox{magenta!20}{1.415 $\rightarrow$ 1.699~\textsuperscript{$\ast$}} &
\colorbox{magenta!20}{1.031 $\rightarrow$ 1.130~\textsuperscript{$\ast$}} \\
SE+M &
\colorbox{green!10}{7.993 $\rightarrow$ 7.933~\textsuperscript{ }} &
\colorbox{magenta!20}{4.540 $\rightarrow$ 4.817~\textsuperscript{$\ast$}} &
\colorbox{green!10}{4.297 $\rightarrow$ 4.591~\textsuperscript{ }} \\
EigV &
\colorbox{magenta!20}{2.211 $\rightarrow$ 2.446~\textsuperscript{$\ast$}} &
\colorbox{magenta!20}{1.757 $\rightarrow$ 1.870~\textsuperscript{$\ast$}} &
\colorbox{green!10}{2.270 $\rightarrow$ 2.218~\textsuperscript{ }} \\
ECC &
\colorbox{magenta!20}{0.512 $\rightarrow$ 0.625~\textsuperscript{$\ast$}} &
\colorbox{magenta!20}{0.382 $\rightarrow$ 0.448~\textsuperscript{$\ast$}} &
\colorbox{green!10}{0.490 $\rightarrow$ 0.507~\textsuperscript{ }} \\
Deg &
\colorbox{magenta!20}{0.265 $\rightarrow$ 0.333~\textsuperscript{$\ast$}} &
\colorbox{magenta!20}{0.171 $\rightarrow$ 0.211~\textsuperscript{$\ast$}} &
\colorbox{green!10}{0.256 $\rightarrow$ 0.309~\textsuperscript{ }} \\ \hline

\end{tabular}
\caption{
Comparison of changes in average uncertainty values for Axiom 5 before (left) and after (right) applying RAG with Llama2-chat.
Color coding and significance markers follow those in Table~\ref{tab:axioms_unc_changes_correctness_llama2_pqa}.}
\shrink
\shrink
\label{tab:axiom5_unc_changes_correctness_llama2}  
\end{table}

\renewcommand{\arraystretch}{1.25}
\begin{table*}[t]
\centering 
\setlength{\tabcolsep}{3pt}
\scriptsize
\begin{tabular}{l|ccccc|ccccc|ccccc}
\hline
\textbf{UE} &
\multicolumn{5}{c}{\textbf{NQ-open}} & \multicolumn{5}{c}{\textbf{TriviaQA}} & \multicolumn{5}{c}{\textbf{PopQA}} \\ \hline
& \textbf{A1} (\%) & \textbf{A2} (\%) & \textbf{A3} (\%) & \textbf{A4} (\%) & \textbf{AUROC}
& \textbf{A1} (\%) & \textbf{A2} (\%) & \textbf{A3} (\%) & \textbf{A4} (\%) & \textbf{AUROC}
& \textbf{A1} (\%) & \textbf{A2} (\%) & \textbf{A3} (\%) & \textbf{A4} (\%) & \textbf{AUROC} \\ \hline\hline

PE   & 
60.19 & \underline{47.85} & 77.35 & 51.16 & 64.87 &
45.53 & 43.78 & 70.26 & 66.88 & 68.18 &
66.19 & \underline{42.46} & 87.57 & 38.17 & 61.59 \\

+CTI &
61.49 & 44.17 & 76.43 & 53.88 & 65.38 &
46.00 & 43.78 & 69.23 & 68.47 & 69.29 &
\underline{69.63} & 39.73 & 87.95 & 38.17 & 63.04 \\

+NLI &
66.02 & 47.24 & 77.57 & \underline{55.43} & 67.21 &
48.45 & 45.77 & 71.28 & 68.47 & 69.40 &
68.77 & 41.10 & 88.15 & \underline{41.22} & 63.09 \\

+MCH & 
\underline{76.05} & 37.42 & \underline{83.75} & 51.93 & \underline{69.85} &
\underline{51.36} & \underline{49.25} & \underline{74.10} & \underline{69.75} & \underline{71.92} &
69.34 & 39.73 & \underline{89.48} & 39.70 & \underline{64.31} \\ \hline

SE   &
77.35 & 33.75 & 91.53 & 36.05 & 67.49 &
50.14 & 35.82 & \underline{84.62} & 54.78 & 73.44 &
71.92 & 31.51 & 94.07 & 29.01 & 63.79 \\

+CTI & 
77.02 & 25.76 & 89.47 & 40.31 & 67.09 &
56.54 & 39.30 & 79.74 & 56.69 & 72.65 &
\underline{78.51} & 26.03 & 91.21 & 26.72 & 62.58 \\

+NLI & 
79.61 & \underline{40.49} & 86.72 & \underline{50.00} & 69.77 &
68.96 & 46.77 & 80.77 & 62.74 & 74.72 &
71.63 & \underline{38.36} & 92.73 & 41.22 & 67.86 \\

+MCH &
\underline{88.02} & 32.52 & \underline{91.53} & 46.90 & \textbf{\underline{75.88}} &
\underline{73.28} & \underline{49.75} & 82.82 & \underline{67.20} & \textbf{\underline{79.79}} &
77.94 & 31.51 & \underline{94.07} & \underline{41.22} & \textbf{\underline{72.49}} \\ \hline

EigV & 
65.37 & 12.88 & 88.56 & 24.42 & 63.94 &
37.16 & 24.38 & 86.15 & 39.17 & 70.00 &
55.30 & 6.85 & 92.93 & 20.61 & 62.42 \\

+CTI &
77.35 & 20.25 & 90.16 & 34.50 & 66.82 &
66.89 & 30.85 & 86.15 & 48.41 & 72.54 & 
80.80 & 19.18 & 93.50 & 29.77 & 61.51 \\

+NLI & 
80.91 & \underline{27.61} & 91.76 & \underline{35.27} & 69.44 &
60.21 & \underline{41.79} & 87.69 & 51.59 & 73.58 &
73.35 & \underline{35.62} & 95.60 & 38.17 & 67.60 \\

+MCH &
\underline{88.67} & 23.93 & \underline{93.82} & 34.88 & \underline{73.60} &
\underline{74.88} & 40.30 & \underline{90.00} & \underline{55.41} & \underline{78.34} &
\underline{83.09} & 24.66 & \underline{96.75} & \underline{32.82} & \underline{72.18} \\ \hline

ECC &
61.49 & 9.82 & 83.06 & 18.99 & 63.57 &
34.24 & 14.43 & 73.59 & 30.89 & 68.23 & 
52.44 & 6.84 & 87.38 & 18.32 & 62.06 \\

+CTI & 
75.73 & 23.31 & 87.18 & 37.98 & 67.37 &
65.47 & 31.84 & 77.69 & 53.19 & 69.92 & 
78.80 & 23.29 & 90.82 & 34.35 & 61.75 \\

+NLI & 
78.64 & \underline{32.52} & 87.18 & \underline{42.64} & 68.96 &
58.04 & \underline{42.79} & 77.44 & \underline{59.87} & 71.31 & 
71.35 & \underline{32.88} & 92.16 & \underline{42.75} & 66.44 \\

+MCH & 
\underline{86.08} & 26.99 & \underline{89.93} & 39.54 & \underline{71.81} &
\underline{72.44} & 41.29 & \underline{82.31} & 58.92 & \underline{74.94} & 
\underline{79.37} & 21.92 & \underline{94.84} & 35.87 & \underline{71.39} \\ \hline





\hline
\end{tabular}
\caption{Percentage of samples passing the axioms before and after calibration for Contriver with Llama2-chat. The results show that as the number of samples passing the axioms increases, the AUROC also improves. Bold values indicate the best performance for each dataset, while underlined values represent the best performance achieved by a UE method and its calibrated variants.}
\label{tab:axioms_calibration_llama2}
\shrink
\end{table*}

\begin{figure*}[t]
  \centering
  \includegraphics[width=0.98\textwidth]{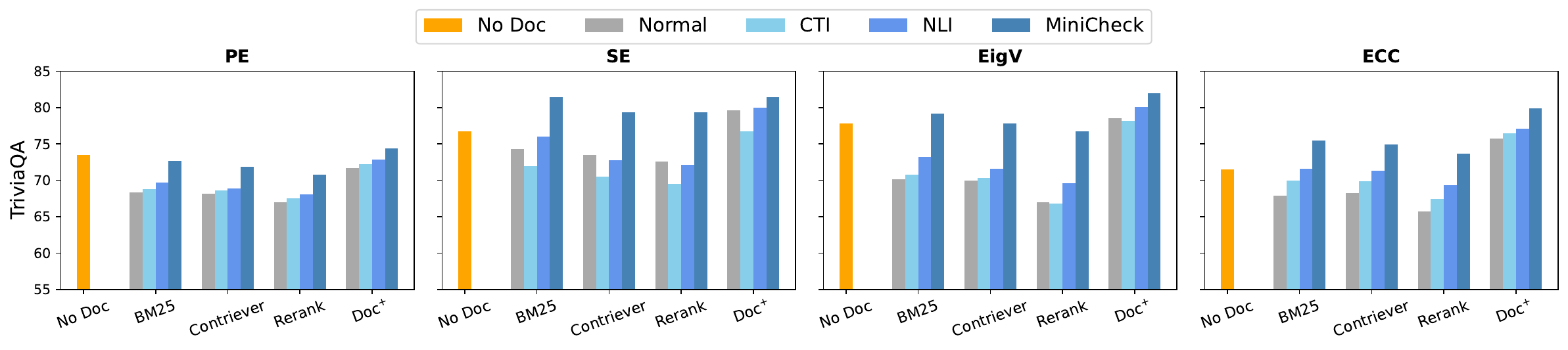}
  \shrink
  \caption{Comparison of AUROC between the no-RAG and calibrated RAG settings for Llama2-chat for TriviaQA. AUROC improves significantly, either surpassing the no-RAG setting or reducing the gap between them. 
  }
  \label{fig:auroc_cal_llama2}
  \shrink
\end{figure*}
\subsection{Axiomatic Calibration}
Our third research question (\textbf{RQ3}) examines how our axiomatic framework can lead to designing an optimal UE method. 
Tables~\ref{tab:axioms_calibration_llama2} and \ref{tab:axioms_sample_percentage_mistral} present AUROC and percentage of samples passing the axioms 1--4 before and after applying our calibration method. Axiom 5 is not assessed, as retrievers tend to retrieve relevant documents. We perform the experiments on four representative (and not cherry-picked) UE methods, as the results generalize to other methods as well.
The calibration function is implemented using the three models described in Section~\ref{sec:calibration_function:inst}, and Contriever is employed for RAG.
 
 The results show that calibration MiniCheck outperforms all implementations, improving percentages of all axioms for EigV and ECC and for most axioms in open-box methods. 
 Most importantly, the results show as the percentage of samples satisfying the axioms increases, the AUROC improves, showing the empirical validity of our axioms in improving  UE methods.
Moreover, Figures~\ref{fig:auroc_cal_llama2} and \ref{fig:auroc_cal_nq_popqa_llama2} show that after calibration, the RAG AUROC becomes comparable to or even better than the \textit{No Doc} baseline, suggesting that our calibration method successfully compensates for the inefficiencies of existing UE methods in RAG.

\section{Discussion and Conclusions}~\label{sec:conclusions}
\shrink

In this paper, we examined existing uncertainty estimation (UE) for the RAG setup and showed they systematically generated low uncertainty values in the RAG setup without considering the relevance of the given context to the query.
We further proposed an axiomatic evaluation framework for UE in the RAG setup and defined five formal constraints that a UE method should satisfy when processing both parametric and non-parametric knowledge. 
These axioms were empirically validated across multiple representative datasets, UE methods, and LLMs. Our results showed that none of the existing UE methods pass all the axiom, pinpointing the problem in these methods. We further derived a calibration function for adjusting UE methods in the RAG setup and improvements in both axiomatic evaluation and correlation with correctness.
Future work includes developing a UE method designed to naturally conform to the established axioms. Another direction is assessing these axioms in long-form responses and uncertainty-based applications, such as Active RAG.


\section*{Limitations}

\noindent
\textbf{Axiomatic Uncertainty Estimator.}
In this study, we evaluate existing uncertainty estimation (UE) methods within the RAG setup and delineate the optimal behaviors that these methods should exhibit. Although we introduce a calibration function in Section~\ref{sec:calibration_function}, it may be more effective to develop an axiomatic UE model that inherently adheres to the prescribed axioms. Future research should leverage these principles in the construction of UE methods.

\medskip 
\noindent\textbf{Comprehensiveness of the Axioms.} 
As discussed in Section~\ref{sec:axiom}, while our current axioms address most cases, additional axioms may be needed to cover all sample types. For example, consider when an LLM produces a different output after incorporating a context, and both the initial and augmented responses contradict the context. In this scenario, our framework does not specify a change in uncertainty, though supplementary axioms might address this gap. Future research should develop axioms for such cases.

\medskip
\noindent\textbf{Scalability and Applications.} 
We investigated the impact of incorporating context into the input prompt on uncertainty measures. However, we did not explore other input modalities, such as multi-modal RAG, or alternative response formats, such as long-form responses, each of which presents unique challenges. Furthermore, applications of uncertainty estimation, such as Adaptive RAG~\cite{Cheng24Unified, Tao24When}, hallucination detection~\cite{Geng24Survey}, reasoning monitoring~\cite{Yin24Reasoning}, and LLM-as-Judgment~\cite{Lee24Are, dietz2025llm}, fall outside the scope of this study. Future research should extend these findings to encompass diverse input types, response formats, and UE applications.


\section*{Acknowledgments}
This publication is part of the project LESSEN with project number NWA.1389.20.183 of the research program NWA ORC 2020/21 which is (partly) financed by the Dutch Research Council (NWO).

\bibliography{main}

\newpage
\appendix
\section*{Appendix}

\begin{figure*}[t]
  \centering
  \includegraphics[width=0.99\textwidth]{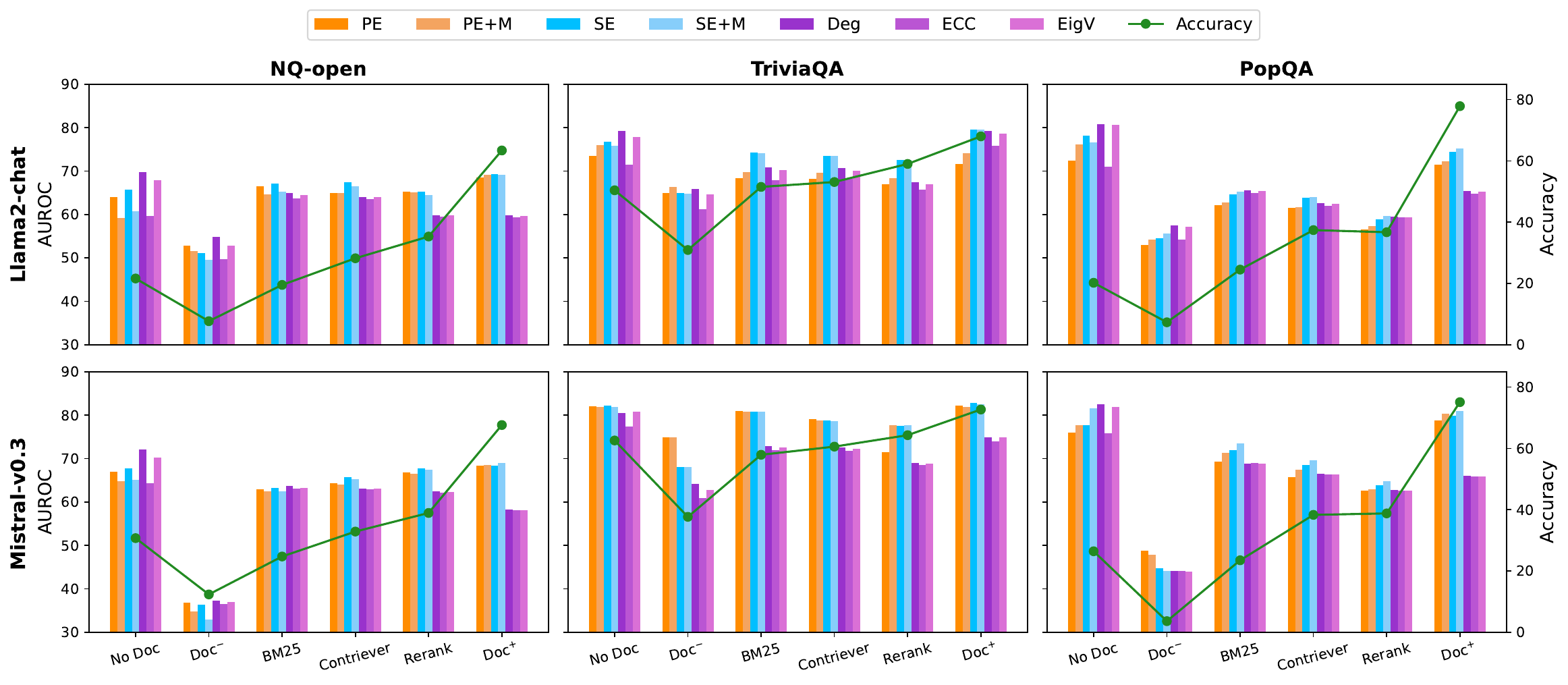}
  \shrink
  \caption{Comparison of AUROC between no-RAG and RAG settings.}
  \label{fig:auroc_all_1}
\end{figure*}
\begin{table*}[t]
\tiny
\centering
\setlength{\tabcolsep}{0.9pt}
\begin{tabular}{@{~}cc|cccccc|cccccc|cccccc}
\hline
\textbf{LM} & \textbf{Unc.} &
\multicolumn{6}{c}{\textbf{NQ-open}} & \multicolumn{6}{c}{\textbf{TriviaQA}}  & \multicolumn{6}{c}{\textbf{PopQA}} \\ \hline

\hline
& &
\textbf{$\text{No Doc}$} & \textbf{$\text{Doc}^{-}$} & \textbf{BM25} & \textbf{Cont.} & \textbf{ReRa.} & \textbf{$\text{Doc}^{+}$} &
\textbf{$\text{No Doc}$} & \textbf{$\text{Doc}^{-}$} & \textbf{BM25} & \textbf{Cont.} & \textbf{ReRa.} & \textbf{$\text{Doc}^{+}$} &
\textbf{$\text{No Doc}$} & \textbf{$\text{Doc}^{-}$} & \textbf{BM25} & \textbf{Cont.} & \textbf{ReRa.} & \textbf{$\text{Doc}^{+}$} \\
\hline\hline

\multirow{8}{*}{\rotatebox[origin=c]{90}{Llama2-chat}} &
PE   &
\colorbox{orange!50}{1.98~\textsuperscript{ }} &
\colorbox{orange!40}{1.92~\textsuperscript{ }} &
\colorbox{orange!30}{1.53~\textsuperscript{$\ast$}} &
\colorbox{orange!20}{1.41~\textsuperscript{$\ast$}} &
\colorbox{orange!10}{1.31~\textsuperscript{$\ast$}} &
\colorbox{orange!5}{1.19~\textsuperscript{$\ast$}} &
\colorbox{orange!40}{1.14~\textsuperscript{ }} &
\colorbox{orange!50}{1.42~\textsuperscript{$\ast$}} &
\colorbox{orange!30}{1.05~\textsuperscript{$\ast$}} &
\colorbox{orange!20}{1.03~\textsuperscript{ }} &
\colorbox{orange!5}{0.90~\textsuperscript{$\ast$}} &
\colorbox{orange!10}{0.96~\textsuperscript{$\ast$}} &
\colorbox{orange!50}{1.29~\textsuperscript{ }} &
\colorbox{orange!40}{1.11~\textsuperscript{$\ast$}} &
\colorbox{orange!30}{0.54~\textsuperscript{$\ast$}} &
\colorbox{orange!20}{0.46~\textsuperscript{$\ast$}} &
\colorbox{orange!10}{0.35~\textsuperscript{$\ast$}} &
\colorbox{orange!5}{0.34~\textsuperscript{$\ast$}} \\

& SE   & 
\colorbox{orange!50}{5.40~\textsuperscript{ }} &
\colorbox{orange!40}{5.09~\textsuperscript{$\ast$}} &
\colorbox{orange!30}{4.29~\textsuperscript{$\ast$}} &
\colorbox{orange!20}{4.20~\textsuperscript{$\ast$}} &
\colorbox{orange!10}{3.99~\textsuperscript{$\ast$}} &
\colorbox{orange!5}{3.88~\textsuperscript{$\ast$}} &
\colorbox{orange!40}{4.39~\textsuperscript{ }} &
\colorbox{orange!50}{4.48~\textsuperscript{$\ast$}} &
\colorbox{orange!30}{3.89~\textsuperscript{$\ast$}} &
\colorbox{orange!20}{3.85~\textsuperscript{$\ast$}} &
\colorbox{orange!5}{3.66~\textsuperscript{$\ast$}} &
\colorbox{orange!10}{3.73~\textsuperscript{$\ast$}} &
\colorbox{orange!50}{4.86~\textsuperscript{ }} &
\colorbox{orange!40}{4.37~\textsuperscript{$\ast$}} &
\colorbox{orange!30}{3.45~\textsuperscript{$\ast$}} &
\colorbox{orange!20}{3.30~\textsuperscript{$\ast$}} &
\colorbox{orange!5}{3.13~\textsuperscript{$\ast$}} &
\colorbox{orange!10}{3.19~\textsuperscript{$\ast$}} \\

& PEM &
\colorbox{orange!50}{3.90~\textsuperscript{ }} &
\colorbox{orange!40}{3.89~\textsuperscript{ }} &
\colorbox{orange!30}{3.33~\textsuperscript{$\ast$}} &
\colorbox{orange!20}{3.26~\textsuperscript{$\ast$}} &
\colorbox{orange!10}{3.12~\textsuperscript{$\ast$}} &
\colorbox{orange!5}{2.97~\textsuperscript{$\ast$}} &
\colorbox{orange!40}{1.74~\textsuperscript{ }} &
\colorbox{orange!50}{2.06~\textsuperscript{$\ast$}} &
\colorbox{orange!30}{1.64~\textsuperscript{ }} &
\colorbox{orange!30}{1.64~\textsuperscript{ }} &
\colorbox{orange!5}{1.46~\textsuperscript{$\ast$}} &
\colorbox{orange!10}{1.51~\textsuperscript{$\ast$}} &
\colorbox{orange!50}{1.59~\textsuperscript{ }} &
\colorbox{orange!40}{1.34~\textsuperscript{$\ast$}} &
\colorbox{orange!30}{0.65~\textsuperscript{$\ast$}} &
\colorbox{orange!20}{0.55~\textsuperscript{$\ast$}} &
\colorbox{orange!5}{0.44~\textsuperscript{$\ast$}} &
\colorbox{orange!10}{0.45~\textsuperscript{$\ast$}} \\

& SEM &
\colorbox{orange!50}{7.41~\textsuperscript{ }} &
\colorbox{orange!40}{6.93~\textsuperscript{$\ast$}} &
\colorbox{orange!30}{5.97~\textsuperscript{$\ast$}} &
\colorbox{orange!20}{5.88~\textsuperscript{$\ast$}} &
\colorbox{orange!10}{5.62~\textsuperscript{$\ast$}} &
\colorbox{orange!5}{5.49~\textsuperscript{$\ast$}} &
\colorbox{orange!40}{5.16~\textsuperscript{ }} &
\colorbox{orange!50}{5.21~\textsuperscript{ }} &
\colorbox{orange!30}{4.51~\textsuperscript{$\ast$}} &
\colorbox{orange!30}{4.50~\textsuperscript{$\ast$}} &
\colorbox{orange!5}{4.22~\textsuperscript{$\ast$}} &
\colorbox{orange!10}{4.30~\textsuperscript{$\ast$}} &
\colorbox{orange!50}{5.38~\textsuperscript{ }} &
\colorbox{orange!40}{4.71~\textsuperscript{$\ast$}} &
\colorbox{orange!30}{3.62~\textsuperscript{$\ast$}} &
\colorbox{orange!20}{3.43~\textsuperscript{$\ast$}} &
\colorbox{orange!5}{3.23~\textsuperscript{$\ast$}} &
\colorbox{orange!10}{3.27~\textsuperscript{$\ast$}} \\

& Deg & 
\colorbox{orange!50}{0.52~\textsuperscript{ }} &
\colorbox{orange!40}{0.36~\textsuperscript{$\ast$}} &
\colorbox{orange!30}{0.16~\textsuperscript{$\ast$}} &
\colorbox{orange!20}{0.13~\textsuperscript{$\ast$}} &
\colorbox{orange!10}{0.09~\textsuperscript{$\ast$}} &
\colorbox{orange!5}{0.07~\textsuperscript{$\ast$}} & 
\colorbox{orange!50}{0.31~\textsuperscript{ }} &
\colorbox{orange!40}{0.29~\textsuperscript{$\ast$}} &
\colorbox{orange!30}{0.17~\textsuperscript{$\ast$}} &
\colorbox{orange!20}{0.15~\textsuperscript{$\ast$}} &
\colorbox{orange!10}{0.11~\textsuperscript{$\ast$}} &
\colorbox{orange!5}{0.16~\textsuperscript{$\ast$}} &
\colorbox{orange!50}{0.52~\textsuperscript{ }} &
\colorbox{orange!40}{0.32~\textsuperscript{$\ast$}} &
\colorbox{orange!30}{0.12~\textsuperscript{$\ast$}} &
\colorbox{orange!20}{0.09~\textsuperscript{$\ast$}} &
\colorbox{orange!10}{0.06~\textsuperscript{$\ast$}} &
\colorbox{orange!5}{0.05~\textsuperscript{$\ast$}} \\

& ECC &
\colorbox{orange!50}{0.64~\textsuperscript{ }} &
\colorbox{orange!40}{0.60~\textsuperscript{$\ast$}} &
\colorbox{orange!30}{0.29~\textsuperscript{$\ast$}} &
\colorbox{orange!20}{0.23~\textsuperscript{$\ast$}} &
\colorbox{orange!10}{0.17~\textsuperscript{$\ast$}} &
\colorbox{orange!5}{0.14~\textsuperscript{$\ast$}} & 
\colorbox{orange!50}{0.56~\textsuperscript{ }} &
\colorbox{orange!40}{0.53~\textsuperscript{$\ast$}} &
\colorbox{orange!30}{0.33~\textsuperscript{$\ast$}} &
\colorbox{orange!10}{0.29~\textsuperscript{$\ast$}} &
\colorbox{orange!5}{0.23~\textsuperscript{$\ast$}} &
\colorbox{orange!20}{0.31~\textsuperscript{$\ast$}} &
\colorbox{orange!50}{0.71~\textsuperscript{ }} &
\colorbox{orange!40}{0.54~\textsuperscript{$\ast$}} &
\colorbox{orange!30}{0.22~\textsuperscript{$\ast$}} &
\colorbox{orange!20}{0.17~\textsuperscript{$\ast$}} &
\colorbox{orange!10}{0.12~\textsuperscript{$\ast$}} &
\colorbox{orange!5}{0.10~\textsuperscript{$\ast$}} \\

& EigV & 
\colorbox{orange!50}{3.06~\textsuperscript{ }} &
\colorbox{orange!40}{2.48~\textsuperscript{$\ast$}} &
\colorbox{orange!30}{1.57~\textsuperscript{$\ast$}} &
\colorbox{orange!20}{1.42~\textsuperscript{$\ast$}} &
\colorbox{orange!10}{1.28~\textsuperscript{$\ast$}} &
\colorbox{orange!5}{1.21~\textsuperscript{$\ast$}} &
\colorbox{orange!50}{2.52~\textsuperscript{ }} &
\colorbox{orange!40}{2.21~\textsuperscript{$\ast$}} &
\colorbox{orange!20}{1.65~\textsuperscript{$\ast$}} &
\colorbox{orange!10}{1.57~\textsuperscript{$\ast$}} &
\colorbox{orange!5}{1.41~\textsuperscript{$\ast$}} &
\colorbox{orange!30}{1.68~\textsuperscript{$\ast$}} &
\colorbox{orange!50}{4.25~\textsuperscript{ }} &
\colorbox{orange!40}{2.28~\textsuperscript{$\ast$}} &
\colorbox{orange!30}{1.42~\textsuperscript{$\ast$}} &
\colorbox{orange!20}{1.31~\textsuperscript{$\ast$}} &
\colorbox{orange!10}{1.18~\textsuperscript{$\ast$}} &
\colorbox{orange!5}{1.17~\textsuperscript{$\ast$}} \\

\hline
\multirow{8}{*}{\rotatebox[origin=c]{90}{Mistral-v0.3}} &
PE   &
\colorbox{orange!50}{1.98~\textsuperscript{ }} &
\colorbox{orange!5}{1.28~\textsuperscript{$\ast$}} &
\colorbox{orange!30}{1.40~\textsuperscript{$\ast$}} &
\colorbox{orange!40}{1.46~\textsuperscript{$\ast$}} &
\colorbox{orange!20}{1.39~\textsuperscript{$\ast$}} &
\colorbox{orange!10}{1.32~\textsuperscript{$\ast$}} &
\colorbox{orange!40}{0.96~\textsuperscript{ }} &
\colorbox{orange!50}{1.08~\textsuperscript{$\ast$}} &
\colorbox{orange!30}{0.83~\textsuperscript{$\ast$}} &
\colorbox{orange!20}{0.81~\textsuperscript{$\ast$}} &
\colorbox{orange!5}{0.72~\textsuperscript{$\ast$}} &
\colorbox{orange!10}{0.74~\textsuperscript{$\ast$}} & 
\colorbox{orange!50}{1.51~\textsuperscript{ }} &
\colorbox{orange!40}{0.94~\textsuperscript{$\ast$}} &
\colorbox{orange!30}{0.84~\textsuperscript{$\ast$}} &
\colorbox{orange!20}{0.69~\textsuperscript{$\ast$}} &
\colorbox{orange!10}{0.62~\textsuperscript{$\ast$}} &
\colorbox{orange!5}{0.51~\textsuperscript{$\ast$}} \\

& SE   &
\colorbox{orange!50}{5.61~\textsuperscript{ }} &
\colorbox{orange!40}{4.37~\textsuperscript{$\ast$}} &
\colorbox{orange!20}{4.32~\textsuperscript{$\ast$}} &
\colorbox{orange!30}{4.33~\textsuperscript{$\ast$}} &
\colorbox{orange!10}{4.19~\textsuperscript{$\ast$}} &
\colorbox{orange!5}{4.05~\textsuperscript{$\ast$}} &
\colorbox{orange!50}{4.29~\textsuperscript{ }} &
\colorbox{orange!40}{4.27~\textsuperscript{$\ast$}} &
\colorbox{orange!30}{3.76~\textsuperscript{$\ast$}} &
\colorbox{orange!20}{3.74~\textsuperscript{$\ast$}} &
\colorbox{orange!5}{3.57~\textsuperscript{$\ast$}} &
\colorbox{orange!10}{3.67~\textsuperscript{$\ast$}} &
\colorbox{orange!50}{5.66~\textsuperscript{ }} &
\colorbox{orange!40}{3.73~\textsuperscript{$\ast$}} &
\colorbox{orange!30}{3.68~\textsuperscript{$\ast$}} &
\colorbox{orange!20}{3.53~\textsuperscript{$\ast$}} &
\colorbox{orange!10}{3.41~\textsuperscript{$\ast$}} &
\colorbox{orange!5}{3.26~\textsuperscript{$\ast$}} \\

& PEM &
\colorbox{orange!50}{4.25~\textsuperscript{ }} &
\colorbox{orange!5}{2.51~\textsuperscript{$\ast$}}&
\colorbox{orange!10}{3.29~\textsuperscript{$\ast$}} &
\colorbox{orange!40}{3.61~\textsuperscript{$\ast$}} &
\colorbox{orange!30}{3.48~\textsuperscript{$\ast$}} &
\colorbox{orange!20}{3.36~\textsuperscript{$\ast$}} & 
\colorbox{orange!40}{1.73~\textsuperscript{ }} &
\colorbox{orange!50}{1.88~\textsuperscript{$\ast$}} &
\colorbox{orange!20}{1.51~\textsuperscript{$\ast$}} &
\colorbox{orange!30}{1.54~\textsuperscript{$\ast$}} &
\colorbox{orange!5}{1.36~\textsuperscript{$\ast$}} &
\colorbox{orange!10}{1.41~\textsuperscript{$\ast$}} &
\colorbox{orange!50}{2.35~\textsuperscript{ }} &
\colorbox{orange!40}{1.42~\textsuperscript{$\ast$}}& 
\colorbox{orange!30}{1.26~\textsuperscript{$\ast$}} & 
\colorbox{orange!20}{1.05~\textsuperscript{$\ast$}} & 
\colorbox{orange!10}{0.92~\textsuperscript{$\ast$}} & 
\colorbox{orange!5}{0.80~\textsuperscript{$\ast$}} \\

& SEM &
\colorbox{orange!50}{7.65~\textsuperscript{ }} &
\colorbox{orange!5}{5.42~\textsuperscript{$\ast$}}&
\colorbox{orange!20}{5.94~\textsuperscript{$\ast$}} &
\colorbox{orange!40}{6.19~\textsuperscript{$\ast$}} &
\colorbox{orange!30}{6.01~\textsuperscript{$\ast$}} &
\colorbox{orange!10}{5.85~\textsuperscript{$\ast$}} &
\colorbox{orange!50}{4.99~\textsuperscript{ }} &
\colorbox{orange!40}{4.98~\textsuperscript{$\ast$}} &
\colorbox{orange!20}{4.35~\textsuperscript{$\ast$}} &
\colorbox{orange!30}{4.37~\textsuperscript{$\ast$}} &
\colorbox{orange!5}{4.12~\textsuperscript{$\ast$}} &
\colorbox{orange!10}{4.27~\textsuperscript{$\ast$}} & 
\colorbox{orange!50}{6.47~\textsuperscript{ }} &
\colorbox{orange!40}{4.05~\textsuperscript{$\ast$}} &
\colorbox{orange!30}{3.98~\textsuperscript{$\ast$}} &
\colorbox{orange!20}{3.77~\textsuperscript{$\ast$}} &
\colorbox{orange!10}{3.60~\textsuperscript{$\ast$}} &
\colorbox{orange!5}{3.45~\textsuperscript{$\ast$}} \\

& Deg  &
\colorbox{orange!50}{0.37~\textsuperscript{ }} &
\colorbox{orange!40}{0.16~\textsuperscript{$\ast$}}&
\colorbox{orange!30}{0.13~\textsuperscript{$\ast$}} &
\colorbox{orange!20}{0.10~\textsuperscript{$\ast$}} &
\colorbox{orange!10}{0.07~\textsuperscript{$\ast$}} &
\colorbox{orange!5}{0.05~\textsuperscript{$\ast$}} &
\colorbox{orange!50}{0.20~\textsuperscript{ }} &
\colorbox{orange!30}{0.18~\textsuperscript{$\ast$}} &
\colorbox{orange!20}{0.10~\textsuperscript{$\ast$}} &
\colorbox{orange!20}{0.10~\textsuperscript{$\ast$}} &
\colorbox{orange!5}{0.07~\textsuperscript{$\ast$}} &
\colorbox{orange!40}{0.19~\textsuperscript{$\ast$}} &
\colorbox{orange!50}{0.48~\textsuperscript{ }} &
\colorbox{orange!20}{0.05~\textsuperscript{$\ast$}} &
\colorbox{orange!40}{0.07~\textsuperscript{$\ast$}} &
\colorbox{orange!30}{0.06~\textsuperscript{$\ast$}} &
\colorbox{orange!20}{0.05~\textsuperscript{$\ast$}} &
\colorbox{orange!5}{0.03~\textsuperscript{$\ast$}} \\

& ECC  &
\colorbox{orange!50}{0.54~\textsuperscript{ }} &
\colorbox{orange!40}{0.20~\textsuperscript{$\ast$}}&
\colorbox{orange!30}{0.18~\textsuperscript{$\ast$}} &
\colorbox{orange!20}{0.15~\textsuperscript{$\ast$}} &
\colorbox{orange!10}{0.11~\textsuperscript{$\ast$}} &
\colorbox{orange!5}{0.08~\textsuperscript{$\ast$}} &
\colorbox{orange!50}{0.37~\textsuperscript{ }} &
\colorbox{orange!40}{0.32~\textsuperscript{$\ast$}} &
\colorbox{orange!20}{0.17~\textsuperscript{$\ast$}} &
\colorbox{orange!30}{0.19~\textsuperscript{$\ast$}} &
\colorbox{orange!5}{0.13~\textsuperscript{$\ast$}} &
\colorbox{orange!20}{0.17~\textsuperscript{$\ast$}} &
\colorbox{orange!50}{0.68~\textsuperscript{ }} &
\colorbox{orange!10}{0.03~\textsuperscript{$\ast$}}&
\colorbox{orange!40}{0.08~\textsuperscript{$\ast$}} &
\colorbox{orange!40}{0.08~\textsuperscript{$\ast$}} &
\colorbox{orange!20}{0.05~\textsuperscript{$\ast$}} &
\colorbox{orange!5}{0.04~\textsuperscript{$\ast$}} \\

& EigV &
\colorbox{orange!50}{2.83~\textsuperscript{ }} &
\colorbox{orange!40}{1.49~\textsuperscript{$\ast$}}&
\colorbox{orange!30}{1.40~\textsuperscript{$\ast$}} &
\colorbox{orange!10}{1.32~\textsuperscript{$\ast$}} &
\colorbox{orange!5}{1.23~\textsuperscript{$\ast$}} &
\colorbox{orange!20}{1.37~\textsuperscript{$\ast$}} & 
\colorbox{orange!50}{2.04~\textsuperscript{ }} &
\colorbox{orange!40}{1.65~\textsuperscript{$\ast$}} &
\colorbox{orange!10}{1.36~\textsuperscript{$\ast$}} &
\colorbox{orange!20}{1.39~\textsuperscript{$\ast$}} &
\colorbox{orange!5}{1.25~\textsuperscript{$\ast$}} &
\colorbox{orange!30}{1.41~\textsuperscript{$\ast$}} &
\colorbox{orange!50}{4.18~\textsuperscript{ }} &
\colorbox{orange!10}{1.08~\textsuperscript{$\ast$}}&
\colorbox{orange!30}{1.16~\textsuperscript{$\ast$}} &
\colorbox{orange!40}{1.17~\textsuperscript{$\ast$}} &
\colorbox{orange!20}{1.11~\textsuperscript{$\ast$}} &
\colorbox{orange!10}{1.08~\textsuperscript{$\ast$}} \\


\hline
\end{tabular}
\shrink
\caption{Average uncertainty values for various settings. Lighter colors indicate lower uncertainty. Statistically significant differences are compared to \textit{No Doc} are marked with $\ast$.}
\label{tab:uncertainty_value_rag_methods_1}
\shrink
\end{table*}

\begin{table*}[t]
\centering
\setlength{\tabcolsep}{0.0pt}
\tiny
\begin{tabular}{c|ccc|ccc|ccc}
\hline
\textbf{UE} &
\multicolumn{3}{c}{\textbf{NQ-open}} & \multicolumn{3}{c}{\textbf{TriviaQA}} & \multicolumn{3}{c}{\textbf{PopQA}} \\ \hline %
& \textbf{BM25} & \textbf{Contriever} & \textbf{$\text{Doc}^{+}$} & 
\textbf{BM25} & \textbf{Contriever} & \textbf{$\text{Doc}^{+}$} & 
\textbf{BM25} & \textbf{Contriever} & \textbf{$\text{Doc}^{+}$} \\
\hline\hline

\multicolumn{6}{l}{\textbf{Axiom 1:} Positively Consistent $\downarrow$} \\ \hline PE &\colorbox{green!10}{1.445 $\rightarrow$ 1.194~\textsuperscript{$\ast$}} &\colorbox{green!10}{1.535 $\rightarrow$ 1.216~\textsuperscript{$\ast$}} &\colorbox{green!10}{1.549 $\rightarrow$ 1.159~\textsuperscript{$\ast$}} &\colorbox{magenta!20}{0.700 $\rightarrow$ 0.753~\textsuperscript{$\ast$}} &\colorbox{magenta!20}{0.718 $\rightarrow$ 0.743~\textsuperscript{$\ast$}} &\colorbox{green!10}{0.731 $\rightarrow$ 0.724~\textsuperscript{$\ast$}} &\colorbox{green!10}{0.735 $\rightarrow$ 0.419~\textsuperscript{$\ast$}} &\colorbox{green!10}{0.735 $\rightarrow$ 0.408~\textsuperscript{$\ast$}} &\colorbox{green!10}{1.242 $\rightarrow$ 0.340~\textsuperscript{$\ast$}} \\SE &\colorbox{green!10}{4.656 $\rightarrow$ 3.933~\textsuperscript{$\ast$}} &\colorbox{green!10}{4.756 $\rightarrow$ 3.907~\textsuperscript{$\ast$}} &\colorbox{green!10}{4.800 $\rightarrow$ 3.823~\textsuperscript{$\ast$}} &\colorbox{magenta!5}{3.644 $\rightarrow$ 3.412~\textsuperscript{ }} &\colorbox{green!10}{3.664 $\rightarrow$ 3.424~\textsuperscript{$\ast$}} &\colorbox{green!10}{3.738 $\rightarrow$ 3.388~\textsuperscript{$\ast$}} &\colorbox{green!10}{3.781 $\rightarrow$ 3.205~\textsuperscript{$\ast$}} &\colorbox{green!10}{3.791 $\rightarrow$ 3.158~\textsuperscript{$\ast$}} &\colorbox{green!10}{4.682 $\rightarrow$ 3.113~\textsuperscript{$\ast$}} \\PE+M &\colorbox{magenta!5}{3.389 $\rightarrow$ 3.124~\textsuperscript{ }} &\colorbox{green!10}{3.412 $\rightarrow$ 3.052~\textsuperscript{$\ast$}} &\colorbox{green!10}{3.437 $\rightarrow$ 3.069~\textsuperscript{$\ast$}} &\colorbox{magenta!20}{1.051 $\rightarrow$ 1.110~\textsuperscript{$\ast$}} &\colorbox{magenta!20}{1.131 $\rightarrow$ 1.178~\textsuperscript{$\ast$}} &\colorbox{magenta!5}{1.141 $\rightarrow$ 1.120~\textsuperscript{ }} &\colorbox{green!10}{0.896 $\rightarrow$ 0.483~\textsuperscript{$\ast$}} &\colorbox{green!10}{0.881 $\rightarrow$ 0.458~\textsuperscript{$\ast$}} &\colorbox{green!10}{1.530 $\rightarrow$ 0.406~\textsuperscript{$\ast$}} \\SE+M &\colorbox{green!10}{6.640 $\rightarrow$ 5.778~\textsuperscript{$\ast$}} &\colorbox{green!10}{6.705 $\rightarrow$ 5.632~\textsuperscript{$\ast$}} &\colorbox{green!10}{6.740 $\rightarrow$ 5.667~\textsuperscript{$\ast$}} &\colorbox{green!10}{4.142 $\rightarrow$ 3.832~\textsuperscript{$\ast$}} &\colorbox{green!10}{4.212 $\rightarrow$ 3.898~\textsuperscript{$\ast$}} &\colorbox{green!10}{4.293 $\rightarrow$ 3.824~\textsuperscript{$\ast$}} &\colorbox{green!10}{4.102 $\rightarrow$ 3.286~\textsuperscript{$\ast$}} &\colorbox{green!10}{4.091 $\rightarrow$ 3.248~\textsuperscript{$\ast$}} &\colorbox{green!10}{5.146 $\rightarrow$ 3.173~\textsuperscript{$\ast$}} \\EigV &\colorbox{green!10}{2.030 $\rightarrow$ 1.270~\textsuperscript{$\ast$}} &\colorbox{green!10}{2.129 $\rightarrow$ 1.189~\textsuperscript{$\ast$}} &\colorbox{green!10}{2.166 $\rightarrow$ 1.112~\textsuperscript{$\ast$}} &\colorbox{green!10}{1.622 $\rightarrow$ 1.318~\textsuperscript{$\ast$}} &\colorbox{green!10}{1.617 $\rightarrow$ 1.234~\textsuperscript{$\ast$}} &\colorbox{green!10}{1.679 $\rightarrow$ 1.254~\textsuperscript{$\ast$}} &\colorbox{green!10}{1.951 $\rightarrow$ 1.166~\textsuperscript{$\ast$}} &\colorbox{green!10}{2.025 $\rightarrow$ 1.143~\textsuperscript{$\ast$}} &\colorbox{green!10}{4.074 $\rightarrow$ 1.078~\textsuperscript{$\ast$}} \\ECC &\colorbox{green!10}{0.479 $\rightarrow$ 0.149~\textsuperscript{$\ast$}} &\colorbox{green!10}{0.538 $\rightarrow$ 0.120~\textsuperscript{$\ast$}} &\colorbox{green!10}{0.557 $\rightarrow$ 0.071~\textsuperscript{$\ast$}} &\colorbox{green!10}{0.346 $\rightarrow$ 0.228~\textsuperscript{$\ast$}} &\colorbox{green!10}{0.338 $\rightarrow$ 0.169~\textsuperscript{$\ast$}} &\colorbox{green!10}{0.367 $\rightarrow$ 0.180~\textsuperscript{$\ast$}} &\colorbox{green!10}{0.417 $\rightarrow$ 0.110~\textsuperscript{$\ast$}} &\colorbox{green!10}{0.426 $\rightarrow$ 0.094~\textsuperscript{$\ast$}} &\colorbox{green!10}{0.710 $\rightarrow$ 0.055~\textsuperscript{$\ast$}} \\Deg &\colorbox{green!10}{0.227 $\rightarrow$ 0.084~\textsuperscript{$\ast$}} &\colorbox{green!10}{0.262 $\rightarrow$ 0.061~\textsuperscript{$\ast$}} &\colorbox{green!10}{0.270 $\rightarrow$ 0.035~\textsuperscript{$\ast$}} &\colorbox{green!10}{0.144 $\rightarrow$ 0.087~\textsuperscript{$\ast$}} &\colorbox{green!10}{0.142 $\rightarrow$ 0.066~\textsuperscript{$\ast$}} &\colorbox{green!10}{0.155 $\rightarrow$ 0.067~\textsuperscript{$\ast$}} &\colorbox{green!10}{0.220 $\rightarrow$ 0.048~\textsuperscript{$\ast$}} &\colorbox{green!10}{0.230 $\rightarrow$ 0.043~\textsuperscript{$\ast$}} &\colorbox{green!10}{0.496 $\rightarrow$ 0.022~\textsuperscript{$\ast$}} \\\hline\multicolumn{6}{l}{\textbf{Axiom 2:} Negatively Consistent $\uparrow$} \\ \hline PE &\colorbox{magenta!10}{2.317 $\rightarrow$ 2.261~\textsuperscript{ }} &\colorbox{magenta!10}{2.230 $\rightarrow$ 2.153~\textsuperscript{ }} &\colorbox{magenta!10}{2.232 $\rightarrow$ 2.194~\textsuperscript{ }} &\colorbox{magenta!10}{1.543 $\rightarrow$ 1.478~\textsuperscript{ }} &\colorbox{magenta!10}{1.534 $\rightarrow$ 1.438~\textsuperscript{ }} &\colorbox{magenta!5}{1.495 $\rightarrow$ 1.528~\textsuperscript{ }} &\colorbox{magenta!10}{1.068 $\rightarrow$ 0.746~\textsuperscript{ }} &\colorbox{magenta!10}{0.820 $\rightarrow$ 0.593~\textsuperscript{ }} &\colorbox{magenta!10}{1.083 $\rightarrow$ 0.597~\textsuperscript{ }} \\SE &\colorbox{magenta!20}{5.626 $\rightarrow$ 4.989~\textsuperscript{$\ast$}} &\colorbox{magenta!20}{5.515 $\rightarrow$ 4.848~\textsuperscript{$\ast$}} &\colorbox{magenta!20}{5.572 $\rightarrow$ 4.841~\textsuperscript{$\ast$}} &\colorbox{magenta!10}{4.715 $\rightarrow$ 4.460~\textsuperscript{ }} &\colorbox{magenta!20}{4.672 $\rightarrow$ 4.291~\textsuperscript{$\ast$}} &\colorbox{magenta!20}{4.897 $\rightarrow$ 4.638~\textsuperscript{$\ast$}} &\colorbox{magenta!20}{4.163 $\rightarrow$ 3.548~\textsuperscript{$\ast$}} &\colorbox{magenta!20}{4.104 $\rightarrow$ 3.381~\textsuperscript{$\ast$}} &\colorbox{magenta!10}{4.388 $\rightarrow$ 4.107~\textsuperscript{ }} \\PE+M &\colorbox{magenta!20}{5.284 $\rightarrow$ 4.891~\textsuperscript{$\ast$}} &\colorbox{magenta!10}{5.052 $\rightarrow$ 4.904~\textsuperscript{ }} &\colorbox{magenta!10}{5.665 $\rightarrow$ 5.652~\textsuperscript{ }} &\colorbox{magenta!10}{2.716 $\rightarrow$ 2.633~\textsuperscript{ }} &\colorbox{magenta!10}{2.381 $\rightarrow$ 2.249~\textsuperscript{ }} &\colorbox{magenta!5}{2.594 $\rightarrow$ 2.597~\textsuperscript{ }} &\colorbox{magenta!10}{1.309 $\rightarrow$ 0.844~\textsuperscript{ }} &\colorbox{magenta!10}{1.016 $\rightarrow$ 0.782~\textsuperscript{ }} &\colorbox{magenta!10}{1.328 $\rightarrow$ 0.684~\textsuperscript{ }} \\SE+M &\colorbox{magenta!20}{8.566 $\rightarrow$ 7.579~\textsuperscript{$\ast$}} &\colorbox{magenta!20}{8.377 $\rightarrow$ 7.471~\textsuperscript{$\ast$}} &\colorbox{magenta!20}{8.914 $\rightarrow$ 7.962~\textsuperscript{$\ast$}} &\colorbox{magenta!20}{5.978 $\rightarrow$ 5.521~\textsuperscript{$\ast$}} &\colorbox{magenta!20}{5.737 $\rightarrow$ 5.170~\textsuperscript{$\ast$}} &\colorbox{magenta!20}{6.109 $\rightarrow$ 5.733~\textsuperscript{$\ast$}} &\colorbox{magenta!20}{4.599 $\rightarrow$ 3.700~\textsuperscript{$\ast$}} &\colorbox{magenta!20}{4.481 $\rightarrow$ 3.610~\textsuperscript{$\ast$}} &\colorbox{magenta!10}{4.764 $\rightarrow$ 4.221~\textsuperscript{ }} \\EigV &\colorbox{magenta!20}{2.410 $\rightarrow$ 1.694~\textsuperscript{$\ast$}} &\colorbox{magenta!20}{2.454 $\rightarrow$ 1.375~\textsuperscript{$\ast$}} &\colorbox{magenta!20}{2.340 $\rightarrow$ 1.216~\textsuperscript{$\ast$}} &\colorbox{magenta!20}{2.147 $\rightarrow$ 1.802~\textsuperscript{$\ast$}} &\colorbox{magenta!20}{2.271 $\rightarrow$ 1.700~\textsuperscript{$\ast$}} &\colorbox{magenta!10}{2.654 $\rightarrow$ 2.508~\textsuperscript{ }} &\colorbox{magenta!20}{2.453 $\rightarrow$ 1.338~\textsuperscript{$\ast$}} &\colorbox{magenta!20}{2.088 $\rightarrow$ 1.274~\textsuperscript{$\ast$}} &\colorbox{magenta!10}{2.758 $\rightarrow$ 1.910~\textsuperscript{ }} \\ECC &\colorbox{magenta!20}{0.564 $\rightarrow$ 0.302~\textsuperscript{$\ast$}} &\colorbox{magenta!20}{0.600 $\rightarrow$ 0.240~\textsuperscript{$\ast$}} &\colorbox{magenta!20}{0.542 $\rightarrow$ 0.166~\textsuperscript{$\ast$}} &\colorbox{magenta!20}{0.554 $\rightarrow$ 0.382~\textsuperscript{$\ast$}} &\colorbox{magenta!20}{0.561 $\rightarrow$ 0.331~\textsuperscript{$\ast$}} &\colorbox{magenta!10}{0.617 $\rightarrow$ 0.600~\textsuperscript{ }} &\colorbox{magenta!20}{0.541 $\rightarrow$ 0.197~\textsuperscript{$\ast$}} &\colorbox{magenta!20}{0.477 $\rightarrow$ 0.152~\textsuperscript{$\ast$}} &\colorbox{magenta!10}{0.503 $\rightarrow$ 0.443~\textsuperscript{ }} \\Deg &\colorbox{magenta!20}{0.304 $\rightarrow$ 0.172~\textsuperscript{$\ast$}} &\colorbox{magenta!20}{0.314 $\rightarrow$ 0.113~\textsuperscript{$\ast$}} &\colorbox{magenta!20}{0.299 $\rightarrow$ 0.069~\textsuperscript{$\ast$}} &\colorbox{magenta!20}{0.274 $\rightarrow$ 0.194~\textsuperscript{$\ast$}} &\colorbox{magenta!20}{0.294 $\rightarrow$ 0.186~\textsuperscript{$\ast$}} &\colorbox{magenta!10}{0.353 $\rightarrow$ 0.325~\textsuperscript{ }} &\colorbox{magenta!20}{0.286 $\rightarrow$ 0.101~\textsuperscript{$\ast$}} &\colorbox{magenta!20}{0.228 $\rightarrow$ 0.073~\textsuperscript{$\ast$}} &\colorbox{magenta!10}{0.343 $\rightarrow$ 0.254~\textsuperscript{ }} \\\hline\multicolumn{6}{l}{\textbf{Axiom 3:} Positively Changed $\downarrow$} \\ \hline PE &\colorbox{green!10}{2.113 $\rightarrow$ 0.909~\textsuperscript{$\ast$}} &\colorbox{green!10}{1.989 $\rightarrow$ 0.939~\textsuperscript{$\ast$}} &\colorbox{green!10}{2.006 $\rightarrow$ 0.847~\textsuperscript{$\ast$}} &\colorbox{green!10}{1.481 $\rightarrow$ 0.665~\textsuperscript{$\ast$}} &\colorbox{green!10}{1.413 $\rightarrow$ 0.702~\textsuperscript{$\ast$}} &\colorbox{green!10}{1.403 $\rightarrow$ 0.653~\textsuperscript{$\ast$}} &\colorbox{green!10}{1.375 $\rightarrow$ 0.347~\textsuperscript{$\ast$}} &\colorbox{green!10}{1.416 $\rightarrow$ 0.298~\textsuperscript{$\ast$}} &\colorbox{green!10}{1.342 $\rightarrow$ 0.268~\textsuperscript{$\ast$}} \\SE &\colorbox{green!10}{5.606 $\rightarrow$ 3.589~\textsuperscript{$\ast$}} &\colorbox{green!10}{5.459 $\rightarrow$ 3.589~\textsuperscript{$\ast$}} &\colorbox{green!10}{5.500 $\rightarrow$ 3.544~\textsuperscript{$\ast$}} &\colorbox{green!10}{4.970 $\rightarrow$ 3.347~\textsuperscript{$\ast$}} &\colorbox{green!10}{4.966 $\rightarrow$ 3.469~\textsuperscript{$\ast$}} &\colorbox{green!10}{4.972 $\rightarrow$ 3.287~\textsuperscript{$\ast$}} &\colorbox{green!10}{4.889 $\rightarrow$ 3.015~\textsuperscript{$\ast$}} &\colorbox{green!10}{5.091 $\rightarrow$ 3.013~\textsuperscript{$\ast$}} &\colorbox{green!10}{4.884 $\rightarrow$ 3.051~\textsuperscript{$\ast$}} \\PE+M &\colorbox{green!10}{3.479 $\rightarrow$ 2.056~\textsuperscript{$\ast$}} &\colorbox{green!10}{3.420 $\rightarrow$ 1.991~\textsuperscript{$\ast$}} &\colorbox{green!10}{3.416 $\rightarrow$ 2.012~\textsuperscript{$\ast$}} &\colorbox{green!10}{2.001 $\rightarrow$ 0.917~\textsuperscript{$\ast$}} &\colorbox{green!10}{2.026 $\rightarrow$ 1.020~\textsuperscript{$\ast$}} &\colorbox{green!10}{1.930 $\rightarrow$ 0.938~\textsuperscript{$\ast$}} &\colorbox{green!10}{1.708 $\rightarrow$ 0.398~\textsuperscript{$\ast$}} &\colorbox{green!10}{1.735 $\rightarrow$ 0.374~\textsuperscript{$\ast$}} &\colorbox{green!10}{1.604 $\rightarrow$ 0.340~\textsuperscript{$\ast$}} \\SE+M &\colorbox{green!10}{7.268 $\rightarrow$ 4.703~\textsuperscript{$\ast$}} &\colorbox{green!10}{7.069 $\rightarrow$ 4.616~\textsuperscript{$\ast$}} &\colorbox{green!10}{7.101 $\rightarrow$ 4.637~\textsuperscript{$\ast$}} &\colorbox{green!10}{5.790 $\rightarrow$ 3.648~\textsuperscript{$\ast$}} &\colorbox{green!10}{5.804 $\rightarrow$ 3.825~\textsuperscript{$\ast$}} &\colorbox{green!10}{5.760 $\rightarrow$ 3.579~\textsuperscript{$\ast$}} &\colorbox{green!10}{5.514 $\rightarrow$ 3.072~\textsuperscript{$\ast$}} &\colorbox{green!10}{5.681 $\rightarrow$ 3.082~\textsuperscript{$\ast$}} &\colorbox{green!10}{5.379 $\rightarrow$ 3.099~\textsuperscript{$\ast$}} \\EigV &\colorbox{green!10}{3.692 $\rightarrow$ 1.220~\textsuperscript{$\ast$}} &\colorbox{green!10}{3.561 $\rightarrow$ 1.182~\textsuperscript{$\ast$}} &\colorbox{green!10}{3.551 $\rightarrow$ 1.159~\textsuperscript{$\ast$}} &\colorbox{green!10}{3.588 $\rightarrow$ 1.245~\textsuperscript{$\ast$}} &\colorbox{green!10}{3.625 $\rightarrow$ 1.346~\textsuperscript{$\ast$}} &\colorbox{green!10}{3.650 $\rightarrow$ 1.277~\textsuperscript{$\ast$}} &\colorbox{green!10}{4.131 $\rightarrow$ 1.139~\textsuperscript{$\ast$}} &\colorbox{green!10}{4.733 $\rightarrow$ 1.114~\textsuperscript{$\ast$}} &\colorbox{green!10}{4.449 $\rightarrow$ 1.102~\textsuperscript{$\ast$}} \\ECC &\colorbox{green!10}{0.756 $\rightarrow$ 0.144~\textsuperscript{$\ast$}} &\colorbox{green!10}{0.701 $\rightarrow$ 0.111~\textsuperscript{$\ast$}} &\colorbox{green!10}{0.714 $\rightarrow$ 0.115~\textsuperscript{$\ast$}} &\colorbox{green!10}{0.801 $\rightarrow$ 0.163~\textsuperscript{$\ast$}} &\colorbox{green!10}{0.807 $\rightarrow$ 0.218~\textsuperscript{$\ast$}} &\colorbox{green!10}{0.810 $\rightarrow$ 0.179~\textsuperscript{$\ast$}} &\colorbox{green!10}{0.790 $\rightarrow$ 0.085~\textsuperscript{$\ast$}} &\colorbox{green!10}{0.823 $\rightarrow$ 0.081~\textsuperscript{$\ast$}} &\colorbox{green!10}{0.780 $\rightarrow$ 0.072~\textsuperscript{$\ast$}} \\Deg &\colorbox{green!10}{0.507 $\rightarrow$ 0.065~\textsuperscript{$\ast$}} &\colorbox{green!10}{0.484 $\rightarrow$ 0.057~\textsuperscript{$\ast$}} &\colorbox{green!10}{0.488 $\rightarrow$ 0.051~\textsuperscript{$\ast$}} &\colorbox{green!10}{0.497 $\rightarrow$ 0.076~\textsuperscript{$\ast$}} &\colorbox{green!10}{0.502 $\rightarrow$ 0.093~\textsuperscript{$\ast$}} &\colorbox{green!10}{0.504 $\rightarrow$ 0.079~\textsuperscript{$\ast$}} &\colorbox{green!10}{0.547 $\rightarrow$ 0.044~\textsuperscript{$\ast$}} &\colorbox{green!10}{0.588 $\rightarrow$ 0.035~\textsuperscript{$\ast$}} &\colorbox{green!10}{0.544 $\rightarrow$ 0.032~\textsuperscript{$\ast$}} \\\hline\multicolumn{6}{l}{\textbf{Axiom 4:} Negatively Changed $\uparrow$} \\ \hline PE &\colorbox{magenta!5}{1.609 $\rightarrow$ 1.695~\textsuperscript{ }} &\colorbox{magenta!5}{1.621 $\rightarrow$ 1.635~\textsuperscript{ }} &\colorbox{magenta!5}{1.598 $\rightarrow$ 1.688~\textsuperscript{ }} &\colorbox{green!10}{0.945 $\rightarrow$ 1.325~\textsuperscript{$\ast$}} &\colorbox{green!10}{0.889 $\rightarrow$ 1.364~\textsuperscript{$\ast$}} &\colorbox{green!10}{1.034 $\rightarrow$ 1.396~\textsuperscript{$\ast$}} &\colorbox{magenta!10}{0.933 $\rightarrow$ 0.636~\textsuperscript{ }} &\colorbox{magenta!10}{1.006 $\rightarrow$ 0.558~\textsuperscript{ }} &\colorbox{magenta!10}{1.252 $\rightarrow$ 0.463~\textsuperscript{ }} \\SE &\colorbox{magenta!20}{4.899 $\rightarrow$ 4.457~\textsuperscript{$\ast$}} &\colorbox{magenta!20}{4.899 $\rightarrow$ 4.437~\textsuperscript{$\ast$}} &\colorbox{magenta!10}{4.915 $\rightarrow$ 4.497~\textsuperscript{ }} &\colorbox{magenta!5}{4.160 $\rightarrow$ 4.312~\textsuperscript{ }} &\colorbox{magenta!5}{4.157 $\rightarrow$ 4.273~\textsuperscript{ }} &\colorbox{magenta!5}{4.297 $\rightarrow$ 4.339~\textsuperscript{ }} &\colorbox{magenta!20}{4.152 $\rightarrow$ 3.552~\textsuperscript{$\ast$}} &\colorbox{magenta!20}{4.192 $\rightarrow$ 3.409~\textsuperscript{$\ast$}} &\colorbox{magenta!20}{4.830 $\rightarrow$ 3.690~\textsuperscript{$\ast$}} \\PE+M &\colorbox{magenta!5}{3.446 $\rightarrow$ 3.653~\textsuperscript{ }} &\colorbox{magenta!5}{3.522 $\rightarrow$ 3.692~\textsuperscript{ }} &\colorbox{magenta!5}{3.465 $\rightarrow$ 4.158~\textsuperscript{ }} &\colorbox{green!10}{1.566 $\rightarrow$ 2.123~\textsuperscript{$\ast$}} &\colorbox{green!10}{1.306 $\rightarrow$ 1.946~\textsuperscript{$\ast$}} &\colorbox{green!10}{1.486 $\rightarrow$ 2.178~\textsuperscript{$\ast$}} &\colorbox{magenta!20}{1.164 $\rightarrow$ 0.714~\textsuperscript{$\ast$}} &\colorbox{magenta!20}{1.298 $\rightarrow$ 0.748~\textsuperscript{$\ast$}} &\colorbox{magenta!10}{1.689 $\rightarrow$ 0.747~\textsuperscript{ }} \\SE+M &\colorbox{magenta!20}{6.764 $\rightarrow$ 6.286~\textsuperscript{$\ast$}} &\colorbox{magenta!20}{6.803 $\rightarrow$ 6.377~\textsuperscript{$\ast$}} &\colorbox{magenta!10}{6.643 $\rightarrow$ 6.442~\textsuperscript{ }} &\colorbox{magenta!5}{4.953 $\rightarrow$ 5.121~\textsuperscript{ }} &\colorbox{magenta!5}{4.769 $\rightarrow$ 4.933~\textsuperscript{ }} &\colorbox{magenta!5}{4.983 $\rightarrow$ 5.088~\textsuperscript{ }} &\colorbox{magenta!20}{4.553 $\rightarrow$ 3.690~\textsuperscript{$\ast$}} &\colorbox{magenta!20}{4.653 $\rightarrow$ 3.608~\textsuperscript{$\ast$}} &\colorbox{magenta!20}{5.381 $\rightarrow$ 4.007~\textsuperscript{$\ast$}} \\EigV &\colorbox{magenta!20}{2.262 $\rightarrow$ 1.582~\textsuperscript{$\ast$}} &\colorbox{magenta!20}{2.244 $\rightarrow$ 1.503~\textsuperscript{$\ast$}} &\colorbox{magenta!20}{2.233 $\rightarrow$ 1.367~\textsuperscript{$\ast$}} &\colorbox{magenta!10}{2.089 $\rightarrow$ 1.908~\textsuperscript{ }} &\colorbox{magenta!10}{2.141 $\rightarrow$ 1.908~\textsuperscript{ }} &\colorbox{magenta!10}{2.399 $\rightarrow$ 2.131~\textsuperscript{ }} &\colorbox{magenta!20}{2.593 $\rightarrow$ 1.449~\textsuperscript{$\ast$}} &\colorbox{magenta!20}{2.557 $\rightarrow$ 1.412~\textsuperscript{$\ast$}} &\colorbox{magenta!20}{3.567 $\rightarrow$ 1.449~\textsuperscript{$\ast$}} \\ECC &\colorbox{magenta!20}{0.594 $\rightarrow$ 0.332~\textsuperscript{$\ast$}} &\colorbox{magenta!20}{0.565 $\rightarrow$ 0.295~\textsuperscript{$\ast$}} &\colorbox{magenta!20}{0.490 $\rightarrow$ 0.270~\textsuperscript{$\ast$}} &\colorbox{magenta!10}{0.501 $\rightarrow$ 0.453~\textsuperscript{ }} &\colorbox{magenta!10}{0.542 $\rightarrow$ 0.456~\textsuperscript{ }} &\colorbox{magenta!10}{0.614 $\rightarrow$ 0.555~\textsuperscript{ }} &\colorbox{magenta!20}{0.540 $\rightarrow$ 0.262~\textsuperscript{$\ast$}} &\colorbox{magenta!20}{0.548 $\rightarrow$ 0.220~\textsuperscript{$\ast$}} &\colorbox{magenta!20}{0.707 $\rightarrow$ 0.237~\textsuperscript{$\ast$}} \\Deg &\colorbox{magenta!20}{0.301 $\rightarrow$ 0.163~\textsuperscript{$\ast$}} &\colorbox{magenta!20}{0.294 $\rightarrow$ 0.148~\textsuperscript{$\ast$}} &\colorbox{magenta!20}{0.308 $\rightarrow$ 0.123~\textsuperscript{$\ast$}} &\colorbox{magenta!10}{0.239 $\rightarrow$ 0.237~\textsuperscript{ }} &\colorbox{magenta!10}{0.253 $\rightarrow$ 0.251~\textsuperscript{ }} &\colorbox{magenta!10}{0.313 $\rightarrow$ 0.299~\textsuperscript{ }} &\colorbox{magenta!20}{0.320 $\rightarrow$ 0.128~\textsuperscript{$\ast$}} &\colorbox{magenta!20}{0.320 $\rightarrow$ 0.115~\textsuperscript{$\ast$}} &\colorbox{magenta!20}{0.463 $\rightarrow$ 0.140~\textsuperscript{$\ast$}} \\\hline

\end{tabular}
\shrink
\caption{
Comparison of changes in average uncertainty values for Axioms 1--4 before (left) and after (right) applying RAG with Llama2-chat. Axioms are implemented using the \textit{Reference-based} method.
Colors \colorbox{green!10}{green} and \colorbox{magenta!20}{deep red} indicate significant changes aligning or conflicting with axioms, respectively. 
Color \colorbox{magenta!10}{shallow red} represents non-significant changes conflicting with axioms. Significance is marked by $\ast$.}
\label{tab:axioms_unc_changes_correctness_llama2}
\shrink
\end{table*}

\begin{table*}[t]
\centering
\setlength{\tabcolsep}{0.0pt}
\tiny
\begin{tabular}{c|ccc|ccc|ccc}
\hline
\textbf{UE} &
\multicolumn{3}{c}{\textbf{NQ-open}} & \multicolumn{3}{c}{\textbf{TriviaQA}} & \multicolumn{3}{c}{\textbf{PopQA}} \\ \hline %
& \textbf{BM25} & \textbf{Contriever} & \textbf{$\text{Doc}^{+}$} & 
\textbf{BM25} & \textbf{Contriever} & \textbf{$\text{Doc}^{+}$} & 
\textbf{BM25} & \textbf{Contriever} & \textbf{$\text{Doc}^{+}$} \\
\hline\hline

\multicolumn{6}{l}{\textbf{Axiom 1:} Positively Consistent $\downarrow$} \\ \hline PE &\colorbox{green!10}{1.620 $\rightarrow$ 1.332~\textsuperscript{$\ast$}} &\colorbox{green!10}{1.538 $\rightarrow$ 1.288~\textsuperscript{$\ast$}} &\colorbox{green!10}{1.544 $\rightarrow$ 1.232~\textsuperscript{$\ast$}} &\colorbox{green!10}{0.531 $\rightarrow$ 0.483~\textsuperscript{$\ast$}} &\colorbox{green!10}{0.549 $\rightarrow$ 0.494~\textsuperscript{$\ast$}} &\colorbox{green!10}{0.570 $\rightarrow$ 0.456~\textsuperscript{$\ast$}} &\colorbox{green!10}{0.893 $\rightarrow$ 0.673~\textsuperscript{$\ast$}} &\colorbox{green!10}{0.886 $\rightarrow$ 0.638~\textsuperscript{$\ast$}} &\colorbox{green!10}{1.368 $\rightarrow$ 0.419~\textsuperscript{$\ast$}} \\SE &\colorbox{green!10}{4.874 $\rightarrow$ 4.164~\textsuperscript{$\ast$}} &\colorbox{green!10}{4.876 $\rightarrow$ 4.060~\textsuperscript{$\ast$}} &\colorbox{green!10}{4.941 $\rightarrow$ 3.922~\textsuperscript{$\ast$}} &\colorbox{green!10}{3.460 $\rightarrow$ 3.265~\textsuperscript{$\ast$}} &\colorbox{green!10}{3.508 $\rightarrow$ 3.299~\textsuperscript{$\ast$}} &\colorbox{green!10}{3.565 $\rightarrow$ 3.241~\textsuperscript{$\ast$}} &\colorbox{green!10}{4.063 $\rightarrow$ 3.361~\textsuperscript{$\ast$}} &\colorbox{green!10}{4.162 $\rightarrow$ 3.354~\textsuperscript{$\ast$}} &\colorbox{green!10}{5.379 $\rightarrow$ 3.112~\textsuperscript{$\ast$}} \\PE+M &\colorbox{green!10}{3.682 $\rightarrow$ 3.283~\textsuperscript{$\ast$}} &\colorbox{green!10}{3.380 $\rightarrow$ 3.188~\textsuperscript{$\ast$}} &\colorbox{green!10}{3.395 $\rightarrow$ 3.080~\textsuperscript{$\ast$}} &\colorbox{green!10}{0.943 $\rightarrow$ 0.903~\textsuperscript{$\ast$}} &\colorbox{green!10}{0.987 $\rightarrow$ 0.948~\textsuperscript{$\ast$}} &\colorbox{green!10}{1.010 $\rightarrow$ 0.869~\textsuperscript{$\ast$}} &\colorbox{green!10}{1.220 $\rightarrow$ 0.942~\textsuperscript{$\ast$}} &\colorbox{green!10}{1.195 $\rightarrow$ 0.836~\textsuperscript{$\ast$}} &\colorbox{green!10}{2.115 $\rightarrow$ 0.615~\textsuperscript{$\ast$}} \\SE+M &\colorbox{green!10}{6.710 $\rightarrow$ 5.863~\textsuperscript{$\ast$}} &\colorbox{green!10}{6.531 $\rightarrow$ 5.746~\textsuperscript{$\ast$}} &\colorbox{green!10}{6.594 $\rightarrow$ 5.602~\textsuperscript{$\ast$}} &\colorbox{green!10}{3.839 $\rightarrow$ 3.638~\textsuperscript{$\ast$}} &\colorbox{green!10}{3.913 $\rightarrow$ 3.715~\textsuperscript{$\ast$}} &\colorbox{green!10}{3.971 $\rightarrow$ 3.615~\textsuperscript{$\ast$}} &\colorbox{green!10}{4.315 $\rightarrow$ 3.539~\textsuperscript{$\ast$}} &\colorbox{green!10}{4.424 $\rightarrow$ 3.459~\textsuperscript{$\ast$}} &\colorbox{green!10}{6.087 $\rightarrow$ 3.224~\textsuperscript{$\ast$}} \\EigV &\colorbox{green!10}{1.724 $\rightarrow$ 1.285~\textsuperscript{$\ast$}} &\colorbox{green!10}{1.788 $\rightarrow$ 1.172~\textsuperscript{$\ast$}} &\colorbox{green!10}{1.901 $\rightarrow$ 1.069~\textsuperscript{$\ast$}} &\colorbox{green!10}{1.277 $\rightarrow$ 1.129~\textsuperscript{$\ast$}} &\colorbox{green!10}{1.294 $\rightarrow$ 1.162~\textsuperscript{$\ast$}} &\colorbox{green!10}{1.344 $\rightarrow$ 1.114~\textsuperscript{$\ast$}} &\colorbox{green!10}{1.614 $\rightarrow$ 1.119~\textsuperscript{$\ast$}} &\colorbox{green!10}{1.837 $\rightarrow$ 1.095~\textsuperscript{$\ast$}} &\colorbox{green!10}{3.781 $\rightarrow$ 1.041~\textsuperscript{$\ast$}} \\ECC &\colorbox{green!10}{0.356 $\rightarrow$ 0.169~\textsuperscript{$\ast$}} &\colorbox{green!10}{0.381 $\rightarrow$ 0.104~\textsuperscript{$\ast$}} &\colorbox{green!10}{0.405 $\rightarrow$ 0.043~\textsuperscript{$\ast$}} &\colorbox{green!10}{0.155 $\rightarrow$ 0.082~\textsuperscript{$\ast$}} &\colorbox{green!10}{0.164 $\rightarrow$ 0.094~\textsuperscript{$\ast$}} &\colorbox{green!10}{0.187 $\rightarrow$ 0.076~\textsuperscript{$\ast$}} &\colorbox{green!10}{0.260 $\rightarrow$ 0.050~\textsuperscript{$\ast$}} &\colorbox{green!10}{0.288 $\rightarrow$ 0.052~\textsuperscript{$\ast$}} &\colorbox{green!10}{0.621 $\rightarrow$ 0.021~\textsuperscript{$\ast$}} \\Deg &\colorbox{green!10}{0.175 $\rightarrow$ 0.088~\textsuperscript{$\ast$}} &\colorbox{green!10}{0.185 $\rightarrow$ 0.053~\textsuperscript{$\ast$}} &\colorbox{green!10}{0.208 $\rightarrow$ 0.023~\textsuperscript{$\ast$}} &\colorbox{green!10}{0.063 $\rightarrow$ 0.038~\textsuperscript{$\ast$}} &\colorbox{green!10}{0.067 $\rightarrow$ 0.042~\textsuperscript{$\ast$}} &\colorbox{green!10}{0.076 $\rightarrow$ 0.030~\textsuperscript{$\ast$}} &\colorbox{green!10}{0.129 $\rightarrow$ 0.045~\textsuperscript{$\ast$}} &\colorbox{green!10}{0.157 $\rightarrow$ 0.033~\textsuperscript{$\ast$}} &\colorbox{green!10}{0.426 $\rightarrow$ 0.016~\textsuperscript{$\ast$}} \\\hline\multicolumn{6}{l}{\textbf{Axiom 2:} Negatively Consistent $\uparrow$} \\ \hline PE &\colorbox{magenta!20}{2.460 $\rightarrow$ 2.303~\textsuperscript{$\ast$}} &\colorbox{magenta!10}{2.353 $\rightarrow$ 2.321~\textsuperscript{ }} &\colorbox{magenta!10}{2.377 $\rightarrow$ 2.374~\textsuperscript{ }} &\colorbox{magenta!10}{1.512 $\rightarrow$ 1.397~\textsuperscript{ }} &\colorbox{magenta!5}{1.226 $\rightarrow$ 1.228~\textsuperscript{ }} &\colorbox{magenta!20}{1.477 $\rightarrow$ 1.421~\textsuperscript{$\ast$}} &\colorbox{magenta!20}{0.933 $\rightarrow$ 0.589~\textsuperscript{$\ast$}} &\colorbox{magenta!20}{0.804 $\rightarrow$ 0.450~\textsuperscript{$\ast$}} &\colorbox{magenta!10}{1.196 $\rightarrow$ 0.570~\textsuperscript{ }} \\SE &\colorbox{magenta!20}{5.846 $\rightarrow$ 5.233~\textsuperscript{$\ast$}} &\colorbox{magenta!20}{5.614 $\rightarrow$ 5.074~\textsuperscript{$\ast$}} &\colorbox{magenta!20}{5.619 $\rightarrow$ 4.966~\textsuperscript{$\ast$}} &\colorbox{magenta!20}{4.697 $\rightarrow$ 4.384~\textsuperscript{$\ast$}} &\colorbox{magenta!20}{4.449 $\rightarrow$ 4.133~\textsuperscript{$\ast$}} &\colorbox{magenta!20}{4.936 $\rightarrow$ 4.699~\textsuperscript{$\ast$}} &\colorbox{magenta!20}{4.407 $\rightarrow$ 3.314~\textsuperscript{$\ast$}} &\colorbox{magenta!20}{4.290 $\rightarrow$ 3.215~\textsuperscript{$\ast$}} &\colorbox{magenta!10}{4.620 $\rightarrow$ 3.442~\textsuperscript{ }} \\PE+M &\colorbox{magenta!10}{6.014 $\rightarrow$ 5.908~\textsuperscript{ }} &\colorbox{magenta!5}{5.523 $\rightarrow$ 5.664~\textsuperscript{ }} &\colorbox{magenta!5}{5.783 $\rightarrow$ 5.920~\textsuperscript{ }} &\colorbox{magenta!10}{2.917 $\rightarrow$ 2.797~\textsuperscript{ }} &\colorbox{magenta!5}{2.260 $\rightarrow$ 2.313~\textsuperscript{ }} &\colorbox{magenta!5}{2.777 $\rightarrow$ 2.833~\textsuperscript{ }} &\colorbox{magenta!20}{1.376 $\rightarrow$ 0.901~\textsuperscript{$\ast$}} &\colorbox{magenta!20}{1.230 $\rightarrow$ 0.762~\textsuperscript{$\ast$}} &\colorbox{magenta!10}{1.631 $\rightarrow$ 0.686~\textsuperscript{ }} \\SE+M &\colorbox{magenta!20}{9.087 $\rightarrow$ 8.557~\textsuperscript{$\ast$}} &\colorbox{magenta!10}{8.493 $\rightarrow$ 8.092~\textsuperscript{ }} &\colorbox{magenta!10}{8.728 $\rightarrow$ 8.147~\textsuperscript{ }} &\colorbox{magenta!10}{6.033 $\rightarrow$ 5.699~\textsuperscript{ }} &\colorbox{magenta!20}{5.462 $\rightarrow$ 5.100~\textsuperscript{$\ast$}} &\colorbox{magenta!10}{6.121 $\rightarrow$ 6.009~\textsuperscript{ }} &\colorbox{magenta!20}{4.819 $\rightarrow$ 3.551~\textsuperscript{$\ast$}} &\colorbox{magenta!20}{4.702 $\rightarrow$ 3.456~\textsuperscript{$\ast$}} &\colorbox{magenta!10}{4.875 $\rightarrow$ 3.504~\textsuperscript{ }} \\EigV &\colorbox{magenta!20}{2.177 $\rightarrow$ 1.529~\textsuperscript{$\ast$}} &\colorbox{magenta!20}{2.047 $\rightarrow$ 1.303~\textsuperscript{$\ast$}} &\colorbox{magenta!20}{1.869 $\rightarrow$ 1.071~\textsuperscript{$\ast$}} &\colorbox{magenta!10}{1.648 $\rightarrow$ 1.472~\textsuperscript{ }} &\colorbox{magenta!10}{1.655 $\rightarrow$ 1.489~\textsuperscript{ }} &\colorbox{magenta!20}{2.284 $\rightarrow$ 2.025~\textsuperscript{$\ast$}} &\colorbox{magenta!20}{2.041 $\rightarrow$ 1.098~\textsuperscript{$\ast$}} &\colorbox{magenta!20}{2.055 $\rightarrow$ 1.181~\textsuperscript{$\ast$}} &\colorbox{magenta!10}{2.188 $\rightarrow$ 1.143~\textsuperscript{ }} \\ECC &\colorbox{magenta!20}{0.507 $\rightarrow$ 0.256~\textsuperscript{$\ast$}} &\colorbox{magenta!20}{0.437 $\rightarrow$ 0.166~\textsuperscript{$\ast$}} &\colorbox{magenta!20}{0.453 $\rightarrow$ 0.040~\textsuperscript{$\ast$}} &\colorbox{magenta!20}{0.367 $\rightarrow$ 0.223~\textsuperscript{$\ast$}} &\colorbox{magenta!20}{0.394 $\rightarrow$ 0.243~\textsuperscript{$\ast$}} &\colorbox{magenta!20}{0.476 $\rightarrow$ 0.394~\textsuperscript{$\ast$}} &\colorbox{magenta!20}{0.338 $\rightarrow$ 0.065~\textsuperscript{$\ast$}} &\colorbox{magenta!20}{0.411 $\rightarrow$ 0.069~\textsuperscript{$\ast$}} &\colorbox{magenta!10}{0.514 $\rightarrow$ 0.041~\textsuperscript{ }} \\Deg &\colorbox{magenta!20}{0.260 $\rightarrow$ 0.134~\textsuperscript{$\ast$}} &\colorbox{magenta!20}{0.227 $\rightarrow$ 0.080~\textsuperscript{$\ast$}} &\colorbox{magenta!20}{0.210 $\rightarrow$ 0.022~\textsuperscript{$\ast$}} &\colorbox{magenta!10}{0.153 $\rightarrow$ 0.127~\textsuperscript{ }} &\colorbox{magenta!10}{0.152 $\rightarrow$ 0.120~\textsuperscript{ }} &\colorbox{magenta!20}{0.254 $\rightarrow$ 0.205~\textsuperscript{$\ast$}} &\colorbox{magenta!20}{0.200 $\rightarrow$ 0.030~\textsuperscript{$\ast$}} &\colorbox{magenta!20}{0.194 $\rightarrow$ 0.044~\textsuperscript{$\ast$}} &\colorbox{magenta!10}{0.260 $\rightarrow$ 0.055~\textsuperscript{ }} \\\hline\multicolumn{6}{l}{\textbf{Axiom 3:} Positively Changed $\downarrow$} \\ \hline PE &\colorbox{green!10}{1.972 $\rightarrow$ 1.038~\textsuperscript{$\ast$}} &\colorbox{green!10}{1.972 $\rightarrow$ 1.120~\textsuperscript{$\ast$}} &\colorbox{green!10}{2.020 $\rightarrow$ 1.097~\textsuperscript{$\ast$}} &\colorbox{green!10}{1.492 $\rightarrow$ 0.531~\textsuperscript{$\ast$}} &\colorbox{green!10}{1.446 $\rightarrow$ 0.515~\textsuperscript{$\ast$}} &\colorbox{green!10}{1.452 $\rightarrow$ 0.510~\textsuperscript{$\ast$}} &\colorbox{green!10}{1.837 $\rightarrow$ 0.734~\textsuperscript{$\ast$}} &\colorbox{green!10}{1.727 $\rightarrow$ 0.566~\textsuperscript{$\ast$}} &\colorbox{green!10}{1.458 $\rightarrow$ 0.403~\textsuperscript{$\ast$}} \\SE &\colorbox{green!10}{5.861 $\rightarrow$ 3.808~\textsuperscript{$\ast$}} &\colorbox{green!10}{5.813 $\rightarrow$ 3.855~\textsuperscript{$\ast$}} &\colorbox{green!10}{5.898 $\rightarrow$ 3.838~\textsuperscript{$\ast$}} &\colorbox{green!10}{5.527 $\rightarrow$ 3.337~\textsuperscript{$\ast$}} &\colorbox{green!10}{5.569 $\rightarrow$ 3.326~\textsuperscript{$\ast$}} &\colorbox{green!10}{5.497 $\rightarrow$ 3.364~\textsuperscript{$\ast$}} &\colorbox{green!10}{6.309 $\rightarrow$ 3.349~\textsuperscript{$\ast$}} &\colorbox{green!10}{6.227 $\rightarrow$ 3.276~\textsuperscript{$\ast$}} &\colorbox{green!10}{5.662 $\rightarrow$ 3.104~\textsuperscript{$\ast$}} \\PE+M &\colorbox{green!10}{3.917 $\rightarrow$ 2.599~\textsuperscript{$\ast$}} &\colorbox{green!10}{4.061 $\rightarrow$ 2.810~\textsuperscript{$\ast$}} &\colorbox{green!10}{4.063 $\rightarrow$ 2.762~\textsuperscript{$\ast$}} &\colorbox{green!10}{2.544 $\rightarrow$ 0.959~\textsuperscript{$\ast$}} &\colorbox{green!10}{2.545 $\rightarrow$ 1.033~\textsuperscript{$\ast$}} &\colorbox{green!10}{2.480 $\rightarrow$ 0.927~\textsuperscript{$\ast$}} &\colorbox{green!10}{2.935 $\rightarrow$ 0.970~\textsuperscript{$\ast$}} &\colorbox{green!10}{2.686 $\rightarrow$ 0.867~\textsuperscript{$\ast$}} &\colorbox{green!10}{2.244 $\rightarrow$ 0.594~\textsuperscript{$\ast$}} \\SE+M &\colorbox{green!10}{7.587 $\rightarrow$ 5.162~\textsuperscript{$\ast$}} &\colorbox{green!10}{7.662 $\rightarrow$ 5.299~\textsuperscript{$\ast$}} &\colorbox{green!10}{7.746 $\rightarrow$ 5.303~\textsuperscript{$\ast$}} &\colorbox{green!10}{6.542 $\rightarrow$ 3.690~\textsuperscript{$\ast$}} &\colorbox{green!10}{6.606 $\rightarrow$ 3.798~\textsuperscript{$\ast$}} &\colorbox{green!10}{6.465 $\rightarrow$ 3.716~\textsuperscript{$\ast$}} &\colorbox{green!10}{7.365 $\rightarrow$ 3.467~\textsuperscript{$\ast$}} &\colorbox{green!10}{7.156 $\rightarrow$ 3.436~\textsuperscript{$\ast$}} &\colorbox{green!10}{6.439 $\rightarrow$ 3.211~\textsuperscript{$\ast$}} \\EigV &\colorbox{green!10}{3.745 $\rightarrow$ 1.168~\textsuperscript{$\ast$}} &\colorbox{green!10}{3.449 $\rightarrow$ 1.131~\textsuperscript{$\ast$}} &\colorbox{green!10}{3.547 $\rightarrow$ 1.119~\textsuperscript{$\ast$}} &\colorbox{green!10}{3.575 $\rightarrow$ 1.191~\textsuperscript{$\ast$}} &\colorbox{green!10}{3.611 $\rightarrow$ 1.179~\textsuperscript{$\ast$}} &\colorbox{green!10}{3.470 $\rightarrow$ 1.210~\textsuperscript{$\ast$}} &\colorbox{green!10}{5.124 $\rightarrow$ 1.054~\textsuperscript{$\ast$}} &\colorbox{green!10}{5.217 $\rightarrow$ 1.055~\textsuperscript{$\ast$}} &\colorbox{green!10}{4.323 $\rightarrow$ 1.040~\textsuperscript{$\ast$}} \\ECC &\colorbox{green!10}{0.653 $\rightarrow$ 0.089~\textsuperscript{$\ast$}} &\colorbox{green!10}{0.633 $\rightarrow$ 0.072~\textsuperscript{$\ast$}} &\colorbox{green!10}{0.661 $\rightarrow$ 0.069~\textsuperscript{$\ast$}} &\colorbox{green!10}{0.756 $\rightarrow$ 0.110~\textsuperscript{$\ast$}} &\colorbox{green!10}{0.752 $\rightarrow$ 0.104~\textsuperscript{$\ast$}} &\colorbox{green!10}{0.747 $\rightarrow$ 0.131~\textsuperscript{$\ast$}} &\colorbox{green!10}{0.854 $\rightarrow$ 0.024~\textsuperscript{$\ast$}} &\colorbox{green!10}{0.841 $\rightarrow$ 0.024~\textsuperscript{$\ast$}} &\colorbox{green!10}{0.700 $\rightarrow$ 0.025~\textsuperscript{$\ast$}} \\Deg &\colorbox{green!10}{0.471 $\rightarrow$ 0.053~\textsuperscript{$\ast$}} &\colorbox{green!10}{0.450 $\rightarrow$ 0.048~\textsuperscript{$\ast$}} &\colorbox{green!10}{0.466 $\rightarrow$ 0.037~\textsuperscript{$\ast$}} &\colorbox{green!10}{0.462 $\rightarrow$ 0.053~\textsuperscript{$\ast$}} &\colorbox{green!10}{0.471 $\rightarrow$ 0.047~\textsuperscript{$\ast$}} &\colorbox{green!10}{0.454 $\rightarrow$ 0.063~\textsuperscript{$\ast$}} &\colorbox{green!10}{0.614 $\rightarrow$ 0.022~\textsuperscript{$\ast$}} &\colorbox{green!10}{0.615 $\rightarrow$ 0.021~\textsuperscript{$\ast$}} &\colorbox{green!10}{0.492 $\rightarrow$ 0.016~\textsuperscript{$\ast$}} \\\hline\multicolumn{6}{l}{\textbf{Axiom 4:} Negatively Changed $\uparrow$} \\ \hline PE &\colorbox{magenta!20}{1.450 $\rightarrow$ 1.284~\textsuperscript{$\ast$}} &\colorbox{magenta!10}{1.570 $\rightarrow$ 1.490~\textsuperscript{ }} &\colorbox{magenta!20}{1.518 $\rightarrow$ 1.256~\textsuperscript{$\ast$}} &\colorbox{green!10}{0.791 $\rightarrow$ 1.173~\textsuperscript{$\ast$}} &\colorbox{green!10}{0.833 $\rightarrow$ 1.144~\textsuperscript{$\ast$}} &\colorbox{magenta!5}{0.881 $\rightarrow$ 1.021~\textsuperscript{ }} &\colorbox{magenta!10}{0.941 $\rightarrow$ 0.881~\textsuperscript{ }} &\colorbox{magenta!10}{1.014 $\rightarrow$ 0.807~\textsuperscript{ }} &\colorbox{magenta!20}{1.660 $\rightarrow$ 0.913~\textsuperscript{$\ast$}} \\SE &\colorbox{magenta!20}{4.957 $\rightarrow$ 4.252~\textsuperscript{$\ast$}} &\colorbox{magenta!20}{5.039 $\rightarrow$ 4.543~\textsuperscript{$\ast$}} &\colorbox{magenta!20}{4.775 $\rightarrow$ 4.116~\textsuperscript{$\ast$}} &\colorbox{green!10}{4.173 $\rightarrow$ 4.356~\textsuperscript{$\ast$}} &\colorbox{magenta!5}{4.212 $\rightarrow$ 4.319~\textsuperscript{ }} &\colorbox{magenta!10}{4.392 $\rightarrow$ 4.174~\textsuperscript{ }} &\colorbox{magenta!20}{4.569 $\rightarrow$ 3.875~\textsuperscript{$\ast$}} &\colorbox{magenta!20}{4.739 $\rightarrow$ 3.709~\textsuperscript{$\ast$}} &\colorbox{magenta!20}{5.853 $\rightarrow$ 3.783~\textsuperscript{$\ast$}} \\PE+M &\colorbox{magenta!10}{3.045 $\rightarrow$ 2.901~\textsuperscript{ }} &\colorbox{magenta!5}{3.421 $\rightarrow$ 3.597~\textsuperscript{ }} &\colorbox{magenta!10}{3.159 $\rightarrow$ 2.924~\textsuperscript{ }} &\colorbox{green!10}{1.383 $\rightarrow$ 1.989~\textsuperscript{$\ast$}} &\colorbox{green!10}{1.361 $\rightarrow$ 2.107~\textsuperscript{$\ast$}} &\colorbox{magenta!5}{1.424 $\rightarrow$ 1.849~\textsuperscript{ }} &\colorbox{magenta!5}{1.323 $\rightarrow$ 1.354~\textsuperscript{ }} &\colorbox{magenta!10}{1.447 $\rightarrow$ 1.303~\textsuperscript{ }} &\colorbox{magenta!20}{2.705 $\rightarrow$ 1.735~\textsuperscript{$\ast$}} \\SE+M &\colorbox{magenta!20}{6.368 $\rightarrow$ 5.630~\textsuperscript{$\ast$}} &\colorbox{magenta!10}{6.674 $\rightarrow$ 6.349~\textsuperscript{ }} &\colorbox{magenta!10}{6.181 $\rightarrow$ 5.549~\textsuperscript{ }} &\colorbox{green!10}{4.743 $\rightarrow$ 5.076~\textsuperscript{$\ast$}} &\colorbox{magenta!5}{4.720 $\rightarrow$ 5.081~\textsuperscript{ }} &\colorbox{magenta!10}{4.954 $\rightarrow$ 4.796~\textsuperscript{ }} &\colorbox{magenta!20}{4.958 $\rightarrow$ 4.241~\textsuperscript{$\ast$}} &\colorbox{magenta!20}{5.184 $\rightarrow$ 4.062~\textsuperscript{$\ast$}} &\colorbox{magenta!20}{6.835 $\rightarrow$ 4.446~\textsuperscript{$\ast$}} \\EigV &\colorbox{magenta!20}{2.087 $\rightarrow$ 1.415~\textsuperscript{$\ast$}} &\colorbox{magenta!20}{2.115 $\rightarrow$ 1.497~\textsuperscript{$\ast$}} &\colorbox{magenta!20}{1.906 $\rightarrow$ 1.375~\textsuperscript{$\ast$}} &\colorbox{magenta!10}{1.850 $\rightarrow$ 1.593~\textsuperscript{ }} &\colorbox{magenta!10}{1.944 $\rightarrow$ 1.710~\textsuperscript{ }} &\colorbox{magenta!20}{2.103 $\rightarrow$ 1.594~\textsuperscript{$\ast$}} &\colorbox{magenta!20}{2.522 $\rightarrow$ 1.200~\textsuperscript{$\ast$}} &\colorbox{magenta!20}{2.565 $\rightarrow$ 1.222~\textsuperscript{$\ast$}} &\colorbox{magenta!20}{4.209 $\rightarrow$ 1.159~\textsuperscript{$\ast$}} \\ECC &\colorbox{magenta!20}{0.440 $\rightarrow$ 0.208~\textsuperscript{$\ast$}} &\colorbox{magenta!20}{0.438 $\rightarrow$ 0.238~\textsuperscript{$\ast$}} &\colorbox{magenta!20}{0.351 $\rightarrow$ 0.151~\textsuperscript{$\ast$}} &\colorbox{magenta!10}{0.362 $\rightarrow$ 0.293~\textsuperscript{ }} &\colorbox{magenta!10}{0.378 $\rightarrow$ 0.323~\textsuperscript{ }} &\colorbox{magenta!10}{0.420 $\rightarrow$ 0.298~\textsuperscript{ }} &\colorbox{magenta!20}{0.437 $\rightarrow$ 0.091~\textsuperscript{$\ast$}} &\colorbox{magenta!20}{0.492 $\rightarrow$ 0.119~\textsuperscript{$\ast$}} &\colorbox{magenta!20}{0.700 $\rightarrow$ 0.068~\textsuperscript{$\ast$}} \\Deg &\colorbox{magenta!20}{0.243 $\rightarrow$ 0.138~\textsuperscript{$\ast$}} &\colorbox{magenta!20}{0.252 $\rightarrow$ 0.149~\textsuperscript{$\ast$}} &\colorbox{magenta!20}{0.222 $\rightarrow$ 0.109~\textsuperscript{$\ast$}} &\colorbox{magenta!5}{0.175 $\rightarrow$ 0.183~\textsuperscript{ }} &\colorbox{magenta!5}{0.190 $\rightarrow$ 0.203~\textsuperscript{ }} &\colorbox{magenta!10}{0.233 $\rightarrow$ 0.180~\textsuperscript{ }} &\colorbox{magenta!20}{0.259 $\rightarrow$ 0.091~\textsuperscript{$\ast$}} &\colorbox{magenta!20}{0.280 $\rightarrow$ 0.092~\textsuperscript{$\ast$}} &\colorbox{magenta!20}{0.479 $\rightarrow$ 0.065~\textsuperscript{$\ast$}} \\\hline

\end{tabular}
\caption{
Comparison of changes in average uncertainty values for Axioms 1--4 before (left) and after (right) applying RAG with Mistral-v0.3.
Axioms are implemented using the \textit{Reference-based} method.
Colors \colorbox{green!10}{green} and \colorbox{magenta!20}{deep red} indicate significant changes aligning or conflicting with axioms, respectively. 
Color \colorbox{magenta!10}{shallow red} represents non-significant changes conflicting with axioms. 
Significance is marked by $\ast$.}
\label{tab:axioms_rel_correctness_mistral}
\end{table*}

\begin{table}[h!]
\centering
\setlength{\tabcolsep}{3pt}
\scriptsize
\begin{tabular}{c|ccc}
\hline
\textbf{Unc.} & \textbf{NQ-open} & \textbf{TriviaQA} & \textbf{PopQA} \\
\hline\hline

PE &
\colorbox{magenta!20}{2.227 $\rightarrow$ 1.778~\textsuperscript{$\ast$}} &
\colorbox{magenta!20}{0.657 $\rightarrow$ 0.780~\textsuperscript{$\ast$}} &
\colorbox{green!10}{1.014 $\rightarrow$ 1.087~\textsuperscript{ }} \\

SE &
\colorbox{magenta!20}{5.453 $\rightarrow$ 4.964~\textsuperscript{$\ast$}} &
\colorbox{magenta!20}{3.570 $\rightarrow$ 3.892~\textsuperscript{$\ast$}} &
\colorbox{green!10}{3.976 $\rightarrow$ 4.021~\textsuperscript{ }} \\

PE+M &
\colorbox{magenta!20}{5.634 $\rightarrow$ 4.293~\textsuperscript{$\ast$}} &
\colorbox{magenta!20}{1.223 $\rightarrow$ 1.374~\textsuperscript{$\ast$}} &
\colorbox{green!10}{1.686 $\rightarrow$ 1.759~\textsuperscript{ }} \\

SE+M &
\colorbox{magenta!20}{8.543 $\rightarrow$ 7.216~\textsuperscript{$\ast$}} &
\colorbox{magenta!20}{4.089 $\rightarrow$ 4.463~\textsuperscript{$\ast$}} &
\colorbox{green!10}{4.310 $\rightarrow$ 4.521~\textsuperscript{ }} \\

EigV &
\colorbox{green!10}{1.696 $\rightarrow$ 1.637~\textsuperscript{ }} &
\colorbox{magenta!20}{1.256 $\rightarrow$ 1.496~\textsuperscript{$\ast$}} &
\colorbox{green!10}{1.215 $\rightarrow$ 1.452~\textsuperscript{ }} \\

ECC &
\colorbox{green!10}{0.357 $\rightarrow$ 0.335~\textsuperscript{ }} &
\colorbox{magenta!20}{0.154 $\rightarrow$ 0.300~\textsuperscript{$\ast$}} &
\colorbox{green!10}{0.059 $\rightarrow$ 0.362~\textsuperscript{ }} \\

Deg &
\colorbox{magenta!20}{0.160 $\rightarrow$ 0.206~\textsuperscript{$\ast$}} &
\colorbox{magenta!20}{0.056 $\rightarrow$ 0.128~\textsuperscript{$\ast$}} &
\colorbox{green!10}{0.093 $\rightarrow$ 0.140~\textsuperscript{ }} \\\hline

\end{tabular}
\shrink
\caption{
Comparison of changes in average uncertainty values for Axiom 5 before (left) and after(right) applying RAG with Mistral-v0.3.
Color coding and significance markers follow those in Table~\ref{tab:axioms_rel_correctness_mistral}.}
\label{tab:Axiom_irrelevant_mistral}
\shrink
\shrink
\end{table}

\begin{table*}[t]
\centering
\setlength{\tabcolsep}{0.0pt}
\tiny
\begin{tabular}{c|ccc|ccc|ccc}
\hline
\textbf{UE} &
\multicolumn{3}{c}{\textbf{NQ-open}} & \multicolumn{3}{c}{\textbf{TriviaQA}} & \multicolumn{3}{c}{\textbf{PopQA}} \\ \hline %
& \textbf{BM25} & \textbf{Contriever} & \textbf{$\text{Doc}^{+}$} & 
\textbf{BM25} & \textbf{Contriever} & \textbf{$\text{Doc}^{+}$} & 
\textbf{BM25} & \textbf{Contriever} & \textbf{$\text{Doc}^{+}$} \\
\hline\hline

\multicolumn{6}{l}{\textbf{Axiom 1:} Positively Consistent $\downarrow$} \\ \hline PE &\colorbox{magenta!5}{1.896 $\rightarrow$ 1.802~\textsuperscript{ }} &\colorbox{magenta!5}{1.801 $\rightarrow$ 1.642~\textsuperscript{ }} &\colorbox{green!10}{1.684 $\rightarrow$ 1.500~\textsuperscript{$\ast$}} &\colorbox{magenta!20}{0.796 $\rightarrow$ 0.848~\textsuperscript{$\ast$}} &\colorbox{magenta!20}{0.844 $\rightarrow$ 0.877~\textsuperscript{$\ast$}} &\colorbox{magenta!20}{0.929 $\rightarrow$ 0.952~\textsuperscript{$\ast$}} &\colorbox{magenta!5}{0.798 $\rightarrow$ 0.418~\textsuperscript{ }} &\colorbox{green!10}{0.715 $\rightarrow$ 0.416~\textsuperscript{$\ast$}} &\colorbox{magenta!5}{0.818 $\rightarrow$ 0.191~\textsuperscript{ }} \\SE &\colorbox{green!10}{5.174 $\rightarrow$ 4.524~\textsuperscript{$\ast$}} &\colorbox{green!10}{5.071 $\rightarrow$ 4.344~\textsuperscript{$\ast$}} &\colorbox{green!10}{4.957 $\rightarrow$ 4.145~\textsuperscript{$\ast$}} &\colorbox{magenta!5}{3.779 $\rightarrow$ 3.533~\textsuperscript{ }} &\colorbox{green!10}{3.823 $\rightarrow$ 3.569~\textsuperscript{$\ast$}} &\colorbox{green!10}{4.019 $\rightarrow$ 3.725~\textsuperscript{$\ast$}} &\colorbox{green!10}{3.869 $\rightarrow$ 3.260~\textsuperscript{$\ast$}} &\colorbox{green!10}{3.805 $\rightarrow$ 3.152~\textsuperscript{$\ast$}} &\colorbox{green!10}{3.700 $\rightarrow$ 3.053~\textsuperscript{$\ast$}} \\PE+M &\colorbox{green!10}{4.445 $\rightarrow$ 4.152~\textsuperscript{$\ast$}} &\colorbox{green!10}{4.162 $\rightarrow$ 4.013~\textsuperscript{$\ast$}} &\colorbox{magenta!5}{4.090 $\rightarrow$ 4.039~\textsuperscript{ }} &\colorbox{magenta!20}{1.307 $\rightarrow$ 1.331~\textsuperscript{$\ast$}} &\colorbox{magenta!10}{1.368 $\rightarrow$ 1.392~\textsuperscript{ }} &\colorbox{magenta!5}{1.564 $\rightarrow$ 1.559~\textsuperscript{ }} &\colorbox{magenta!5}{0.930 $\rightarrow$ 0.490~\textsuperscript{ }} &\colorbox{green!10}{0.820 $\rightarrow$ 0.486~\textsuperscript{$\ast$}} &\colorbox{green!10}{0.846 $\rightarrow$ 0.213~\textsuperscript{$\ast$}} \\SE+M &\colorbox{green!10}{7.716 $\rightarrow$ 6.855~\textsuperscript{$\ast$}} &\colorbox{green!10}{7.483 $\rightarrow$ 6.619~\textsuperscript{$\ast$}} &\colorbox{green!10}{7.380 $\rightarrow$ 6.577~\textsuperscript{$\ast$}} &\colorbox{green!10}{4.422 $\rightarrow$ 4.062~\textsuperscript{$\ast$}} &\colorbox{green!10}{4.495 $\rightarrow$ 4.142~\textsuperscript{$\ast$}} &\colorbox{green!10}{4.783 $\rightarrow$ 4.381~\textsuperscript{$\ast$}} &\colorbox{green!10}{4.185 $\rightarrow$ 3.354~\textsuperscript{$\ast$}} &\colorbox{green!10}{4.090 $\rightarrow$ 3.265~\textsuperscript{$\ast$}} &\colorbox{green!10}{3.909 $\rightarrow$ 3.080~\textsuperscript{$\ast$}} \\EigV &\colorbox{green!10}{2.248 $\rightarrow$ 1.451~\textsuperscript{$\ast$}} &\colorbox{green!10}{2.264 $\rightarrow$ 1.236~\textsuperscript{$\ast$}} &\colorbox{green!10}{2.183 $\rightarrow$ 1.105~\textsuperscript{$\ast$}} &\colorbox{green!10}{1.656 $\rightarrow$ 1.375~\textsuperscript{$\ast$}} &\colorbox{green!10}{1.704 $\rightarrow$ 1.296~\textsuperscript{$\ast$}} &\colorbox{green!10}{1.913 $\rightarrow$ 1.583~\textsuperscript{$\ast$}} &\colorbox{green!10}{2.088 $\rightarrow$ 1.215~\textsuperscript{$\ast$}} &\colorbox{green!10}{2.030 $\rightarrow$ 1.175~\textsuperscript{$\ast$}} &\colorbox{green!10}{2.126 $\rightarrow$ 1.145~\textsuperscript{$\ast$}} \\ECC &\colorbox{green!10}{0.546 $\rightarrow$ 0.224~\textsuperscript{$\ast$}} &\colorbox{green!10}{0.583 $\rightarrow$ 0.163~\textsuperscript{$\ast$}} &\colorbox{green!10}{0.548 $\rightarrow$ 0.080~\textsuperscript{$\ast$}} &\colorbox{green!10}{0.353 $\rightarrow$ 0.237~\textsuperscript{$\ast$}} &\colorbox{green!10}{0.369 $\rightarrow$ 0.187~\textsuperscript{$\ast$}} &\colorbox{green!10}{0.422 $\rightarrow$ 0.289~\textsuperscript{$\ast$}} &\colorbox{green!10}{0.447 $\rightarrow$ 0.133~\textsuperscript{$\ast$}} &\colorbox{green!10}{0.432 $\rightarrow$ 0.093~\textsuperscript{$\ast$}} &\colorbox{green!10}{0.405 $\rightarrow$ 0.071~\textsuperscript{$\ast$}} \\Deg &\colorbox{green!10}{0.264 $\rightarrow$ 0.123~\textsuperscript{$\ast$}} &\colorbox{green!10}{0.277 $\rightarrow$ 0.075~\textsuperscript{$\ast$}} &\colorbox{green!10}{0.262 $\rightarrow$ 0.032~\textsuperscript{$\ast$}} &\colorbox{green!10}{0.153 $\rightarrow$ 0.097~\textsuperscript{$\ast$}} &\colorbox{green!10}{0.161 $\rightarrow$ 0.081~\textsuperscript{$\ast$}} &\colorbox{green!10}{0.201 $\rightarrow$ 0.134~\textsuperscript{$\ast$}} &\colorbox{green!10}{0.222 $\rightarrow$ 0.058~\textsuperscript{$\ast$}} &\colorbox{green!10}{0.218 $\rightarrow$ 0.051~\textsuperscript{$\ast$}} &\colorbox{green!10}{0.236 $\rightarrow$ 0.040~\textsuperscript{$\ast$}} \\\hline\multicolumn{6}{l}{\textbf{Axiom 2:} Negatively Consistent $\uparrow$} \\ \hline PE &\colorbox{magenta!5}{1.978 $\rightarrow$ 2.037~\textsuperscript{ }} &\colorbox{magenta!10}{1.717 $\rightarrow$ 1.507~\textsuperscript{ }} &\colorbox{magenta!20}{1.705 $\rightarrow$ 1.173~\textsuperscript{$\ast$}} &\colorbox{magenta!5}{0.783 $\rightarrow$ 0.833~\textsuperscript{ }} &\colorbox{magenta!10}{0.794 $\rightarrow$ 0.718~\textsuperscript{ }} &\colorbox{magenta!10}{0.795 $\rightarrow$ 0.749~\textsuperscript{ }} &\colorbox{magenta!10}{0.817 $\rightarrow$ 0.583~\textsuperscript{ }} &\colorbox{magenta!10}{0.698 $\rightarrow$ 0.310~\textsuperscript{ }} &\colorbox{magenta!10}{1.296 $\rightarrow$ 0.528~\textsuperscript{ }} \\SE &\colorbox{magenta!10}{5.499 $\rightarrow$ 5.108~\textsuperscript{ }} &\colorbox{magenta!20}{5.039 $\rightarrow$ 4.210~\textsuperscript{$\ast$}} &\colorbox{magenta!20}{5.034 $\rightarrow$ 3.845~\textsuperscript{$\ast$}} &\colorbox{magenta!5}{3.707 $\rightarrow$ 3.740~\textsuperscript{ }} &\colorbox{magenta!10}{3.744 $\rightarrow$ 3.521~\textsuperscript{ }} &\colorbox{magenta!10}{3.897 $\rightarrow$ 3.568~\textsuperscript{ }} &\colorbox{magenta!10}{3.570 $\rightarrow$ 3.238~\textsuperscript{ }} &\colorbox{magenta!20}{3.698 $\rightarrow$ 3.119~\textsuperscript{$\ast$}} &\colorbox{magenta!10}{4.233 $\rightarrow$ 2.986~\textsuperscript{ }} \\PE+M &\colorbox{magenta!10}{4.707 $\rightarrow$ 4.579~\textsuperscript{ }} &\colorbox{magenta!10}{3.438 $\rightarrow$ 3.295~\textsuperscript{ }} &\colorbox{magenta!10}{3.483 $\rightarrow$ 3.027~\textsuperscript{ }} &\colorbox{magenta!5}{1.029 $\rightarrow$ 1.185~\textsuperscript{ }} &\colorbox{magenta!10}{1.047 $\rightarrow$ 0.987~\textsuperscript{ }} &\colorbox{magenta!5}{1.024 $\rightarrow$ 1.030~\textsuperscript{ }} &\colorbox{magenta!10}{0.817 $\rightarrow$ 0.542~\textsuperscript{ }} &\colorbox{magenta!10}{0.640 $\rightarrow$ 0.332~\textsuperscript{ }} &\colorbox{magenta!10}{1.262 $\rightarrow$ 0.637~\textsuperscript{ }} \\SE+M &\colorbox{magenta!10}{8.217 $\rightarrow$ 7.579~\textsuperscript{ }} &\colorbox{magenta!20}{6.970 $\rightarrow$ 6.014~\textsuperscript{$\ast$}} &\colorbox{magenta!20}{6.959 $\rightarrow$ 5.525~\textsuperscript{$\ast$}} &\colorbox{magenta!5}{4.103 $\rightarrow$ 4.149~\textsuperscript{ }} &\colorbox{magenta!10}{4.198 $\rightarrow$ 3.894~\textsuperscript{ }} &\colorbox{magenta!20}{4.351 $\rightarrow$ 3.892~\textsuperscript{$\ast$}} &\colorbox{magenta!20}{3.771 $\rightarrow$ 3.321~\textsuperscript{$\ast$}} &\colorbox{magenta!20}{3.854 $\rightarrow$ 3.181~\textsuperscript{$\ast$}} &\colorbox{magenta!10}{4.458 $\rightarrow$ 3.029~\textsuperscript{ }} \\EigV &\colorbox{magenta!10}{2.563 $\rightarrow$ 2.233~\textsuperscript{ }} &\colorbox{magenta!20}{2.610 $\rightarrow$ 1.464~\textsuperscript{$\ast$}} &\colorbox{magenta!20}{2.236 $\rightarrow$ 1.192~\textsuperscript{$\ast$}} &\colorbox{magenta!20}{1.811 $\rightarrow$ 1.537~\textsuperscript{$\ast$}} &\colorbox{magenta!20}{1.804 $\rightarrow$ 1.426~\textsuperscript{$\ast$}} &\colorbox{magenta!20}{1.970 $\rightarrow$ 1.632~\textsuperscript{$\ast$}} &\colorbox{magenta!20}{1.911 $\rightarrow$ 1.217~\textsuperscript{$\ast$}} &\colorbox{magenta!20}{2.015 $\rightarrow$ 1.175~\textsuperscript{$\ast$}} &\colorbox{magenta!10}{2.998 $\rightarrow$ 1.214~\textsuperscript{ }} \\ECC &\colorbox{magenta!10}{0.580 $\rightarrow$ 0.403~\textsuperscript{ }} &\colorbox{magenta!20}{0.612 $\rightarrow$ 0.294~\textsuperscript{$\ast$}} &\colorbox{magenta!20}{0.626 $\rightarrow$ 0.169~\textsuperscript{$\ast$}} &\colorbox{magenta!10}{0.419 $\rightarrow$ 0.359~\textsuperscript{ }} &\colorbox{magenta!10}{0.390 $\rightarrow$ 0.283~\textsuperscript{ }} &\colorbox{magenta!20}{0.450 $\rightarrow$ 0.319~\textsuperscript{$\ast$}} &\colorbox{magenta!20}{0.406 $\rightarrow$ 0.142~\textsuperscript{$\ast$}} &\colorbox{magenta!20}{0.473 $\rightarrow$ 0.145~\textsuperscript{$\ast$}} &\colorbox{magenta!10}{0.667 $\rightarrow$ 0.058~\textsuperscript{ }} \\Deg &\colorbox{magenta!10}{0.308 $\rightarrow$ 0.255~\textsuperscript{ }} &\colorbox{magenta!20}{0.331 $\rightarrow$ 0.125~\textsuperscript{$\ast$}} &\colorbox{magenta!20}{0.269 $\rightarrow$ 0.063~\textsuperscript{$\ast$}} &\colorbox{magenta!10}{0.177 $\rightarrow$ 0.141~\textsuperscript{ }} &\colorbox{magenta!20}{0.173 $\rightarrow$ 0.111~\textsuperscript{$\ast$}} &\colorbox{magenta!20}{0.215 $\rightarrow$ 0.139~\textsuperscript{$\ast$}} &\colorbox{magenta!20}{0.217 $\rightarrow$ 0.077~\textsuperscript{$\ast$}} &\colorbox{magenta!20}{0.217 $\rightarrow$ 0.052~\textsuperscript{$\ast$}} &\colorbox{magenta!10}{0.407 $\rightarrow$ 0.093~\textsuperscript{ }} \\\hline\multicolumn{6}{l}{\textbf{Axiom 3:} Positively Changed $\downarrow$} \\ \hline PE &\colorbox{green!10}{1.800 $\rightarrow$ 1.239~\textsuperscript{$\ast$}} &\colorbox{green!10}{1.860 $\rightarrow$ 1.261~\textsuperscript{$\ast$}} &\colorbox{green!10}{1.816 $\rightarrow$ 1.063~\textsuperscript{$\ast$}} &\colorbox{green!10}{1.239 $\rightarrow$ 0.749~\textsuperscript{$\ast$}} &\colorbox{green!10}{1.287 $\rightarrow$ 0.851~\textsuperscript{$\ast$}} &\colorbox{green!10}{1.332 $\rightarrow$ 0.686~\textsuperscript{$\ast$}} &\colorbox{green!10}{1.348 $\rightarrow$ 0.397~\textsuperscript{$\ast$}} &\colorbox{green!10}{1.386 $\rightarrow$ 0.368~\textsuperscript{$\ast$}} &\colorbox{green!10}{1.358 $\rightarrow$ 0.264~\textsuperscript{$\ast$}} \\SE &\colorbox{green!10}{5.575 $\rightarrow$ 4.025~\textsuperscript{$\ast$}} &\colorbox{green!10}{5.603 $\rightarrow$ 4.055~\textsuperscript{$\ast$}} &\colorbox{green!10}{5.685 $\rightarrow$ 3.742~\textsuperscript{$\ast$}} &\colorbox{green!10}{4.773 $\rightarrow$ 3.511~\textsuperscript{$\ast$}} &\colorbox{green!10}{4.908 $\rightarrow$ 3.641~\textsuperscript{$\ast$}} &\colorbox{green!10}{5.029 $\rightarrow$ 3.312~\textsuperscript{$\ast$}} &\colorbox{green!10}{5.092 $\rightarrow$ 3.161~\textsuperscript{$\ast$}} &\colorbox{green!10}{5.203 $\rightarrow$ 3.135~\textsuperscript{$\ast$}} &\colorbox{green!10}{4.987 $\rightarrow$ 3.050~\textsuperscript{$\ast$}} \\PE+M &\colorbox{green!10}{3.630 $\rightarrow$ 3.003~\textsuperscript{$\ast$}} &\colorbox{green!10}{3.770 $\rightarrow$ 3.028~\textsuperscript{$\ast$}} &\colorbox{green!10}{3.704 $\rightarrow$ 2.785~\textsuperscript{$\ast$}} &\colorbox{green!10}{1.809 $\rightarrow$ 1.112~\textsuperscript{$\ast$}} &\colorbox{green!10}{1.943 $\rightarrow$ 1.297~\textsuperscript{$\ast$}} &\colorbox{green!10}{1.835 $\rightarrow$ 1.061~\textsuperscript{$\ast$}} &\colorbox{green!10}{1.723 $\rightarrow$ 0.470~\textsuperscript{$\ast$}} &\colorbox{green!10}{1.766 $\rightarrow$ 0.436~\textsuperscript{$\ast$}} &\colorbox{green!10}{1.655 $\rightarrow$ 0.331~\textsuperscript{$\ast$}} \\SE+M &\colorbox{green!10}{7.504 $\rightarrow$ 5.709~\textsuperscript{$\ast$}} &\colorbox{green!10}{7.565 $\rightarrow$ 5.705~\textsuperscript{$\ast$}} &\colorbox{green!10}{7.693 $\rightarrow$ 5.285~\textsuperscript{$\ast$}} &\colorbox{green!10}{5.504 $\rightarrow$ 3.907~\textsuperscript{$\ast$}} &\colorbox{green!10}{5.740 $\rightarrow$ 4.134~\textsuperscript{$\ast$}} &\colorbox{green!10}{5.771 $\rightarrow$ 3.681~\textsuperscript{$\ast$}} &\colorbox{green!10}{5.691 $\rightarrow$ 3.262~\textsuperscript{$\ast$}} &\colorbox{green!10}{5.782 $\rightarrow$ 3.228~\textsuperscript{$\ast$}} &\colorbox{green!10}{5.506 $\rightarrow$ 3.102~\textsuperscript{$\ast$}} \\EigV &\colorbox{green!10}{3.693 $\rightarrow$ 1.335~\textsuperscript{$\ast$}} &\colorbox{green!10}{3.822 $\rightarrow$ 1.321~\textsuperscript{$\ast$}} &\colorbox{green!10}{3.947 $\rightarrow$ 1.149~\textsuperscript{$\ast$}} &\colorbox{green!10}{3.377 $\rightarrow$ 1.332~\textsuperscript{$\ast$}} &\colorbox{green!10}{3.626 $\rightarrow$ 1.411~\textsuperscript{$\ast$}} &\colorbox{green!10}{3.840 $\rightarrow$ 1.281~\textsuperscript{$\ast$}} &\colorbox{green!10}{4.772 $\rightarrow$ 1.222~\textsuperscript{$\ast$}} &\colorbox{green!10}{5.100 $\rightarrow$ 1.197~\textsuperscript{$\ast$}} &\colorbox{green!10}{4.622 $\rightarrow$ 1.102~\textsuperscript{$\ast$}} \\ECC &\colorbox{green!10}{0.762 $\rightarrow$ 0.203~\textsuperscript{$\ast$}} &\colorbox{green!10}{0.760 $\rightarrow$ 0.207~\textsuperscript{$\ast$}} &\colorbox{green!10}{0.817 $\rightarrow$ 0.098~\textsuperscript{$\ast$}} &\colorbox{green!10}{0.738 $\rightarrow$ 0.213~\textsuperscript{$\ast$}} &\colorbox{green!10}{0.796 $\rightarrow$ 0.261~\textsuperscript{$\ast$}} &\colorbox{green!10}{0.845 $\rightarrow$ 0.162~\textsuperscript{$\ast$}} &\colorbox{green!10}{0.814 $\rightarrow$ 0.135~\textsuperscript{$\ast$}} &\colorbox{green!10}{0.855 $\rightarrow$ 0.125~\textsuperscript{$\ast$}} &\colorbox{green!10}{0.806 $\rightarrow$ 0.065~\textsuperscript{$\ast$}} \\Deg &\colorbox{green!10}{0.494 $\rightarrow$ 0.105~\textsuperscript{$\ast$}} &\colorbox{green!10}{0.517 $\rightarrow$ 0.100~\textsuperscript{$\ast$}} &\colorbox{green!10}{0.538 $\rightarrow$ 0.048~\textsuperscript{$\ast$}} &\colorbox{green!10}{0.460 $\rightarrow$ 0.097~\textsuperscript{$\ast$}} &\colorbox{green!10}{0.494 $\rightarrow$ 0.121~\textsuperscript{$\ast$}} &\colorbox{green!10}{0.525 $\rightarrow$ 0.076~\textsuperscript{$\ast$}} &\colorbox{green!10}{0.593 $\rightarrow$ 0.069~\textsuperscript{$\ast$}} &\colorbox{green!10}{0.630 $\rightarrow$ 0.059~\textsuperscript{$\ast$}} &\colorbox{green!10}{0.569 $\rightarrow$ 0.032~\textsuperscript{$\ast$}} \\\hline\multicolumn{6}{l}{\textbf{Axiom 4:} Negatively Changed $\uparrow$} \\ \hline PE &\colorbox{magenta!10}{2.027 $\rightarrow$ 1.829~\textsuperscript{ }} &\colorbox{magenta!20}{2.245 $\rightarrow$ 1.342~\textsuperscript{$\ast$}} &\colorbox{magenta!20}{2.423 $\rightarrow$ 1.386~\textsuperscript{$\ast$}} &\colorbox{magenta!10}{1.139 $\rightarrow$ 1.017~\textsuperscript{ }} &\colorbox{magenta!10}{1.017 $\rightarrow$ 0.911~\textsuperscript{ }} &\colorbox{magenta!10}{1.427 $\rightarrow$ 1.067~\textsuperscript{ }} &\colorbox{magenta!10}{1.248 $\rightarrow$ 0.874~\textsuperscript{ }} &\colorbox{magenta!10}{1.600 $\rightarrow$ 0.543~\textsuperscript{ }} &\colorbox{magenta!10}{1.964 $\rightarrow$ 0.223~\textsuperscript{ }} \\SE &\colorbox{magenta!20}{5.476 $\rightarrow$ 4.683~\textsuperscript{$\ast$}} &\colorbox{magenta!20}{5.494 $\rightarrow$ 4.245~\textsuperscript{$\ast$}} &\colorbox{magenta!20}{5.689 $\rightarrow$ 4.419~\textsuperscript{$\ast$}} &\colorbox{magenta!20}{4.626 $\rightarrow$ 4.176~\textsuperscript{$\ast$}} &\colorbox{magenta!20}{4.554 $\rightarrow$ 4.028~\textsuperscript{$\ast$}} &\colorbox{magenta!20}{4.523 $\rightarrow$ 3.697~\textsuperscript{$\ast$}} &\colorbox{magenta!20}{4.941 $\rightarrow$ 3.822~\textsuperscript{$\ast$}} &\colorbox{magenta!20}{4.678 $\rightarrow$ 3.879~\textsuperscript{$\ast$}} &\colorbox{magenta!20}{5.367 $\rightarrow$ 3.435~\textsuperscript{$\ast$}} \\PE+M &\colorbox{magenta!10}{3.922 $\rightarrow$ 3.817~\textsuperscript{ }} &\colorbox{magenta!20}{3.501 $\rightarrow$ 2.822~\textsuperscript{$\ast$}} &\colorbox{magenta!20}{4.112 $\rightarrow$ 3.021~\textsuperscript{$\ast$}} &\colorbox{magenta!10}{1.649 $\rightarrow$ 1.611~\textsuperscript{ }} &\colorbox{magenta!5}{1.465 $\rightarrow$ 1.570~\textsuperscript{ }} &\colorbox{magenta!10}{1.646 $\rightarrow$ 1.313~\textsuperscript{ }} &\colorbox{magenta!10}{1.634 $\rightarrow$ 1.153~\textsuperscript{ }} &\colorbox{magenta!10}{1.784 $\rightarrow$ 0.621~\textsuperscript{ }} &\colorbox{magenta!10}{2.302 $\rightarrow$ 0.339~\textsuperscript{ }} \\SE+M &\colorbox{magenta!20}{7.532 $\rightarrow$ 6.421~\textsuperscript{$\ast$}} &\colorbox{magenta!20}{7.092 $\rightarrow$ 5.728~\textsuperscript{$\ast$}} &\colorbox{magenta!20}{7.660 $\rightarrow$ 6.024~\textsuperscript{$\ast$}} &\colorbox{magenta!20}{5.387 $\rightarrow$ 4.771~\textsuperscript{$\ast$}} &\colorbox{magenta!20}{5.256 $\rightarrow$ 4.773~\textsuperscript{$\ast$}} &\colorbox{magenta!20}{5.135 $\rightarrow$ 4.003~\textsuperscript{$\ast$}} &\colorbox{magenta!20}{5.530 $\rightarrow$ 4.164~\textsuperscript{$\ast$}} &\colorbox{magenta!20}{5.171 $\rightarrow$ 4.041~\textsuperscript{$\ast$}} &\colorbox{magenta!20}{5.972 $\rightarrow$ 3.593~\textsuperscript{$\ast$}} \\EigV &\colorbox{magenta!20}{2.876 $\rightarrow$ 1.754~\textsuperscript{$\ast$}} &\colorbox{magenta!20}{3.040 $\rightarrow$ 1.550~\textsuperscript{$\ast$}} &\colorbox{magenta!20}{2.791 $\rightarrow$ 1.729~\textsuperscript{$\ast$}} &\colorbox{magenta!20}{2.919 $\rightarrow$ 1.995~\textsuperscript{$\ast$}} &\colorbox{magenta!20}{2.983 $\rightarrow$ 1.780~\textsuperscript{$\ast$}} &\colorbox{magenta!10}{2.887 $\rightarrow$ 2.134~\textsuperscript{ }} &\colorbox{magenta!20}{3.995 $\rightarrow$ 1.683~\textsuperscript{$\ast$}} &\colorbox{magenta!20}{4.122 $\rightarrow$ 1.840~\textsuperscript{$\ast$}} &\colorbox{magenta!20}{5.520 $\rightarrow$ 1.421~\textsuperscript{$\ast$}} \\ECC &\colorbox{magenta!20}{0.685 $\rightarrow$ 0.343~\textsuperscript{$\ast$}} &\colorbox{magenta!20}{0.641 $\rightarrow$ 0.307~\textsuperscript{$\ast$}} &\colorbox{magenta!10}{0.505 $\rightarrow$ 0.395~\textsuperscript{ }} &\colorbox{magenta!20}{0.705 $\rightarrow$ 0.499~\textsuperscript{$\ast$}} &\colorbox{magenta!20}{0.700 $\rightarrow$ 0.434~\textsuperscript{$\ast$}} &\colorbox{magenta!20}{0.741 $\rightarrow$ 0.433~\textsuperscript{$\ast$}} &\colorbox{magenta!20}{0.755 $\rightarrow$ 0.333~\textsuperscript{$\ast$}} &\colorbox{magenta!20}{0.799 $\rightarrow$ 0.429~\textsuperscript{$\ast$}} &\colorbox{magenta!20}{0.917 $\rightarrow$ 0.245~\textsuperscript{$\ast$}} \\Deg &\colorbox{magenta!20}{0.417 $\rightarrow$ 0.229~\textsuperscript{$\ast$}} &\colorbox{magenta!20}{0.442 $\rightarrow$ 0.171~\textsuperscript{$\ast$}} &\colorbox{magenta!20}{0.426 $\rightarrow$ 0.199~\textsuperscript{$\ast$}} &\colorbox{magenta!20}{0.384 $\rightarrow$ 0.253~\textsuperscript{$\ast$}} &\colorbox{magenta!20}{0.397 $\rightarrow$ 0.215~\textsuperscript{$\ast$}} &\colorbox{magenta!10}{0.405 $\rightarrow$ 0.269~\textsuperscript{ }} &\colorbox{magenta!20}{0.508 $\rightarrow$ 0.215~\textsuperscript{$\ast$}} &\colorbox{magenta!20}{0.546 $\rightarrow$ 0.199~\textsuperscript{$\ast$}} &\colorbox{magenta!20}{0.688 $\rightarrow$ 0.120~\textsuperscript{$\ast$}} \\\hline

\end{tabular}
\caption{
Comparison of changes in average uncertainty values for Axioms 1--4 before (left) and after (right) applying RAG with Llama2-chat.
Axioms are implemented using the \textit{Reference-free} method.
Colors \colorbox{green!10}{green} and \colorbox{magenta!20}{deep red} indicate significant changes aligning or conflicting with axioms, respectively. 
Color \colorbox{magenta!10}{shallow red} represents non-significant changes conflicting with axioms. Significance is marked by $\ast$.}
\label{tab:axioms_rel_nli_llama}
\end{table*}

\renewcommand{\arraystretch}{1.25}
\begin{table*}[h!]
\centering
\setlength{\tabcolsep}{3pt}
\scriptsize
\begin{tabular}{l|ccccc|ccccc|ccccc}
\hline
\textbf{UE} &
\multicolumn{5}{c}{\textbf{NQ-open}} & \multicolumn{5}{c}{\textbf{TriviaQA}} & \multicolumn{5}{c}{\textbf{PopQA}} \\ \hline
& \textbf{A1} (\%) & \textbf{A2} (\%) & \textbf{A3} (\%) & \textbf{A4} (\%) & \textbf{AUROC}
& \textbf{A1} (\%) & \textbf{A2} (\%) & \textbf{A3} (\%) & \textbf{A4} (\%) & \textbf{AUROC}
& \textbf{A1} (\%) & \textbf{A2} (\%) & \textbf{A3} (\%) & \textbf{A4} (\%) & \textbf{AUROC} \\ \hline\hline

PE   & 
69.77 & \underline{42.95} & 80.54 & 45.21 & 64.40 &
63.88 & 39.05 & 87.24 & 61.30 & 79.14 &
70.57 & 22.22 & 89.17 & 44.23 & 65.72 \\

+CTI &
67.91 & 40.39 & 78.60 & 48.85 & 64.82 &
65.67 & 37.87 & 86.01 & 64.04 & 79.90 &
72.91 & 14.82 & 89.63 & 46.80 & 67.14 \\

+NLI &
70.81 & 42.31 & 80.54 & \underline{52.81} & 66.50 &
65.88 & 42.01 & 86.42 & 64.04 & 79.78 &
\underline{74.23} & \underline{24.69} & \underline{93.09} & 46.80 & \underline{68.65} \\

+MCH & 
\underline{78.05} & 32.05 & \underline{85.41} & 49.18 & \underline{67.15} &
\underline{67.53} & \underline{44.38} & \underline{87.24} & \underline{64.38} & \textbf{80.25} &
74.22 & 17.28 & 91.01 & \underline{48.72} & 68.63 \\ \hline

SE   &
76.40 & 32.69 & \underline{90.54} & 37.62 & 65.66 &
65.88 & 31.36 & \underline{91.36} & 51.71 & \underline{78.83} &
74.22 & 16.05 & \underline{95.39} & 33.97 & 68.53 \\

+CTI &
74.12 & 31.41 & 87.57 & 42.57 & 65.13 &
53.22 & 40.23 & 85.60 & 59.93 & 76.28 &
70.57 & 12.35 & 93.55 & 37.18 & 67.10 \\

+NLI &
70.39 & \underline{42.95} & 87.03 & \underline{49.51} & 66.92 &
51.57 & \underline{48.52} & 86.83 & \underline{62.67} & 77.01 &
69.01 & \underline{25.93} & 94.24 & 39.74 & \underline{70.53} \\

+MCH & 
\underline{78.47} & 30.13 & 89.73 & 42.57 & \textbf{\underline{69.66}} &
\underline{68.60} & 45.56 & 89.30 & 56.16 & 77.65 &
\underline{80.21} & 12.35 & 94.01 & \underline{42.31} & 70.10 \\ \hline

EigV & 
54.66 & 14.10 & 85.40 & 27.72 & 63.03 &
19.24 & 23.08 & 85.19 & \underline{41.44} & 72.25 &
41.15 & 6.17 & 93.08 & 26.92 & 66.35 \\

+CTI &
71.22 & 24.36 & 87.30 & 37.95 & 65.06 &
47.93 & 47.34 & 88.89 & 60.27 & 74.79 &
68.23 & 16.05 & 93.55 & \underline{39.74} & 65.23 \\

+NLI &
69.98 & \underline{38.46} & 88.65 & 41.91 & 67.29 &
48.64 & 50.89 & 87.24 & \underline{60.27} & 74.48 &
65.88 & \underline{32.10} & 95.16 & 39.10 & 68.40 \\

+MCH & 
\underline{77.85} & 28.85 & \underline{92.16} & 36.63 & \underline{68.45} &
\underline{68.81} & 45.56 & \underline{90.95} & 53.43 & \underline{75.81} &
\underline{83.07} & 12.35 & \underline{96.31} & 37.18 & \underline{70.39} \\ \hline

ECC & 
53.00 & 13.46 & 81.62 & 26.07 & 62.87 &
18.31 & 14.79 & 78.60 & 35.62 & 71.72 &
40.62 & 4.93 & 92.16 & 23.08 & 66.28 \\

+CTI & 
72.05 & 29.48 & 86.48 & 39.60 & 66.76 &
47.13 & 50.29 & 82.71 & 60.95 & 76.22 &
67.45 & 18.52 & 94.24 & 37.82 & 68.36 \\

+NLI & 
70.18 & 39.74 & 87.29 & 43.23 & 67.59 &
48.35 & 50.29 & 84.36 & 62.32 & 75.77 &
65.88 & 29.63 & 95.62 & 36.53 & 69.91 \\

+MCH & 
79.08 & 32.05 & 90.81 & 37.29 & \underline{68.80} &
68.74 & 43.19 & 90.12 & 53.08 & \underline{77.67} &
81.77 & 11.11 & 96.08 & 36.53 & \textbf{72.71} \\ \hline





\end{tabular}
\caption{Percentage of samples passing the axioms before and after calibration for Contriver with Mistral-v0.3. The results show that as the number of samples passing the axioms increases, the AUROC also improves. bold values indicate the best performance for each dataset, while underlined values represent the best performance achieved by a UE method and its calibrated variants in terms of axiomatic satisfaction.}
\label{tab:axioms_sample_percentage_mistral}
\end{table*}

\begin{figure*}[h]
  \centering
  \includegraphics[width=0.98\textwidth]{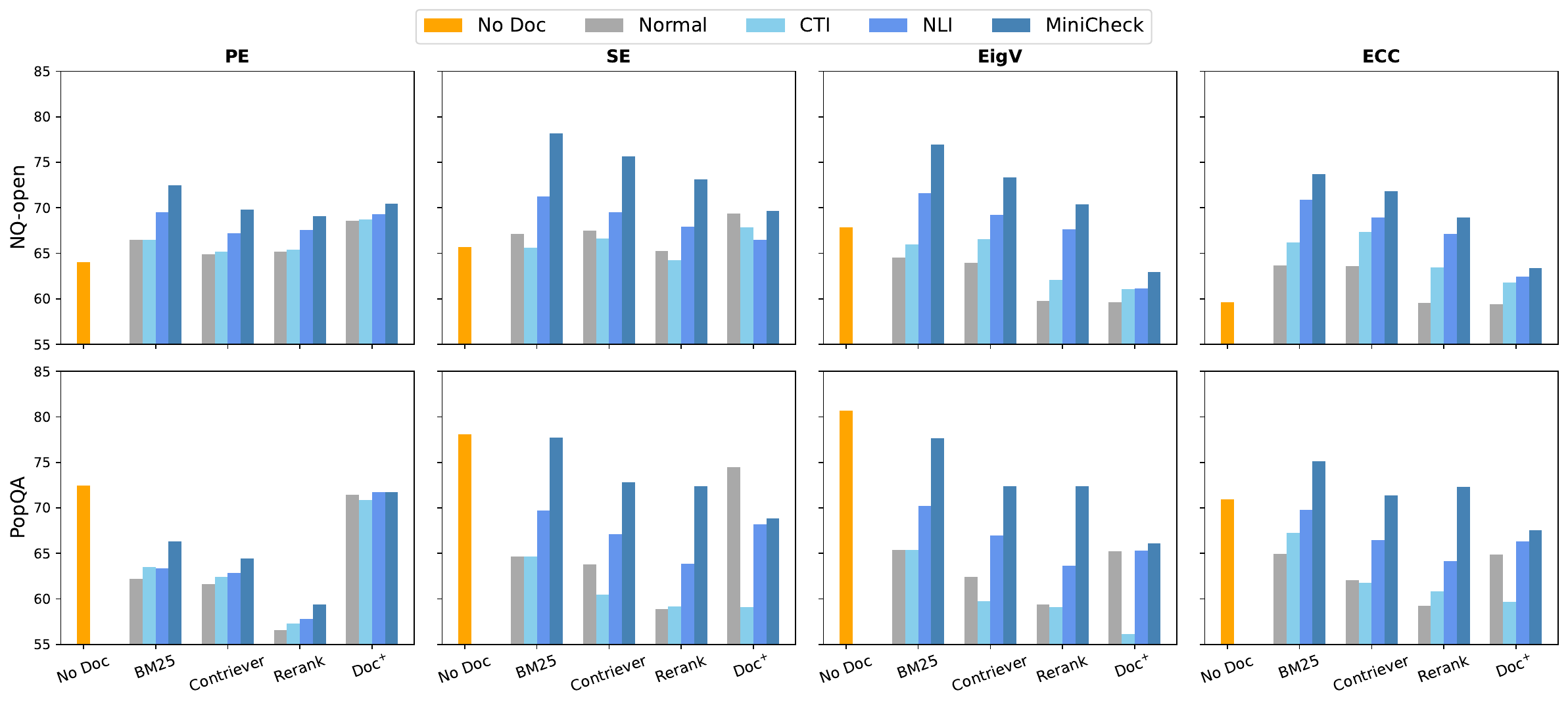}
  \caption{
  Comparison of AUROC between the no-RAG and calibrated RAG settings for Llama2-chat for NQ-open and \textsc{PopQA} datasets. AUROC improves significantly, either surpassing the no-RAG setting or reducing the gap between them.}
  \label{fig:auroc_cal_nq_popqa_llama2}
\end{figure*}

\begin{figure*}[h!]
  \centering
  \includegraphics[width=0.98\textwidth]{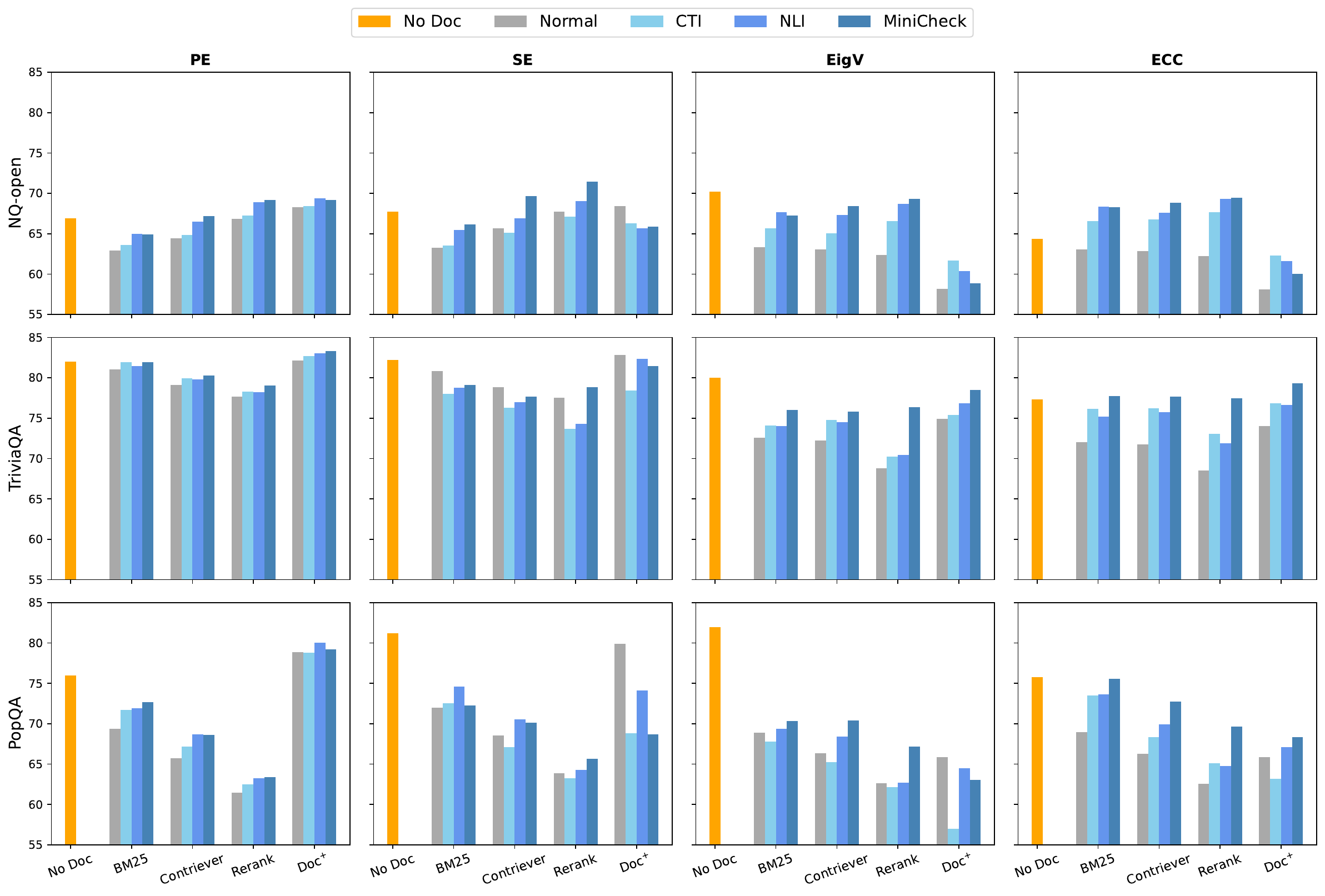}
  \shrink
  \caption{Comparison of AUROC between the no-RAG and calibrated RAG settings for Mistral-v0.3. AUROC improves significantly, either surpassing the no-RAG setting or reducing the gap between them.}
  \label{fig:auroc_cal_all_mis}
\end{figure*}

\begin{table*}[h!]
\centering
\setlength{\tabcolsep}{0.0pt}
\tiny
\begin{tabular}{c|ccc|ccc|ccc}
\hline
\textbf{UE} &
\multicolumn{3}{c}{\textbf{NQ-open}} & \multicolumn{3}{c}{\textbf{TriviaQA}} & \multicolumn{3}{c}{\textbf{PopQA}} \\ \hline %
& \textbf{BM25} & \textbf{Contriever} & \textbf{$\text{Doc}^{+}$} & 
\textbf{BM25} & \textbf{Contriever} & \textbf{$\text{Doc}^{+}$} & 
\textbf{BM25} & \textbf{Contriever} & \textbf{$\text{Doc}^{+}$} \\
\hline\hline

\multicolumn{6}{l}{\textbf{Axiom 1:} Positively Consistent $\downarrow$} \\ \hline PE &\colorbox{green!10}{55.357 $\rightarrow$ 55.357} &\colorbox{green!10}{60.194 $\rightarrow$ 61.489} &\colorbox{magenta!10}{61.793 $\rightarrow$ 61.598} &\colorbox{green!10}{41.454 $\rightarrow$ 43.124} &\colorbox{green!10}{45.532 $\rightarrow$ 46.002} &\colorbox{green!10}{45.853 $\rightarrow$ 46.401} &\colorbox{green!10}{62.369 $\rightarrow$ 68.293} &\colorbox{green!10}{66.189 $\rightarrow$ 69.628} &\colorbox{green!10}{79.511 $\rightarrow$ 81.040} \\SE &\colorbox{green!10}{66.964 $\rightarrow$ 70.982} &\colorbox{magenta!10}{77.346 $\rightarrow$ 77.023} &\colorbox{green!10}{79.337 $\rightarrow$ 83.041} &\colorbox{green!10}{47.446 $\rightarrow$ 55.894} &\colorbox{green!10}{50.141 $\rightarrow$ 56.538} &\colorbox{green!10}{52.191 $\rightarrow$ 58.059} &\colorbox{green!10}{69.338 $\rightarrow$ 79.443} &\colorbox{green!10}{71.920 $\rightarrow$ 78.510} &\colorbox{magenta!10}{88.073 $\rightarrow$ 86.544} \\PE+M &\colorbox{green!10}{57.589 $\rightarrow$ 57.589} &\colorbox{green!10}{61.165 $\rightarrow$ 63.430} &\colorbox{green!10}{62.378 $\rightarrow$ 62.378} &\colorbox{green!10}{44.008 $\rightarrow$ 44.499} &\colorbox{green!10}{43.744 $\rightarrow$ 45.720} &\colorbox{green!10}{46.870 $\rightarrow$ 48.044} &\colorbox{green!10}{62.021 $\rightarrow$ 68.293} &\colorbox{green!10}{69.628 $\rightarrow$ 73.639} &\colorbox{green!10}{81.040 $\rightarrow$ 82.875} \\SE+M &\colorbox{green!10}{64.286 $\rightarrow$ 68.304} &\colorbox{green!10}{76.375 $\rightarrow$ 76.375} &\colorbox{green!10}{72.904 $\rightarrow$ 73.099} &\colorbox{green!10}{47.348 $\rightarrow$ 55.599} &\colorbox{green!10}{48.354 $\rightarrow$ 57.008} &\colorbox{green!10}{52.504 $\rightarrow$ 58.059} &\colorbox{green!10}{69.686 $\rightarrow$ 80.488} &\colorbox{green!10}{73.639 $\rightarrow$ 79.656} &\colorbox{magenta!10}{89.602 $\rightarrow$ 88.073} \\EigV &\colorbox{green!10}{58.036 $\rightarrow$ 71.875} &\colorbox{green!10}{65.372 $\rightarrow$ 77.346} &\colorbox{green!10}{69.981 $\rightarrow$ 84.795} &\colorbox{green!10}{35.265 $\rightarrow$ 64.047} &\colorbox{green!10}{37.159 $\rightarrow$ 66.886} &\colorbox{green!10}{38.498 $\rightarrow$ 57.199} &\colorbox{green!10}{53.659 $\rightarrow$ 83.275} &\colorbox{green!10}{55.301 $\rightarrow$ 80.802} &\colorbox{green!10}{81.346 $\rightarrow$ 91.743} \\ECC &\colorbox{green!10}{54.464 $\rightarrow$ 72.321} &\colorbox{green!10}{61.489 $\rightarrow$ 75.728} &\colorbox{green!10}{68.226 $\rightarrow$ 84.016} &\colorbox{green!10}{32.122 $\rightarrow$ 62.279} &\colorbox{green!10}{34.243 $\rightarrow$ 65.475} &\colorbox{green!10}{34.977 $\rightarrow$ 55.634} &\colorbox{green!10}{50.523 $\rightarrow$ 80.836} &\colorbox{green!10}{52.436 $\rightarrow$ 78.797} &\colorbox{green!10}{77.064 $\rightarrow$ 88.379} \\Deg &\colorbox{green!10}{55.804 $\rightarrow$ 57.589} &\colorbox{green!10}{64.401 $\rightarrow$ 65.049} &\colorbox{green!10}{70.565 $\rightarrow$ 70.565} &\colorbox{green!10}{34.283 $\rightarrow$ 35.069} &\colorbox{green!10}{36.595 $\rightarrow$ 37.065} &\colorbox{green!10}{37.950 $\rightarrow$ 38.419} &\colorbox{green!10}{54.007 $\rightarrow$ 55.401} &\colorbox{green!10}{55.014 $\rightarrow$ 55.874} &\colorbox{green!10}{81.346 $\rightarrow$ 81.346} \\\hline\multicolumn{6}{l}{\textbf{Axiom 2:} Negatively Consistent $\uparrow$} \\ \hline PE &\colorbox{magenta!10}{46.237 $\rightarrow$ 44.086} &\colorbox{magenta!10}{47.853 $\rightarrow$ 44.172} &\colorbox{magenta!10}{47.059 $\rightarrow$ 44.118} &\colorbox{magenta!10}{52.299 $\rightarrow$ 50.575} &\colorbox{green!10}{43.781 $\rightarrow$ 43.781} &\colorbox{magenta!10}{56.477 $\rightarrow$ 56.218} &\colorbox{magenta!10}{49.275 $\rightarrow$ 44.928} &\colorbox{magenta!10}{42.466 $\rightarrow$ 39.726} &\colorbox{green!10}{57.143 $\rightarrow$ 57.143} \\SE &\colorbox{magenta!10}{34.409 $\rightarrow$ 31.183} &\colorbox{magenta!10}{33.742 $\rightarrow$ 26.380} &\colorbox{magenta!10}{31.618 $\rightarrow$ 28.676} &\colorbox{magenta!10}{42.529 $\rightarrow$ 41.379} &\colorbox{green!10}{35.821 $\rightarrow$ 39.303} &\colorbox{green!10}{45.078 $\rightarrow$ 50.777} &\colorbox{magenta!10}{34.783 $\rightarrow$ 21.739} &\colorbox{magenta!10}{31.507 $\rightarrow$ 26.027} &\colorbox{magenta!10}{42.857 $\rightarrow$ 28.571} \\PE+M &\colorbox{green!10}{39.247 $\rightarrow$ 39.247} &\colorbox{magenta!10}{42.945 $\rightarrow$ 38.037} &\colorbox{magenta!10}{47.794 $\rightarrow$ 46.324} &\colorbox{magenta!10}{49.425 $\rightarrow$ 47.701} &\colorbox{green!10}{41.791 $\rightarrow$ 41.791} &\colorbox{green!10}{52.332 $\rightarrow$ 53.886} &\colorbox{magenta!10}{44.928 $\rightarrow$ 43.478} &\colorbox{magenta!10}{43.836 $\rightarrow$ 42.466} &\colorbox{green!10}{57.143 $\rightarrow$ 57.143} \\SE+M &\colorbox{magenta!10}{31.720 $\rightarrow$ 30.108} &\colorbox{magenta!10}{31.288 $\rightarrow$ 26.380} &\colorbox{green!10}{35.294 $\rightarrow$ 36.029} &\colorbox{magenta!10}{41.379 $\rightarrow$ 37.356} &\colorbox{green!10}{35.323 $\rightarrow$ 38.308} &\colorbox{green!10}{44.301 $\rightarrow$ 51.295} &\colorbox{magenta!10}{33.333 $\rightarrow$ 17.391} &\colorbox{magenta!10}{30.137 $\rightarrow$ 27.397} &\colorbox{magenta!10}{42.857 $\rightarrow$ 28.571} \\EigV &\colorbox{green!10}{19.355 $\rightarrow$ 31.720} &\colorbox{green!10}{12.883 $\rightarrow$ 20.245} &\colorbox{green!10}{5.147 $\rightarrow$ 26.471} &\colorbox{green!10}{29.885 $\rightarrow$ 34.483} &\colorbox{green!10}{24.378 $\rightarrow$ 30.846} &\colorbox{green!10}{37.047 $\rightarrow$ 50.259} &\colorbox{magenta!10}{15.942 $\rightarrow$ 13.043} &\colorbox{green!10}{6.849 $\rightarrow$ 19.178} &\colorbox{magenta!10}{42.857 $\rightarrow$ 28.571} \\ECC &\colorbox{green!10}{14.516 $\rightarrow$ 37.097} &\colorbox{green!10}{9.816 $\rightarrow$ 23.313} &\colorbox{green!10}{5.882 $\rightarrow$ 30.147} &\colorbox{green!10}{19.540 $\rightarrow$ 35.057} &\colorbox{green!10}{14.428 $\rightarrow$ 31.841} &\colorbox{green!10}{21.503 $\rightarrow$ 58.031} &\colorbox{green!10}{10.145 $\rightarrow$ 20.290} &\colorbox{green!10}{6.849 $\rightarrow$ 23.288} &\colorbox{green!10}{28.571 $\rightarrow$ 28.571} \\Deg &\colorbox{green!10}{20.968 $\rightarrow$ 20.968} &\colorbox{magenta!10}{17.178 $\rightarrow$ 15.951} &\colorbox{green!10}{5.147 $\rightarrow$ 6.618} &\colorbox{green!10}{29.885 $\rightarrow$ 31.034} &\colorbox{green!10}{24.378 $\rightarrow$ 24.876} &\colorbox{green!10}{36.788 $\rightarrow$ 42.487} &\colorbox{green!10}{13.043 $\rightarrow$ 13.043} &\colorbox{green!10}{12.329 $\rightarrow$ 12.329} &\colorbox{green!10}{57.143 $\rightarrow$ 57.143} \\\hline\multicolumn{6}{l}{\textbf{Axiom 3:} Positively Changed $\downarrow$} \\ \hline PE &\colorbox{magenta!10}{82.215 $\rightarrow$ 81.544} &\colorbox{magenta!10}{77.346 $\rightarrow$ 76.430} &\colorbox{magenta!10}{82.557 $\rightarrow$ 81.541} &\colorbox{magenta!10}{73.402 $\rightarrow$ 72.634} &\colorbox{magenta!10}{70.256 $\rightarrow$ 69.231} &\colorbox{magenta!10}{74.870 $\rightarrow$ 73.830} &\colorbox{green!10}{82.331 $\rightarrow$ 83.083} &\colorbox{green!10}{87.572 $\rightarrow$ 87.954} &\colorbox{green!10}{84.314 $\rightarrow$ 84.540} \\SE &\colorbox{green!10}{93.289 $\rightarrow$ 93.289} &\colorbox{magenta!10}{91.533 $\rightarrow$ 89.703} &\colorbox{magenta!10}{93.057 $\rightarrow$ 91.194} &\colorbox{magenta!10}{86.445 $\rightarrow$ 83.632} &\colorbox{magenta!10}{84.615 $\rightarrow$ 79.744} &\colorbox{magenta!10}{88.042 $\rightarrow$ 83.882} &\colorbox{magenta!10}{93.233 $\rightarrow$ 90.226} &\colorbox{magenta!10}{94.073 $\rightarrow$ 91.205} &\colorbox{magenta!10}{92.534 $\rightarrow$ 88.235} \\PE+M &\colorbox{magenta!10}{81.544 $\rightarrow$ 79.866} &\colorbox{magenta!10}{77.574 $\rightarrow$ 76.888} &\colorbox{magenta!10}{80.271 $\rightarrow$ 78.493} &\colorbox{green!10}{76.982 $\rightarrow$ 77.749} &\colorbox{magenta!10}{73.590 $\rightarrow$ 72.308} &\colorbox{magenta!10}{80.069 $\rightarrow$ 77.296} &\colorbox{magenta!10}{88.346 $\rightarrow$ 87.594} &\colorbox{magenta!10}{90.822 $\rightarrow$ 90.440} &\colorbox{green!10}{84.389 $\rightarrow$ 84.691} \\SE+M &\colorbox{magenta!10}{90.604 $\rightarrow$ 88.591} &\colorbox{magenta!10}{88.787 $\rightarrow$ 85.812} &\colorbox{magenta!10}{88.654 $\rightarrow$ 86.198} &\colorbox{magenta!10}{86.957 $\rightarrow$ 84.143} &\colorbox{magenta!10}{84.359 $\rightarrow$ 80.000} &\colorbox{magenta!10}{88.562 $\rightarrow$ 84.922} &\colorbox{magenta!10}{93.609 $\rightarrow$ 92.857} &\colorbox{magenta!10}{94.455 $\rightarrow$ 92.543} &\colorbox{magenta!10}{93.439 $\rightarrow$ 89.668} \\EigV &\colorbox{green!10}{90.604 $\rightarrow$ 91.611} &\colorbox{green!10}{88.558 $\rightarrow$ 90.389} &\colorbox{green!10}{89.077 $\rightarrow$ 91.025} &\colorbox{magenta!10}{86.189 $\rightarrow$ 85.166} &\colorbox{green!10}{86.154 $\rightarrow$ 86.154} &\colorbox{green!10}{83.709 $\rightarrow$ 86.308} &\colorbox{magenta!10}{91.353 $\rightarrow$ 90.977} &\colorbox{green!10}{92.925 $\rightarrow$ 93.499} &\colorbox{green!10}{86.652 $\rightarrow$ 89.367} \\ECC &\colorbox{green!10}{82.886 $\rightarrow$ 87.919} &\colorbox{green!10}{83.066 $\rightarrow$ 87.185} &\colorbox{green!10}{82.557 $\rightarrow$ 86.622} &\colorbox{green!10}{79.028 $\rightarrow$ 80.563} &\colorbox{green!10}{73.590 $\rightarrow$ 77.692} &\colorbox{green!10}{75.390 $\rightarrow$ 80.243} &\colorbox{green!10}{86.466 $\rightarrow$ 89.850} &\colorbox{green!10}{87.380 $\rightarrow$ 90.822} &\colorbox{green!10}{82.730 $\rightarrow$ 86.652} \\Deg &\colorbox{green!10}{90.604 $\rightarrow$ 90.940} &\colorbox{green!10}{87.414 $\rightarrow$ 87.643} &\colorbox{green!10}{89.331 $\rightarrow$ 89.670} &\colorbox{green!10}{85.934 $\rightarrow$ 86.189} &\colorbox{magenta!10}{86.410 $\rightarrow$ 85.128} &\colorbox{green!10}{85.442 $\rightarrow$ 85.789} &\colorbox{magenta!10}{91.353 $\rightarrow$ 90.977} &\colorbox{green!10}{92.543 $\rightarrow$ 92.543} &\colorbox{magenta!10}{86.576 $\rightarrow$ 86.501} \\\hline\multicolumn{6}{l}{\textbf{Axiom 4:} Negatively Changed $\uparrow$} \\ \hline PE &\colorbox{green!10}{51.136 $\rightarrow$ 52.273} &\colorbox{green!10}{51.163 $\rightarrow$ 53.876} &\colorbox{green!10}{49.231 $\rightarrow$ 50.769} &\colorbox{magenta!10}{66.944 $\rightarrow$ 66.389} &\colorbox{green!10}{66.879 $\rightarrow$ 68.471} &\colorbox{magenta!10}{66.372 $\rightarrow$ 63.717} &\colorbox{magenta!10}{42.045 $\rightarrow$ 39.773} &\colorbox{green!10}{38.168 $\rightarrow$ 38.168} &\colorbox{green!10}{27.586 $\rightarrow$ 27.586} \\SE &\colorbox{green!10}{36.080 $\rightarrow$ 40.625} &\colorbox{green!10}{36.047 $\rightarrow$ 40.310} &\colorbox{magenta!10}{44.615 $\rightarrow$ 40.000} &\colorbox{green!10}{55.556 $\rightarrow$ 58.889} &\colorbox{green!10}{54.777 $\rightarrow$ 56.688} &\colorbox{green!10}{52.212 $\rightarrow$ 57.522} &\colorbox{green!10}{31.818 $\rightarrow$ 36.364} &\colorbox{magenta!10}{29.008 $\rightarrow$ 26.718} &\colorbox{magenta!10}{25.287 $\rightarrow$ 22.989} \\PE+M &\colorbox{green!10}{47.727 $\rightarrow$ 49.716} &\colorbox{green!10}{50.388 $\rightarrow$ 53.876} &\colorbox{green!10}{50.769 $\rightarrow$ 56.923} &\colorbox{green!10}{63.333 $\rightarrow$ 64.722} &\colorbox{magenta!10}{66.242 $\rightarrow$ 65.287} &\colorbox{green!10}{64.602 $\rightarrow$ 65.487} &\colorbox{green!10}{38.636 $\rightarrow$ 38.636} &\colorbox{magenta!10}{32.061 $\rightarrow$ 31.298} &\colorbox{green!10}{26.437 $\rightarrow$ 27.586} \\SE+M &\colorbox{green!10}{38.636 $\rightarrow$ 41.193} &\colorbox{green!10}{40.698 $\rightarrow$ 42.636} &\colorbox{green!10}{41.538 $\rightarrow$ 49.231} &\colorbox{green!10}{55.278 $\rightarrow$ 57.500} &\colorbox{green!10}{53.503 $\rightarrow$ 56.369} &\colorbox{green!10}{53.097 $\rightarrow$ 55.752} &\colorbox{green!10}{31.250 $\rightarrow$ 33.523} &\colorbox{magenta!10}{28.244 $\rightarrow$ 24.427} &\colorbox{magenta!10}{24.138 $\rightarrow$ 20.690} \\EigV &\colorbox{green!10}{24.432 $\rightarrow$ 34.091} &\colorbox{green!10}{24.419 $\rightarrow$ 35.271} &\colorbox{green!10}{16.923 $\rightarrow$ 18.462} &\colorbox{green!10}{38.333 $\rightarrow$ 51.944} &\colorbox{green!10}{39.172 $\rightarrow$ 48.408} &\colorbox{green!10}{38.938 $\rightarrow$ 51.327} &\colorbox{green!10}{21.591 $\rightarrow$ 35.795} &\colorbox{green!10}{20.611 $\rightarrow$ 29.771} &\colorbox{green!10}{8.046 $\rightarrow$ 12.644} \\ECC &\colorbox{green!10}{19.602 $\rightarrow$ 39.205} &\colorbox{green!10}{18.992 $\rightarrow$ 37.984} &\colorbox{green!10}{16.923 $\rightarrow$ 30.769} &\colorbox{green!10}{30.556 $\rightarrow$ 57.500} &\colorbox{green!10}{30.892 $\rightarrow$ 53.185} &\colorbox{green!10}{26.549 $\rightarrow$ 60.177} &\colorbox{green!10}{18.182 $\rightarrow$ 44.318} &\colorbox{green!10}{18.321 $\rightarrow$ 34.351} &\colorbox{green!10}{8.046 $\rightarrow$ 19.540} \\Deg &\colorbox{green!10}{25.284 $\rightarrow$ 26.989} &\colorbox{green!10}{24.806 $\rightarrow$ 27.132} &\colorbox{green!10}{20.000 $\rightarrow$ 23.077} &\colorbox{green!10}{42.500 $\rightarrow$ 45.278} &\colorbox{green!10}{42.357 $\rightarrow$ 44.904} &\colorbox{green!10}{42.478 $\rightarrow$ 44.248} &\colorbox{green!10}{22.727 $\rightarrow$ 23.864} &\colorbox{green!10}{19.084 $\rightarrow$ 19.084} &\colorbox{green!10}{11.494 $\rightarrow$ 11.494} \\\hline

\end{tabular}
\caption{
Changes in the percentage of samples that satisfy the axioms before and after calibration for Llama2-chat. The relation function \(\mathcal{R}\) is implemented using CTI.}
\label{tab:axioms_rel_kldiv}
\end{table*}

\begin{table*}[h]
\centering
\setlength{\tabcolsep}{0.0pt}
\tiny
\begin{tabular}{c|ccc|ccc|ccc}
\hline
\textbf{UE} &
\multicolumn{3}{c}{\textbf{NQ-open}} & \multicolumn{3}{c}{\textbf{TriviaQA}} & \multicolumn{3}{c}{\textbf{PopQA}} \\ \hline %
& \textbf{BM25} & \textbf{Contriever} & \textbf{$\text{Doc}^{+}$} & 
\textbf{BM25} & \textbf{Contriever} & \textbf{$\text{Doc}^{+}$} & 
\textbf{BM25} & \textbf{Contriever} & \textbf{$\text{Doc}^{+}$} \\
\hline\hline

\multicolumn{6}{l}{\textbf{Axiom 1:} Positively Consistent $\downarrow$} \\ \hline PE &\colorbox{green!10}{55.357 $\rightarrow$ 60.714} &\colorbox{green!10}{60.194 $\rightarrow$ 66.019} &\colorbox{green!10}{61.793 $\rightarrow$ 66.667} &\colorbox{green!10}{41.454 $\rightarrow$ 44.695} &\colorbox{green!10}{45.532 $\rightarrow$ 48.730} &\colorbox{green!10}{45.853 $\rightarrow$ 47.966} &\colorbox{green!10}{62.369 $\rightarrow$ 63.415} &\colorbox{green!10}{66.189 $\rightarrow$ 68.481} &\colorbox{green!10}{79.511 $\rightarrow$ 80.122} \\SE &\colorbox{green!10}{66.964 $\rightarrow$ 73.661} &\colorbox{green!10}{77.346 $\rightarrow$ 80.259} &\colorbox{green!10}{79.337 $\rightarrow$ 79.922} &\colorbox{green!10}{47.446 $\rightarrow$ 58.743} &\colorbox{green!10}{50.141 $\rightarrow$ 58.231} &\colorbox{green!10}{52.191 $\rightarrow$ 57.433} &\colorbox{green!10}{69.338 $\rightarrow$ 73.868} &\colorbox{green!10}{71.920 $\rightarrow$ 74.785} &\colorbox{magenta!10}{88.073 $\rightarrow$ 81.040} \\PE+M &\colorbox{green!10}{57.589 $\rightarrow$ 62.054} &\colorbox{green!10}{61.165 $\rightarrow$ 67.961} &\colorbox{green!10}{62.378 $\rightarrow$ 67.057} &\colorbox{green!10}{44.008 $\rightarrow$ 46.660} &\colorbox{green!10}{43.744 $\rightarrow$ 48.260} &\colorbox{green!10}{46.870 $\rightarrow$ 49.687} &\colorbox{green!10}{62.021 $\rightarrow$ 63.763} &\colorbox{green!10}{69.628 $\rightarrow$ 72.206} &\colorbox{green!10}{81.040 $\rightarrow$ 82.263} \\SE+M &\colorbox{green!10}{64.286 $\rightarrow$ 70.089} &\colorbox{green!10}{76.375 $\rightarrow$ 77.023} &\colorbox{green!10}{72.904 $\rightarrow$ 74.269} &\colorbox{green!10}{47.348 $\rightarrow$ 58.350} &\colorbox{green!10}{48.354 $\rightarrow$ 58.325} &\colorbox{green!10}{52.504 $\rightarrow$ 57.825} &\colorbox{green!10}{69.686 $\rightarrow$ 76.655} &\colorbox{green!10}{73.639 $\rightarrow$ 75.072} &\colorbox{magenta!10}{89.602 $\rightarrow$ 84.709} \\EigV &\colorbox{green!10}{58.036 $\rightarrow$ 71.429} &\colorbox{green!10}{65.372 $\rightarrow$ 82.201} &\colorbox{green!10}{69.981 $\rightarrow$ 86.160} &\colorbox{green!10}{35.265 $\rightarrow$ 59.136} &\colorbox{green!10}{37.159 $\rightarrow$ 60.960} &\colorbox{green!10}{38.498 $\rightarrow$ 59.077} &\colorbox{green!10}{53.659 $\rightarrow$ 73.868} &\colorbox{green!10}{55.301 $\rightarrow$ 74.785} &\colorbox{green!10}{81.346 $\rightarrow$ 85.933} \\ECC &\colorbox{green!10}{54.464 $\rightarrow$ 70.982} &\colorbox{green!10}{61.489 $\rightarrow$ 78.641} &\colorbox{green!10}{68.226 $\rightarrow$ 84.795} &\colorbox{green!10}{32.122 $\rightarrow$ 55.403} &\colorbox{green!10}{34.243 $\rightarrow$ 58.043} &\colorbox{green!10}{34.977 $\rightarrow$ 55.399} &\colorbox{green!10}{50.523 $\rightarrow$ 72.474} &\colorbox{green!10}{52.436 $\rightarrow$ 71.347} &\colorbox{green!10}{77.064 $\rightarrow$ 84.404} \\Deg &\colorbox{green!10}{55.804 $\rightarrow$ 56.250} &\colorbox{green!10}{64.401 $\rightarrow$ 64.401} &\colorbox{green!10}{70.565 $\rightarrow$ 70.955} &\colorbox{green!10}{34.283 $\rightarrow$ 35.069} &\colorbox{green!10}{36.595 $\rightarrow$ 36.877} &\colorbox{green!10}{37.950 $\rightarrow$ 38.185} &\colorbox{green!10}{54.007 $\rightarrow$ 55.052} &\colorbox{green!10}{55.014 $\rightarrow$ 57.307} &\colorbox{magenta!10}{81.346 $\rightarrow$ 81.040} \\\hline\multicolumn{6}{l}{\textbf{Axiom 2:} Negatively Consistent $\uparrow$} \\ \hline PE &\colorbox{green!10}{46.237 $\rightarrow$ 47.312} &\colorbox{magenta!10}{47.853 $\rightarrow$ 47.239} &\colorbox{magenta!10}{47.059 $\rightarrow$ 46.324} &\colorbox{green!10}{52.299 $\rightarrow$ 54.598} &\colorbox{green!10}{43.781 $\rightarrow$ 45.274} &\colorbox{green!10}{56.477 $\rightarrow$ 58.549} &\colorbox{green!10}{49.275 $\rightarrow$ 49.275} &\colorbox{magenta!10}{42.466 $\rightarrow$ 41.096} &\colorbox{green!10}{57.143 $\rightarrow$ 57.143} \\SE &\colorbox{green!10}{34.409 $\rightarrow$ 38.172} &\colorbox{green!10}{33.742 $\rightarrow$ 38.037} &\colorbox{green!10}{31.618 $\rightarrow$ 31.618} &\colorbox{green!10}{42.529 $\rightarrow$ 48.276} &\colorbox{green!10}{35.821 $\rightarrow$ 43.284} &\colorbox{green!10}{45.078 $\rightarrow$ 56.218} &\colorbox{magenta!10}{34.783 $\rightarrow$ 33.333} &\colorbox{green!10}{31.507 $\rightarrow$ 32.877} &\colorbox{green!10}{42.857 $\rightarrow$ 57.143} \\PE+M &\colorbox{green!10}{39.247 $\rightarrow$ 43.011} &\colorbox{magenta!10}{42.945 $\rightarrow$ 41.718} &\colorbox{green!10}{47.794 $\rightarrow$ 50.735} &\colorbox{green!10}{49.425 $\rightarrow$ 54.023} &\colorbox{magenta!10}{41.791 $\rightarrow$ 41.294} &\colorbox{green!10}{52.332 $\rightarrow$ 56.218} &\colorbox{green!10}{44.928 $\rightarrow$ 46.377} &\colorbox{green!10}{43.836 $\rightarrow$ 45.205} &\colorbox{green!10}{57.143 $\rightarrow$ 57.143} \\SE+M &\colorbox{green!10}{31.720 $\rightarrow$ 38.710} &\colorbox{green!10}{31.288 $\rightarrow$ 36.196} &\colorbox{magenta!10}{35.294 $\rightarrow$ 33.824} &\colorbox{green!10}{41.379 $\rightarrow$ 45.977} &\colorbox{green!10}{35.323 $\rightarrow$ 39.801} &\colorbox{green!10}{44.301 $\rightarrow$ 55.440} &\colorbox{magenta!10}{33.333 $\rightarrow$ 30.435} &\colorbox{green!10}{30.137 $\rightarrow$ 32.877} &\colorbox{green!10}{42.857 $\rightarrow$ 42.857} \\EigV &\colorbox{green!10}{19.355 $\rightarrow$ 35.484} &\colorbox{green!10}{12.883 $\rightarrow$ 26.380} &\colorbox{green!10}{5.147 $\rightarrow$ 20.588} &\colorbox{green!10}{29.885 $\rightarrow$ 46.552} &\colorbox{green!10}{24.378 $\rightarrow$ 38.806} &\colorbox{green!10}{37.047 $\rightarrow$ 58.290} &\colorbox{green!10}{15.942 $\rightarrow$ 26.087} &\colorbox{green!10}{6.849 $\rightarrow$ 32.877} &\colorbox{green!10}{42.857 $\rightarrow$ 42.857} \\ECC &\colorbox{green!10}{14.516 $\rightarrow$ 43.011} &\colorbox{green!10}{9.816 $\rightarrow$ 32.515} &\colorbox{green!10}{5.882 $\rightarrow$ 25.000} &\colorbox{green!10}{19.540 $\rightarrow$ 55.172} &\colorbox{green!10}{14.428 $\rightarrow$ 42.786} &\colorbox{green!10}{21.503 $\rightarrow$ 76.425} &\colorbox{green!10}{10.145 $\rightarrow$ 34.783} &\colorbox{green!10}{6.849 $\rightarrow$ 32.877} &\colorbox{green!10}{28.571 $\rightarrow$ 57.143} \\Deg &\colorbox{green!10}{20.968 $\rightarrow$ 23.656} &\colorbox{green!10}{17.178 $\rightarrow$ 17.791} &\colorbox{green!10}{5.147 $\rightarrow$ 8.824} &\colorbox{green!10}{29.885 $\rightarrow$ 34.483} &\colorbox{green!10}{24.378 $\rightarrow$ 26.866} &\colorbox{green!10}{36.788 $\rightarrow$ 45.596} &\colorbox{green!10}{13.043 $\rightarrow$ 13.043} &\colorbox{green!10}{12.329 $\rightarrow$ 13.699} &\colorbox{green!10}{57.143 $\rightarrow$ 57.143} \\\hline\multicolumn{6}{l}{\textbf{Axiom 3:} Positively Changed $\downarrow$} \\ \hline PE &\colorbox{green!10}{82.215 $\rightarrow$ 84.228} &\colorbox{green!10}{77.346 $\rightarrow$ 77.574} &\colorbox{magenta!10}{82.557 $\rightarrow$ 81.964} &\colorbox{green!10}{73.402 $\rightarrow$ 73.913} &\colorbox{green!10}{70.256 $\rightarrow$ 70.513} &\colorbox{magenta!10}{74.870 $\rightarrow$ 74.003} &\colorbox{green!10}{82.331 $\rightarrow$ 84.586} &\colorbox{green!10}{87.572 $\rightarrow$ 88.145} &\colorbox{green!10}{84.314 $\rightarrow$ 84.615} \\SE &\colorbox{magenta!10}{93.289 $\rightarrow$ 88.591} &\colorbox{magenta!10}{91.533 $\rightarrow$ 86.270} &\colorbox{magenta!10}{93.057 $\rightarrow$ 86.113} &\colorbox{magenta!10}{86.445 $\rightarrow$ 84.910} &\colorbox{magenta!10}{84.615 $\rightarrow$ 80.513} &\colorbox{magenta!10}{88.042 $\rightarrow$ 84.749} &\colorbox{magenta!10}{93.233 $\rightarrow$ 91.729} &\colorbox{magenta!10}{94.073 $\rightarrow$ 92.161} &\colorbox{magenta!10}{92.534 $\rightarrow$ 87.029} \\PE+M &\colorbox{green!10}{81.544 $\rightarrow$ 85.235} &\colorbox{green!10}{77.574 $\rightarrow$ 79.863} &\colorbox{green!10}{80.271 $\rightarrow$ 80.610} &\colorbox{green!10}{76.982 $\rightarrow$ 77.238} &\colorbox{magenta!10}{73.590 $\rightarrow$ 72.821} &\colorbox{magenta!10}{80.069 $\rightarrow$ 78.683} &\colorbox{green!10}{88.346 $\rightarrow$ 88.346} &\colorbox{magenta!10}{90.822 $\rightarrow$ 90.057} &\colorbox{green!10}{84.389 $\rightarrow$ 85.143} \\SE+M &\colorbox{magenta!10}{90.604 $\rightarrow$ 87.248} &\colorbox{magenta!10}{88.787 $\rightarrow$ 84.211} &\colorbox{magenta!10}{88.654 $\rightarrow$ 82.557} &\colorbox{magenta!10}{86.957 $\rightarrow$ 84.655} &\colorbox{magenta!10}{84.359 $\rightarrow$ 81.026} &\colorbox{magenta!10}{88.562 $\rightarrow$ 86.482} &\colorbox{magenta!10}{93.609 $\rightarrow$ 92.105} &\colorbox{magenta!10}{94.455 $\rightarrow$ 93.690} &\colorbox{magenta!10}{93.439 $\rightarrow$ 88.235} \\EigV &\colorbox{green!10}{90.604 $\rightarrow$ 92.617} &\colorbox{green!10}{88.558 $\rightarrow$ 91.533} &\colorbox{green!10}{89.077 $\rightarrow$ 90.517} &\colorbox{green!10}{86.189 $\rightarrow$ 87.724} &\colorbox{green!10}{86.154 $\rightarrow$ 87.436} &\colorbox{green!10}{83.709 $\rightarrow$ 86.655} &\colorbox{green!10}{91.353 $\rightarrow$ 93.609} &\colorbox{green!10}{92.925 $\rightarrow$ 95.602} &\colorbox{green!10}{86.652 $\rightarrow$ 90.875} \\ECC &\colorbox{green!10}{82.886 $\rightarrow$ 88.255} &\colorbox{green!10}{83.066 $\rightarrow$ 87.185} &\colorbox{green!10}{82.557 $\rightarrow$ 86.791} &\colorbox{green!10}{79.028 $\rightarrow$ 84.655} &\colorbox{green!10}{73.590 $\rightarrow$ 77.436} &\colorbox{green!10}{75.390 $\rightarrow$ 78.163} &\colorbox{green!10}{86.466 $\rightarrow$ 91.353} &\colorbox{green!10}{87.380 $\rightarrow$ 92.161} &\colorbox{green!10}{82.730 $\rightarrow$ 88.537} \\Deg &\colorbox{magenta!10}{90.604 $\rightarrow$ 89.933} &\colorbox{magenta!10}{87.414 $\rightarrow$ 86.270} &\colorbox{magenta!10}{89.331 $\rightarrow$ 89.162} &\colorbox{green!10}{85.934 $\rightarrow$ 86.189} &\colorbox{magenta!10}{86.410 $\rightarrow$ 86.154} &\colorbox{magenta!10}{85.442 $\rightarrow$ 84.749} &\colorbox{green!10}{91.353 $\rightarrow$ 91.353} &\colorbox{magenta!10}{92.543 $\rightarrow$ 92.352} &\colorbox{green!10}{86.576 $\rightarrow$ 86.652} \\\hline\multicolumn{6}{l}{\textbf{Axiom 4:} Negatively Changed $\uparrow$} \\ \hline PE &\colorbox{green!10}{51.136 $\rightarrow$ 56.250} &\colorbox{green!10}{51.163 $\rightarrow$ 55.426} &\colorbox{green!10}{49.231 $\rightarrow$ 58.462} &\colorbox{green!10}{66.944 $\rightarrow$ 68.611} &\colorbox{green!10}{66.879 $\rightarrow$ 68.790} &\colorbox{green!10}{66.372 $\rightarrow$ 69.027} &\colorbox{green!10}{42.045 $\rightarrow$ 42.614} &\colorbox{green!10}{38.168 $\rightarrow$ 41.221} &\colorbox{green!10}{27.586 $\rightarrow$ 31.034} \\SE &\colorbox{green!10}{36.080 $\rightarrow$ 49.432} &\colorbox{green!10}{36.047 $\rightarrow$ 50.000} &\colorbox{green!10}{44.615 $\rightarrow$ 52.308} &\colorbox{green!10}{55.556 $\rightarrow$ 65.000} &\colorbox{green!10}{54.777 $\rightarrow$ 64.013} &\colorbox{green!10}{52.212 $\rightarrow$ 64.602} &\colorbox{green!10}{31.818 $\rightarrow$ 42.045} &\colorbox{green!10}{29.008 $\rightarrow$ 41.985} &\colorbox{green!10}{25.287 $\rightarrow$ 31.034} \\PE+M &\colorbox{green!10}{47.727 $\rightarrow$ 52.273} &\colorbox{green!10}{50.388 $\rightarrow$ 56.977} &\colorbox{green!10}{50.769 $\rightarrow$ 55.385} &\colorbox{green!10}{63.333 $\rightarrow$ 66.667} &\colorbox{green!10}{66.242 $\rightarrow$ 67.834} &\colorbox{green!10}{64.602 $\rightarrow$ 67.257} &\colorbox{green!10}{38.636 $\rightarrow$ 38.636} &\colorbox{green!10}{32.061 $\rightarrow$ 35.115} &\colorbox{green!10}{26.437 $\rightarrow$ 29.885} \\SE+M &\colorbox{green!10}{38.636 $\rightarrow$ 51.136} &\colorbox{green!10}{40.698 $\rightarrow$ 53.488} &\colorbox{green!10}{41.538 $\rightarrow$ 56.923} &\colorbox{green!10}{55.278 $\rightarrow$ 62.778} &\colorbox{green!10}{53.503 $\rightarrow$ 64.013} &\colorbox{green!10}{53.097 $\rightarrow$ 61.947} &\colorbox{green!10}{31.250 $\rightarrow$ 38.068} &\colorbox{green!10}{28.244 $\rightarrow$ 39.695} &\colorbox{green!10}{24.138 $\rightarrow$ 29.885} \\EigV &\colorbox{green!10}{24.432 $\rightarrow$ 35.795} &\colorbox{green!10}{24.419 $\rightarrow$ 36.047} &\colorbox{green!10}{16.923 $\rightarrow$ 33.846} &\colorbox{green!10}{38.333 $\rightarrow$ 57.500} &\colorbox{green!10}{39.172 $\rightarrow$ 53.185} &\colorbox{green!10}{38.938 $\rightarrow$ 53.982} &\colorbox{green!10}{21.591 $\rightarrow$ 36.932} &\colorbox{green!10}{20.611 $\rightarrow$ 38.931} &\colorbox{green!10}{8.046 $\rightarrow$ 18.391} \\ECC &\colorbox{green!10}{19.602 $\rightarrow$ 43.466} &\colorbox{green!10}{18.992 $\rightarrow$ 42.636} &\colorbox{green!10}{16.923 $\rightarrow$ 36.923} &\colorbox{green!10}{30.556 $\rightarrow$ 65.556} &\colorbox{green!10}{30.892 $\rightarrow$ 59.873} &\colorbox{green!10}{26.549 $\rightarrow$ 65.487} &\colorbox{green!10}{18.182 $\rightarrow$ 46.591} &\colorbox{green!10}{18.321 $\rightarrow$ 42.748} &\colorbox{green!10}{8.046 $\rightarrow$ 25.287} \\Deg &\colorbox{green!10}{25.284 $\rightarrow$ 29.545} &\colorbox{green!10}{24.806 $\rightarrow$ 27.907} &\colorbox{green!10}{20.000 $\rightarrow$ 24.615} &\colorbox{green!10}{42.500 $\rightarrow$ 47.222} &\colorbox{green!10}{42.357 $\rightarrow$ 48.089} &\colorbox{green!10}{42.478 $\rightarrow$ 48.673} &\colorbox{green!10}{22.727 $\rightarrow$ 24.432} &\colorbox{green!10}{19.084 $\rightarrow$ 21.374} &\colorbox{green!10}{11.494 $\rightarrow$ 16.092} \\\hline

\end{tabular}
\caption{Changes in the percentage of samples that satisfy the axioms before and after calibration for Llama2-chat. The relation function \(\mathcal{R}\) is implemented using NLI.}
\label{tab:axioms_rel_nli}
\end{table*}

\begin{table*}[h]
\centering
\setlength{\tabcolsep}{0.0pt}
\tiny
\begin{tabular}{c|ccc|ccc|ccc}
\hline
\textbf{UE} &
\multicolumn{3}{c}{\textbf{NQ-open}} & \multicolumn{3}{c}{\textbf{TriviaQA}} & \multicolumn{3}{c}{\textbf{PopQA}} \\ \hline %
& \textbf{BM25} & \textbf{Contriever} & \textbf{$\text{Doc}^{+}$} & 
\textbf{BM25} & \textbf{Contriever} & \textbf{$\text{Doc}^{+}$} & 
\textbf{BM25} & \textbf{Contriever} & \textbf{$\text{Doc}^{+}$} \\
\hline\hline

\multicolumn{6}{l}{\textbf{Axiom 1:} Positively Consistent $\downarrow$} \\ \hline PE &\colorbox{green!10}{55.357 $\rightarrow$ 70.089} &\colorbox{green!10}{60.194 $\rightarrow$ 75.081} &\colorbox{green!10}{61.793 $\rightarrow$ 75.634} &\colorbox{green!10}{41.454 $\rightarrow$ 48.134} &\colorbox{green!10}{45.532 $\rightarrow$ 51.364} &\colorbox{green!10}{45.853 $\rightarrow$ 51.174} &\colorbox{green!10}{62.369 $\rightarrow$ 68.293} &\colorbox{green!10}{66.189 $\rightarrow$ 69.341} &\colorbox{green!10}{79.511 $\rightarrow$ 81.040} \\SE &\colorbox{green!10}{66.964 $\rightarrow$ 78.571} &\colorbox{green!10}{77.346 $\rightarrow$ 86.408} &\colorbox{green!10}{79.337 $\rightarrow$ 90.058} &\colorbox{green!10}{47.446 $\rightarrow$ 72.299} &\colorbox{green!10}{50.141 $\rightarrow$ 73.283} &\colorbox{green!10}{52.191 $\rightarrow$ 73.083} &\colorbox{green!10}{69.338 $\rightarrow$ 81.533} &\colorbox{green!10}{71.920 $\rightarrow$ 77.937} &\colorbox{magenta!10}{88.073 $\rightarrow$ 86.544} \\PE+M &\colorbox{green!10}{57.589 $\rightarrow$ 68.750} &\colorbox{green!10}{61.165 $\rightarrow$ 78.317} &\colorbox{green!10}{62.378 $\rightarrow$ 76.023} &\colorbox{green!10}{44.008 $\rightarrow$ 48.723} &\colorbox{green!10}{43.744 $\rightarrow$ 51.646} &\colorbox{green!10}{46.870 $\rightarrow$ 52.034} &\colorbox{green!10}{62.021 $\rightarrow$ 70.035} &\colorbox{green!10}{69.628 $\rightarrow$ 71.347} &\colorbox{green!10}{81.040 $\rightarrow$ 81.957} \\SE+M &\colorbox{green!10}{64.286 $\rightarrow$ 75.446} &\colorbox{green!10}{76.375 $\rightarrow$ 86.731} &\colorbox{green!10}{72.904 $\rightarrow$ 87.329} &\colorbox{green!10}{47.348 $\rightarrow$ 71.709} &\colorbox{green!10}{48.354 $\rightarrow$ 72.907} &\colorbox{green!10}{52.504 $\rightarrow$ 72.457} &\colorbox{green!10}{69.686 $\rightarrow$ 82.230} &\colorbox{green!10}{73.639 $\rightarrow$ 78.223} &\colorbox{magenta!10}{89.602 $\rightarrow$ 87.156} \\EigV &\colorbox{green!10}{58.036 $\rightarrow$ 77.679} &\colorbox{green!10}{65.372 $\rightarrow$ 87.702} &\colorbox{green!10}{69.981 $\rightarrow$ 93.177} &\colorbox{green!10}{35.265 $\rightarrow$ 70.432} &\colorbox{green!10}{37.159 $\rightarrow$ 74.882} &\colorbox{green!10}{38.498 $\rightarrow$ 74.257} &\colorbox{green!10}{53.659 $\rightarrow$ 82.927} &\colorbox{green!10}{55.301 $\rightarrow$ 83.095} &\colorbox{green!10}{81.346 $\rightarrow$ 95.413} \\ECC &\colorbox{green!10}{54.464 $\rightarrow$ 76.786} &\colorbox{green!10}{61.489 $\rightarrow$ 86.084} &\colorbox{green!10}{68.226 $\rightarrow$ 92.398} &\colorbox{green!10}{32.122 $\rightarrow$ 66.306} &\colorbox{green!10}{34.243 $\rightarrow$ 72.437} &\colorbox{green!10}{34.977 $\rightarrow$ 70.736} &\colorbox{green!10}{50.523 $\rightarrow$ 80.488} &\colorbox{green!10}{52.436 $\rightarrow$ 79.370} &\colorbox{green!10}{77.064 $\rightarrow$ 93.272} \\Deg &\colorbox{green!10}{55.804 $\rightarrow$ 57.143} &\colorbox{green!10}{64.401 $\rightarrow$ 65.372} &\colorbox{magenta!10}{70.565 $\rightarrow$ 70.175} &\colorbox{green!10}{34.283 $\rightarrow$ 35.560} &\colorbox{green!10}{36.595 $\rightarrow$ 37.535} &\colorbox{green!10}{37.950 $\rightarrow$ 39.202} &\colorbox{green!10}{54.007 $\rightarrow$ 54.704} &\colorbox{green!10}{55.014 $\rightarrow$ 55.587} &\colorbox{green!10}{81.346 $\rightarrow$ 81.346} \\\hline\multicolumn{6}{l}{\textbf{Axiom 2:} Negatively Consistent $\uparrow$} \\ \hline PE &\colorbox{green!10}{46.237 $\rightarrow$ 52.151} &\colorbox{magenta!10}{47.853 $\rightarrow$ 39.877} &\colorbox{magenta!10}{47.059 $\rightarrow$ 38.971} &\colorbox{green!10}{52.299 $\rightarrow$ 57.471} &\colorbox{green!10}{43.781 $\rightarrow$ 49.254} &\colorbox{green!10}{56.477 $\rightarrow$ 60.622} &\colorbox{magenta!10}{49.275 $\rightarrow$ 46.377} &\colorbox{magenta!10}{42.466 $\rightarrow$ 39.726} &\colorbox{green!10}{57.143 $\rightarrow$ 57.143} \\SE &\colorbox{green!10}{34.409 $\rightarrow$ 40.323} &\colorbox{green!10}{33.742 $\rightarrow$ 34.969} &\colorbox{magenta!10}{31.618 $\rightarrow$ 27.941} &\colorbox{green!10}{42.529 $\rightarrow$ 54.023} &\colorbox{green!10}{35.821 $\rightarrow$ 49.751} &\colorbox{green!10}{45.078 $\rightarrow$ 56.995} &\colorbox{green!10}{34.783 $\rightarrow$ 37.681} &\colorbox{green!10}{31.507 $\rightarrow$ 31.507} &\colorbox{green!10}{42.857 $\rightarrow$ 71.429} \\PE+M &\colorbox{green!10}{39.247 $\rightarrow$ 46.774} &\colorbox{magenta!10}{42.945 $\rightarrow$ 34.969} &\colorbox{magenta!10}{47.794 $\rightarrow$ 40.441} &\colorbox{green!10}{49.425 $\rightarrow$ 58.621} &\colorbox{green!10}{41.791 $\rightarrow$ 48.756} &\colorbox{green!10}{52.332 $\rightarrow$ 59.585} &\colorbox{magenta!10}{44.928 $\rightarrow$ 43.478} &\colorbox{magenta!10}{43.836 $\rightarrow$ 39.726} &\colorbox{green!10}{57.143 $\rightarrow$ 71.429} \\SE+M &\colorbox{green!10}{31.720 $\rightarrow$ 44.086} &\colorbox{green!10}{31.288 $\rightarrow$ 34.969} &\colorbox{magenta!10}{35.294 $\rightarrow$ 30.882} &\colorbox{green!10}{41.379 $\rightarrow$ 51.149} &\colorbox{green!10}{35.323 $\rightarrow$ 48.756} &\colorbox{green!10}{44.301 $\rightarrow$ 57.254} &\colorbox{green!10}{33.333 $\rightarrow$ 34.783} &\colorbox{green!10}{30.137 $\rightarrow$ 31.507} &\colorbox{green!10}{42.857 $\rightarrow$ 71.429} \\EigV &\colorbox{green!10}{19.355 $\rightarrow$ 31.183} &\colorbox{green!10}{12.883 $\rightarrow$ 24.540} &\colorbox{green!10}{5.147 $\rightarrow$ 18.382} &\colorbox{green!10}{29.885 $\rightarrow$ 44.253} &\colorbox{green!10}{24.378 $\rightarrow$ 40.299} &\colorbox{green!10}{37.047 $\rightarrow$ 52.073} &\colorbox{green!10}{15.942 $\rightarrow$ 21.739} &\colorbox{green!10}{6.849 $\rightarrow$ 24.658} &\colorbox{green!10}{42.857 $\rightarrow$ 42.857} \\ECC &\colorbox{green!10}{14.516 $\rightarrow$ 36.022} &\colorbox{green!10}{9.816 $\rightarrow$ 26.994} &\colorbox{green!10}{5.882 $\rightarrow$ 21.324} &\colorbox{green!10}{19.540 $\rightarrow$ 49.425} &\colorbox{green!10}{14.428 $\rightarrow$ 41.294} &\colorbox{green!10}{21.503 $\rightarrow$ 65.026} &\colorbox{green!10}{10.145 $\rightarrow$ 31.884} &\colorbox{green!10}{6.849 $\rightarrow$ 21.918} &\colorbox{green!10}{28.571 $\rightarrow$ 57.143} \\Deg &\colorbox{green!10}{20.968 $\rightarrow$ 26.882} &\colorbox{green!10}{17.178 $\rightarrow$ 18.405} &\colorbox{green!10}{5.147 $\rightarrow$ 9.559} &\colorbox{green!10}{29.885 $\rightarrow$ 37.931} &\colorbox{green!10}{24.378 $\rightarrow$ 30.846} &\colorbox{green!10}{36.788 $\rightarrow$ 50.000} &\colorbox{green!10}{13.043 $\rightarrow$ 15.942} &\colorbox{green!10}{12.329 $\rightarrow$ 13.699} &\colorbox{green!10}{57.143 $\rightarrow$ 57.143} \\\hline\multicolumn{6}{l}{\textbf{Axiom 3:} Positively Changed $\downarrow$} \\ \hline PE &\colorbox{green!10}{82.215 $\rightarrow$ 91.946} &\colorbox{green!10}{77.346 $\rightarrow$ 83.982} &\colorbox{green!10}{82.557 $\rightarrow$ 84.589} &\colorbox{green!10}{73.402 $\rightarrow$ 76.726} &\colorbox{green!10}{70.256 $\rightarrow$ 74.103} &\colorbox{magenta!10}{74.870 $\rightarrow$ 74.697} &\colorbox{green!10}{82.331 $\rightarrow$ 86.842} &\colorbox{green!10}{87.572 $\rightarrow$ 89.484} &\colorbox{green!10}{84.314 $\rightarrow$ 84.691} \\SE &\colorbox{green!10}{93.289 $\rightarrow$ 93.960} &\colorbox{magenta!10}{91.533 $\rightarrow$ 90.847} &\colorbox{magenta!10}{93.057 $\rightarrow$ 89.331} &\colorbox{green!10}{86.445 $\rightarrow$ 88.491} &\colorbox{magenta!10}{84.615 $\rightarrow$ 82.821} &\colorbox{magenta!10}{88.042 $\rightarrow$ 84.575} &\colorbox{green!10}{93.233 $\rightarrow$ 94.361} &\colorbox{green!10}{94.073 $\rightarrow$ 94.073} &\colorbox{magenta!10}{92.534 $\rightarrow$ 89.216} \\PE+M &\colorbox{green!10}{81.544 $\rightarrow$ 91.275} &\colorbox{green!10}{77.574 $\rightarrow$ 84.439} &\colorbox{green!10}{80.271 $\rightarrow$ 83.065} &\colorbox{green!10}{76.982 $\rightarrow$ 79.028} &\colorbox{green!10}{73.590 $\rightarrow$ 75.385} &\colorbox{magenta!10}{80.069 $\rightarrow$ 79.029} &\colorbox{green!10}{88.346 $\rightarrow$ 89.850} &\colorbox{green!10}{90.822 $\rightarrow$ 91.396} &\colorbox{green!10}{84.389 $\rightarrow$ 85.143} \\SE+M &\colorbox{green!10}{90.604 $\rightarrow$ 93.289} &\colorbox{green!10}{88.787 $\rightarrow$ 90.847} &\colorbox{magenta!10}{88.654 $\rightarrow$ 87.214} &\colorbox{green!10}{86.957 $\rightarrow$ 89.258} &\colorbox{magenta!10}{84.359 $\rightarrow$ 83.846} &\colorbox{magenta!10}{88.562 $\rightarrow$ 85.442} &\colorbox{green!10}{93.609 $\rightarrow$ 95.113} &\colorbox{magenta!10}{94.455 $\rightarrow$ 94.264} &\colorbox{magenta!10}{93.439 $\rightarrow$ 90.121} \\EigV &\colorbox{green!10}{90.604 $\rightarrow$ 94.295} &\colorbox{green!10}{88.558 $\rightarrow$ 93.822} &\colorbox{green!10}{89.077 $\rightarrow$ 91.871} &\colorbox{green!10}{86.189 $\rightarrow$ 90.026} &\colorbox{green!10}{86.154 $\rightarrow$ 90.000} &\colorbox{green!10}{83.709 $\rightarrow$ 89.081} &\colorbox{green!10}{91.353 $\rightarrow$ 96.241} &\colorbox{green!10}{92.925 $\rightarrow$ 96.750} &\colorbox{green!10}{86.652 $\rightarrow$ 94.646} \\ECC &\colorbox{green!10}{82.886 $\rightarrow$ 89.933} &\colorbox{green!10}{83.066 $\rightarrow$ 89.931} &\colorbox{green!10}{82.557 $\rightarrow$ 88.400} &\colorbox{green!10}{79.028 $\rightarrow$ 87.724} &\colorbox{green!10}{73.590 $\rightarrow$ 82.308} &\colorbox{green!10}{75.390 $\rightarrow$ 84.749} &\colorbox{green!10}{86.466 $\rightarrow$ 93.985} &\colorbox{green!10}{87.380 $\rightarrow$ 94.837} &\colorbox{green!10}{82.730 $\rightarrow$ 92.911} \\Deg &\colorbox{magenta!10}{90.604 $\rightarrow$ 89.933} &\colorbox{magenta!10}{87.414 $\rightarrow$ 86.499} &\colorbox{green!10}{89.331 $\rightarrow$ 89.331} &\colorbox{green!10}{85.934 $\rightarrow$ 86.701} &\colorbox{magenta!10}{86.410 $\rightarrow$ 85.128} &\colorbox{magenta!10}{85.442 $\rightarrow$ 83.882} &\colorbox{green!10}{91.353 $\rightarrow$ 91.729} &\colorbox{magenta!10}{92.543 $\rightarrow$ 92.352} &\colorbox{magenta!10}{86.576 $\rightarrow$ 86.275} \\\hline\multicolumn{6}{l}{\textbf{Axiom 4:} Negatively Changed $\uparrow$} \\ \hline PE &\colorbox{green!10}{51.136 $\rightarrow$ 55.682} &\colorbox{green!10}{51.163 $\rightarrow$ 51.550} &\colorbox{green!10}{49.231 $\rightarrow$ 58.462} &\colorbox{green!10}{66.944 $\rightarrow$ 69.722} &\colorbox{green!10}{66.879 $\rightarrow$ 69.745} &\colorbox{magenta!10}{66.372 $\rightarrow$ 63.717} &\colorbox{magenta!10}{42.045 $\rightarrow$ 40.909} &\colorbox{green!10}{38.168 $\rightarrow$ 39.695} &\colorbox{green!10}{27.586 $\rightarrow$ 32.184} \\SE &\colorbox{green!10}{36.080 $\rightarrow$ 48.011} &\colorbox{green!10}{36.047 $\rightarrow$ 46.512} &\colorbox{green!10}{44.615 $\rightarrow$ 53.846} &\colorbox{green!10}{55.556 $\rightarrow$ 65.833} &\colorbox{green!10}{54.777 $\rightarrow$ 67.197} &\colorbox{green!10}{52.212 $\rightarrow$ 63.717} &\colorbox{green!10}{31.818 $\rightarrow$ 46.023} &\colorbox{green!10}{29.008 $\rightarrow$ 41.221} &\colorbox{green!10}{25.287 $\rightarrow$ 36.782} \\PE+M &\colorbox{green!10}{47.727 $\rightarrow$ 51.420} &\colorbox{green!10}{50.388 $\rightarrow$ 50.775} &\colorbox{green!10}{50.769 $\rightarrow$ 61.538} &\colorbox{green!10}{63.333 $\rightarrow$ 69.167} &\colorbox{green!10}{66.242 $\rightarrow$ 67.834} &\colorbox{green!10}{64.602 $\rightarrow$ 65.487} &\colorbox{magenta!10}{38.636 $\rightarrow$ 38.068} &\colorbox{green!10}{32.061 $\rightarrow$ 36.641} &\colorbox{green!10}{26.437 $\rightarrow$ 31.034} \\SE+M &\colorbox{green!10}{38.636 $\rightarrow$ 50.568} &\colorbox{green!10}{40.698 $\rightarrow$ 48.450} &\colorbox{green!10}{41.538 $\rightarrow$ 56.923} &\colorbox{green!10}{55.278 $\rightarrow$ 62.778} &\colorbox{green!10}{53.503 $\rightarrow$ 66.879} &\colorbox{green!10}{53.097 $\rightarrow$ 64.602} &\colorbox{green!10}{31.250 $\rightarrow$ 43.182} &\colorbox{green!10}{28.244 $\rightarrow$ 39.695} &\colorbox{green!10}{24.138 $\rightarrow$ 34.483} \\EigV &\colorbox{green!10}{24.432 $\rightarrow$ 35.227} &\colorbox{green!10}{24.419 $\rightarrow$ 34.496} &\colorbox{green!10}{16.923 $\rightarrow$ 32.308} &\colorbox{green!10}{38.333 $\rightarrow$ 55.278} &\colorbox{green!10}{39.172 $\rightarrow$ 55.414} &\colorbox{green!10}{38.938 $\rightarrow$ 53.982} &\colorbox{green!10}{21.591 $\rightarrow$ 34.091} &\colorbox{green!10}{20.611 $\rightarrow$ 32.824} &\colorbox{green!10}{8.046 $\rightarrow$ 17.241} \\ECC &\colorbox{green!10}{19.602 $\rightarrow$ 42.330} &\colorbox{green!10}{18.992 $\rightarrow$ 39.535} &\colorbox{green!10}{16.923 $\rightarrow$ 33.846} &\colorbox{green!10}{30.556 $\rightarrow$ 61.389} &\colorbox{green!10}{30.892 $\rightarrow$ 58.917} &\colorbox{green!10}{26.549 $\rightarrow$ 61.062} &\colorbox{green!10}{18.182 $\rightarrow$ 41.477} &\colorbox{green!10}{18.321 $\rightarrow$ 35.878} &\colorbox{green!10}{8.046 $\rightarrow$ 21.839} \\Deg &\colorbox{green!10}{25.284 $\rightarrow$ 29.830} &\colorbox{green!10}{24.806 $\rightarrow$ 28.295} &\colorbox{green!10}{20.000 $\rightarrow$ 26.154} &\colorbox{green!10}{42.500 $\rightarrow$ 49.167} &\colorbox{green!10}{42.357 $\rightarrow$ 49.363} &\colorbox{green!10}{42.478 $\rightarrow$ 50.442} &\colorbox{green!10}{22.727 $\rightarrow$ 26.136} &\colorbox{green!10}{19.084 $\rightarrow$ 22.137} &\colorbox{green!10}{11.494 $\rightarrow$ 19.540} \\\hline

\end{tabular}
\caption{Changes in the percentage of samples that satisfy the axioms before and after calibration for Llama2-chat. The relation function \(\mathcal{R}\) is implemented using MiniCheck.}
\label{tab:axioms_rel_minicheck}
\end{table*}

\section{Related Work}~\label{sec:related_work}
\shrink

\noindent
\textbf{Uncertainty Estimation (UE)} seeks to quantify the confidence of LLMs in their predictions~\cite{Hou024Decomposing, Zhao24Knowing}. UE methods are commonly divided into two groups: black-box and white-box approaches. Black-box methods rely solely on the LLM’s outputs without accessing internal layers or generation logits. In addition to the semantic similarity-based methods discussed in Section~\ref{sec:background}, other black-box techniques exist. For example, verbalization methods prompt the model to explicitly report its confidence (e.g., “How confident are you that the answer is correct?”). \citet{Xiong24Can} highlight that two key factors influence the quality of verbalized confidence: (i) the prompting strategy, which includes techniques such as vanilla, Chain-of-Thought (CoT), self-probing, multi-step, and Top-K prompting, and (ii) the sampling strategy, employing methods like self-random sampling, prompting-based elicitation, and misleading prompts to generate multiple responses. Additionally, Epi-M~\cite{Zhou24Relying} incorporates epistemic markers into the input prompt to facilitate well-calibrated confidence scores.

White-box approaches, by contrast, leverage access to next-token prediction probabilities for uncertainty calculation. Beyond the methods covered in Section~\ref{sec:background}, several techniques have been proposed. For instance, P(True)~\cite{Kadavath22PE} measures the probability that a model assigns to the correctness of a given response by appending a sentence such as \verb|Is the possible answer: (A) True (B) False.| \verb|The possible answer is:| so that the probability of generating “True” or “False” serves as the measure. Similarly, P(IK)~\cite{Kadavath22PE} estimates the likelihood that the model “knows” the correct answer, that is, the probability of generating the correct response when sampling at unit temperature. Furthermore, LARS~\cite{Yaldiz24LARS} introduces a learning-based approach by training a scoring model on token probabilities to enhance uncertainty prediction.

\noindent
\textbf{Axiomatic Evaluation.}
Axiomatic thinking refers to a problem‐solving approach guided by a set of axioms closely aligned with conventional scientific methodologies~\cite{Amigo20Axiomaticthinking}. More generally, this approach seeks solutions that satisfy all predefined axioms, that is, the desirable properties a solution should possess.

Axiomatic thinking has been successfully applied to the study of Information Retrieval (IR), thereby contributing both to the theoretical understanding and the practical enhancement of existing retrieval models. The objective of Axiomatic IR is to establish formal constraints, or axioms, that delineate the essential properties an effective ranking model must satisfy~\cite{Volske21Towards}. In this context, \citet{Fang04formal} formally defined six fundamental constraints derived from empirical observations of common characteristics in traditional retrieval functions. These constraints correspond to intuitive retrieval heuristics, such as term frequency weighting, term discrimination weighting, and document length normalization.
Building on this foundation, \citet{exploration05Fang} proposed an axiomatic framework for the development of retrieval models. Their framework comprises an inductive scheme for function definitions, which provides a common basis for the analytical comparison of different retrieval functions, as well as a set of formalized retrieval constraints adapted from \cite{Fang04formal}. These axioms have been further examined in subsequent studies. For example, \citet{Chen24Axiomatic} employed causal interventions to identify specific attention heads that encode a robust term frequency signal, thereby aligning with one of the original axioms. 

Beyond IR, axiomatic approaches have been extended to other domains. For instance, \citet{Rosset23AxiomaticPreference} defined axioms representing the qualities that humans value in long-form answers, including usefulness, relevance, groundedness, truthfulness, and thoroughness. They generated training data corresponding to these principles and subsequently used it to train a preference model.

\section{Experimental Setup}\label{sec:appendix_es}

\noindent
\textbf{Datasets.}
We conduct our experiments on three open-book Question Answering (QA) datasets: Natural Questions (NQ)~\cite{Kwiatkowski19NQ}, TriviaQA~\cite{Joshi17TriviaQA}, and \textsc{PopQA}~\cite{Mallen23popqa}.
The NQ dataset comprises a large-scale collection of real-world queries derived from Google search data. Each entry includes a user query and the corresponding Wikipedia page that contains the answer. The NQ-open dataset~\cite{Lee19nqopen}, a subset of NQ, differs by removing the restriction of linking answers to specific Wikipedia passages, thereby emulating a more general real-world scenario. We obtain the gold documents for each query from the corpus and dataset annotated by \cite{Cuconasu24Power}~\footnote{\href{https://huggingface.co/datasets/florin-hf/nq_open_gold}{Dataset: florin-hf/nq\_open\_gold}}, in which the gold documents are integrated with the original corpus. For evaluation, we use the test set containing 2,889 queries.
TriviaQA consists of trivia questions sourced from the web~\cite{jeong2024adaptive}. To ensure a dataset size comparable to NQ-open, we randomly sample 3,000 queries from its development set.
\textsc{PopQA} is an open-domain QA dataset designed to evaluate factual knowledge, particularly regarding long-tail entities. Constructed from 16 diverse relationship types in Wikidata, \textsc{PopQA} is originally a closed-book dataset comprising 14,000 QA pairs without gold document annotations. Consequently, following \cite{Soudani24FTvsRAG}, we consider the summary section of the corresponding Wikipedia page as the gold document. Since \textsc{PopQA} is entirely based on Wikipedia, we employ the same corpus for retrieval. To maintain consistency with the other datasets, we randomly select 3,000 samples from the test set.
%

\noindent
\textbf{Language Models.}
In accordance with established baselines, we select two generative LLMs: Llama2-chat-7B and Mistral-7B.
For inputs that are not augmented with retrieved documents, we employ the following template:  
\noindent
\verb|"Answer the question. Question: <question>| \verb|Answer:"|
For inputs augmented with retrieved documents, we utilize this template:  
\verb|"You are given a question, and| \verb|you MUST respond with an answer (max 10| \verb|tokens) using either the provided document| \verb|or your memorized knowledge. Document:| \verb|<context> Question:<question> Answer:"|.
Although more sophisticated prompts were examined in preliminary experiments, the marginal improvement they offered relative to the simple template did not justify their use, particularly given the increased risk of model overfitting. 
Furthermore, following MARS~\cite{Bakman24mars}, we utilize the Huggingface library's "generate" function for model output generation. We designate the token "." as the "eos\_token\_id" to prevent the model from generating overly lengthy paragraphs in response to closed-book questions. We also set "num\_beams" to $1$, corresponding to greedy decoding.

\noindent
\textbf{Retrieval Models.}
We employ a suite of retrieval models to acquire relevant contexts for the RAG approach. The models utilized include BM25~\citep{BM2509Robertson}, Contriever~\citep{Unsupervised22Izacard}, and a two-stage re-ranking system. In the two-stage configuration, BM25 is applied for initial retrieval, followed by re-ranking using a pre-trained cross-encoder model, specifically, \verb|ms-marco-MiniLM-L-6-v2| from the \verb|sentence-transformers| library.
Additionally, we report results for two variations: $\text{Doc}^{+}$, in which the gold context is incorporated into the input prompt, and $\text{Doc}^{-}$, in which an irrelevant context is substituted. Although several methods exist to obtain irrelevant contexts~\citep{Zhao24Enhancing}, in our experiments, these are generated by randomly sampling a context from the corpus.

\noindent
\textbf{NLI Models.}
A NLI classifier takes a sequence pair \((x_1, x_2)\) and outputs a label \(y \in \{\text{\emph{Contradiction}}, \text{\emph{Neutral}}, \text{\emph{Entailment}}\}\) with corresponding probabilities. 
The two sequences are concatenated with a separator token \(\texttt{[SEP]}\) before input. To study ordering effects, we consider both \(x_1 \texttt{[SEP]} x_2\) and \(x_2 \texttt{[SEP]} x_1\). In the reference-free setting (Section~\ref{sec:axioms:inst}), if either order yields a contradiction, the input is labeled as such; otherwise, it is labeled as entailment. In Section~\ref{sec:calibration_function:inst}, we use the maximum entailment probability from the two orders.

\noindent
\textbf{Computational Cost.}
We conducted all experiments using one Nvidia A100 GPUs with 40 GB of memory, accumulating approximately 250 GPU hours. Due to the substantial computational demands, all results presented are based on a single run.



\end{document}